\newcommand{\rem}[1]{}
\newcommand{\fe}{\mathfrak{e}}
\newcommand{\fq}{\mathfrak{q}}
\newcommand{\fs}{\mathfrak{s}}
\newcommand{\fz}{\mathfrak{z}}
\newcommand{\fv}{\mathfrak{v}}
\newcommand{\fm}{\mathfrak{m}}
\newtheorem{theorem}{Theorem}
\newtheorem{definition}[theorem]{Definition}
\newtheorem{notation}[theorem]{Notation}
\newtheorem{proposition}[theorem]{Proposition}
\newtheorem{remark}[theorem]{Remark}
\numberwithin{theorem}{section}
\newenvironment{proof}[1][Proof]{\textbf{#1.} }{\ \rule{0.5em}{0.5em}}
\def\be{\begin{equation}}
\def\ee{\end{equation}}
\def\bea{\begin{eqnarray}}
\def\eea{\end{eqnarray}}
\def\ba{\begin{array}}
\def\ea{\end{array}}
\def\bOm{\boldsymbol{\Omega}}
\def\boldeta{\boldsymbol{\eta}}
\def\brho{\boldsymbol{\rho}}
\def\d{\delta}
\newcommand{\de}{\delta}
\newcommand{\bm}{\boldsymbol{m}}
\newcommand{\bq}{\boldsymbol{q}}
\newcommand{\bu}{\boldsymbol{u}}
\newcommand{\bPsi}{\boldsymbol{\Psi}}
\newcommand{\bv}{\boldsymbol{v}}
\newcommand{\bK}{\boldsymbol{K}}
\newcommand{\br}{\boldsymbol{r}}
\newcommand{\bGam}{\boldsymbol{\Gamma}}
\newcommand{\bom}{\boldsymbol{\omega}}
\newcommand{\bgam}{\boldsymbol{\gamma}}
\newcommand{\bsigma}{\boldsymbol{\Sigma}}
\newcommand{\bpsi}{\boldsymbol{\Psi}}
\newcommand{\bbeta}{\boldsymbol{\beta}}
\newcommand{\bZ}{\boldsymbol{Z}}
\newcommand{\bmu}{\boldsymbol{\mu}}
\newcommand{\bXi}{\boldsymbol{\Xi}}
\newcommand{\bT}{\boldsymbol{T}}
\newcommand{\bkappa}{\boldsymbol{\kappa}}
\newcommand{\bpi}{\boldsymbol{\pi}}
\newcommand{\dd}[2]{\frac{d #1}{d #2}}
\newcommand{\dede}[2]{\frac{\delta #1}{\delta #2}}
\newcommand{\prt}{\partial}
\newcommand{\DD}[2]{\frac{D #1}{D #2}}
\newcommand{\om}{\omega}
\newcommand{\al}{\alpha}
\newcommand{\Om}{\Omega}
\newcommand{\lsb}{\left[}
\newcommand{\rsb}{\right]}
\newcommand{\lp}{\left(}
\newcommand{\rp}{\right)}
\newcommand{\scp}[2]{{\left\langle {#1}\, , \, {#2}\right\rangle}}
\newcommand{\CO}{{\mathcal O}}
\newcommand{\di}{{\diamond}}
\newcommand{\mR}{{\mathbb{R}}}
\newcommand{\Ad}{\mbox{Ad}}
\newcommand{\ad}{\mbox{ad}}
\newcommand{\mse}{\mathfrak{se}}
\newcommand{\mso}{\mathfrak{so}}
\newcommand{\id}{{\mathrm{id}}\,}
\newcommand{\con}{\overline}
\newcommand{\hor}{\mbox{Hor}}
\newcommand{\ver}{\mbox{Ver}}
\newtheorem{thm}{Theorem}[section]
\newtheorem{defi}[thm]{Definition}
\newtheorem{lem}[thm]{Lemma}
\newtheorem{rema}[thm]{Remark}
\newcommand{\bfi}{\bfseries\itshape}
\begin{document}

\title{Dynamics of charged molecular strands}
\author{David C. P. Ellis${}^1$, Fran\c{c}ois Gay-Balmaz${}^2$, Darryl D. Holm${}^{1,3}$,\\
Vakhtang Putkaradze${}^{4,5}$ and Tudor S. Ratiu${}^2$}
\addtocounter{footnote}{1}

\footnotetext{
Mathematics Department,
Imperial College London, SW7 2AZ, UK
\texttt{de102@imperial.ac.uk, d.holm@imperial.ac.uk}
\addtocounter{footnote}{1}}

\footnotetext{Section de
Math\'ematiques and Bernoulli Center, \'Ecole Polytechnique F\'ed\'erale de
Lausanne.
CH--1015 Lausanne. Switzerland. 
\texttt{Francois.Gay-Balmaz@epfl.ch, Tudor.Ratiu@epfl.ch}
\addtocounter{footnote}{1}}

\footnotetext{
Institute for Mathematical Sciences,
Imperial College London, SW7 2PG, UK
\addtocounter{footnote}{1}}

\footnotetext{
Department of Mathematics
Colorado State University,
Fort Collins, CO 80523-1874
\texttt{putkarad@math.colostate.edu}
\addtocounter{footnote}{1}}

\footnotetext{
Department of Mechanical Engineering,
University of New Mexico,  Albuquerque, NM 87131-1141
\addtocounter{footnote}{1}}

\maketitle

\makeatother

\maketitle

\begin{abstract}\noindent
Euler-Poincar\'e equations are derived for the dynamical folding of charged molecular strands (such as DNA) modeled as flexible continuous filamentary distributions of interacting rigid charge conformations. The new feature is that the equations of motion for the dynamics of such molecular strands are nonlocal when the screened Coulomb interactions, or Lennard-Jones potentials between pairs of charges are included. These nonlocal dynamical equations are derived in the \emph{convective} representation of continuum motion by using modified Euler-Poincar\'e and Hamilton-Pontryagin variational formulations that illuminate the various approaches within the framework of symmetry reduction of Hamilton's principle for exact geometric rods.
In the absence of nonlocal interactions, the equations recover the classical Kirchhoff theory of elastic rods in the \emph{spatial} representation.
The motion equations in the convective representation are shown to be affine Euler-Poincar\'e equations relative to a certain cocycle. This property relates the geometry of the molecular strands to that of complex fluids. An elegant change of variables allows a direct passage from the infinite dimensional point of view to the covariant formulation in terms of Lagrange-Poincar\'e equations.
In another revealing perspective, the convective representation of the nonlocal equations of molecular strand motion is transformed into quaternionic form.
\end{abstract}

\newpage

\tableofcontents

\section{Introduction}

\subsection{Physical Setup}\label{subsec-phys}

Many long molecules may be understood as strands of individual charged units. Generally, the dynamics of such strands of charged units depend both on th.. Generally, the dynamics of such strands of charged units depend both on the local elastic deformations of the strand and the nonlocal (screened electrostatic) interactions of charged units across the folds in the molecule. These electrostatic interactions depend on the spatial distances and relative orientations between the individual charged units in different locations along the strand. One important approach to such a complex problem is a full dynamical simulation. However, in spite of the importance of this approach for determining certain molecular properties, it provides little insight for analytical understanding of the dynamics. \smallskip

Continuum approaches to the dynamics of molecular strands offer an alternative theoretical understanding which is attractive because of the insight achieved in finding analytical solutions.
Many studies have addressed the elastic dynamics of the charged strands using Kirchhoff's approach \cite{Ki1859}.  For a historical review and citations of this approach see \cite{Di1992}. Recent advances, especially in the context of helical structures, appear in  \cite{GoTa1996, GoPoWe1998, BaMaSc1999, GoGoHuWo2000, HaGo2006, NeGoHa2008}. While many important results have been obtained by this approach, the generalization of the classical Kirchhoff theory to account for the torque caused by the long-range electrostatic interaction of molecules in different spatial locations along a flexible strand has not been achieved, although the force due to electrostatic interaction has been considered before. See, for example, the article \cite{DiLiMa1996} which reviews progress in the treatment of charged units distributed along a strand. 
In general, the lack of a consistent continuous model incorporating both torques and forces from electrostatic interactions has hampered analytical considerations, see for example \cite{BaMaSc1999}.  
\smallskip

The present framework does allow treatment of both torques and forces from electrostatic interactions.  We should note that even in the absence of a continuous model for nonlocal interactions, it is possible to obtain \emph{static} solutions using energy minimization techniques. For example, interesting helical static solutions of pressed elastic tubes using interactions that prevent self-intersection of the tubes were obtained in \cite{Ba-etal-2007}.
The difficulty in computing the dynamical effects of torque due to long-range interactions among the molecular subunits arises because the classical Kirchhoff theory is formulated in a frame moving with the strand, but deals with a mixture of quantities, some measured in the fixed spatial frame and some in the body frame. The torque due to  long-range interactions then presents a particular difficulty for the mixed representations in the Kirchhoff theory, because it is applied at base points of a curve that is moving in space.  That is, the spatial Euclidean distances and relative orientations of the molecules must be reconstructed at each time step during the sinuous motion and twisting of the strand before any self-consistent computation can be made of the forces and torques due to long-range electrostatic interactions.

\smallskip
In fact, even when electrostatic forces are not involved, the motion of realistic curves in space is inherently nonlocal because of the requirement that the curve not cross itself during the dynamics. In the  purely elastic Kirchhoff approach, such nonlocal considerations are neglected. Physically, however, self-intersections are prevented by the existence of a short-range potential (e.g., Lennard-Jones potential) that produces highly repulsive forces when two points along the curve approach each other. \smallskip

This paper casts the problem of strand dynamics for an arbitrary intermolecular potential into the {\bfi convective representation} introduced in \cite{HoMaRa1986} and applied in exact geometric rod theory in \cite{SiMaKr1988}. Its methods are also applicable to the consideration of Lennard-Jones potentials and the constrained motion of non-self-interacting curves.

\smallskip

If the curve were constrained to be rigidly fixed in space, and the attached molecules on this  curve were allowed to simply rotate freely at each position, the theory of motion based on nonlocal interaction between different molecules would be more straightforward. Of particular interest here is the work  \cite{Mezic2006} where a single charge is attached at each point along the fixed filament by a rigid rod of constant length that was allowed to rotate in a transverse plane. These charges were allowed to interact locally with other nearby charges that were similarly attached to planar pendula of constant length mounted transversely to the fixed filament.
\smallskip

This constrained motion can be generalized to allow flexible motion of the strand (time-dependent bend, twist and writhe) while also including  the degrees of freedom of molecular orientation excited during the process of, say, DNA folding. According to this class of models, a DNA molecule is represented as a flexible filament or strand, along which are attached various different types of rigid conformations of sub-molecules that may \emph{swivel} relative to each other in three dimensions under their mutual interactions. The flexibility of the filament arises physically because the electrostatic interaction between any pair of these rigid conformations either along the filament or across from one loop to another of its folds is much weaker than the internal interactions that maintain the shape of an individual charged conformation.
\smallskip

This paper considers rigid charge conformations (RCCs) mounted along a flexible filament. The RCCs are more complex than the planar pendula considered in \cite{Mezic2006}. They are allowed to interact with each other via a nonlocal (\emph{e.g.} screened electrostatic, or Lennard-Jones) potential.
Our investigation is based on the {\bfi geometrically exact rod theory} of Simo \emph{et al.} \cite{SiMaKr1988}, which is expressed in the {\bfi convective representation} of continuum mechanics.
The rotations of rigid charge conformations along the flexible filament are illustrated in Figure \ref{model-fig}.

%%%%%%%%%%%%%%%%%%%%%%%%%%%%%%%
\begin{figure} [h]

\centering

\includegraphics[width=12cm]{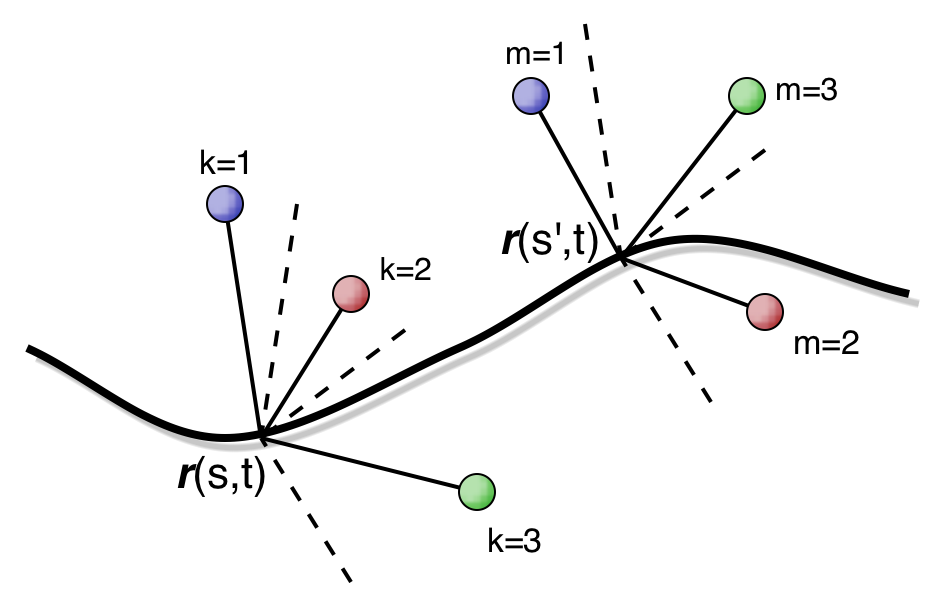}
 \caption{Rigid conformations of charges are distributed along a curve. Note that this is a spatial representation of the orientation.}
 \label{model-fig}
 \end{figure}
%%%%%%%%%%%%%%%%%%%%%%%%%%%%%%%

These rigid conformations of multiple charges are allowed to interact via an effective many-body potential representing their screened electrostatic interactions. The nonlocal interactions among these RCCs depend on their spatial separations and relative orientations, which are both allowed to evolve dynamically. Thus, the inertial motion of a pair of  RCCs mounted at any two spatial points $\br(s,t)$ and $\br(s',t)$ along the filament is governed by an effective potential interaction energy that depends on their separation and relative orientation. The filament is taken to be one-dimensional, although the orientations of the rigid charged conformations mounted along it may be three-dimensional. A practical example to which our filament approach applies is the vinylidene fluoride (VDF) oligomer \cite{No2003}, which may be approximated by a strand carrying a dipole moment whose orientation is perpendicular to the axis of the strand. The VDF oligomer strand is straight for small lengths, but it forms complex shapes due to electrostatic interactions for longer lengths.

\smallskip
The theory presented here generalizes directly to the case when the dimension of the underlying manifold (filament) is greater than unity, and so is applicable to such problems as the motion of charged sheets, or charged elastically deformable media. While we present part of the relevant geometry here, we leave its applications in higher dimensions for a later publication.

\smallskip

\paragraph{Plan.}The remainder of the paper is organized as follows.
Subsection~\ref{subsec.mathsetup} outlines the content of the paper in mathematical terms by giving an overview of the various spatial representations of filament dynamics discussed here from the canonical and covariant point of views.
Subsection~\ref{sec:PreviousStudies} connects our results to the earlier literature. Paragraph~\ref{sec:Elastic-Rod} relates the theory presented here to the classical elastic rod approach pioneered by Kirchhoff. The need to keep track of spatial separations in  long-range electrostatic interactions requires that we write the dynamics in either the spatial or convective representations, as opposed to the Kirchhoff mixed representation. 
Paragraph \ref{fixedfilament} considers the simplified case when the orientations of the RCCs along the curve may depend on time, but the position of any point $s$ along the curve is fixed, thereby connecting to earlier work in \cite{Mezic2006}. Section~\ref{moving-filament} incorporates the flexible motion of the filament into the dynamics by using the geometrically exact rod theory of Simo \emph{et al.} \cite{SiMaKr1988}.  The equations of motion are derived in convective form by using a {\bfi modified Hamilton-Pontryagin} and {\bfi modified Euler-Poincar\'e approach}  to allow for nonlocal interactions, in Subsections~\ref{sec:Hamilton-Pontryagin} and \ref{sec:Euler-Poincare}, respectively.
These equations are formulated as conservation laws along the filament in Section~\ref{sec:Conservation} and their {\bfi affine Lie-Poisson} Hamiltonian structure is elucidated in Section~\ref{sec:Hamiltonian}.
Section~\ref{sec:affineEP} explains the background for the {\bfi affine Euler-Poincar\'e} and {\bfi affine Hamilton-Pontryagin} approaches and applies this framework to the dynamics of charged strands. 
Section~\ref{sec:Coordinatechange} introduces a remarkable change of coordinates that decouples the equations into their {\bfi horizontal} and {\bfi vertical} parts.
Section~\ref{sec:Lagrange-Poincare} explains the geometric structure of this coordinate change and leads to the {\bfi covariant Lagrange-Poincar\'e} formulation.
Subsection~\ref{sec.strand.gen} and Section~\ref{sec:ApplicationLP} discuss generalizations of the molecular strand to higher dimensions. In Section~\ref{sec:ApplicationLP}, the equations of motion are obtained by an alternative covariant Lagrange-Poincar\'e approach.
Section~\ref{quatform} provides a useful representation of the convective  frame dynamics of the flexible strand using quaternions.
Section~\ref{sec:Conclusions} briefly summarizes our conclusions and sets out possible directions for further studies.

\subsection{Mathematical Setup}\label{subsec.mathsetup}

\subsubsection{Description of the variables involved}
\label{definition_variables}

In the \textit{Lagrangian representation}, the motion is described by the variables $\Lambda(s,t)\in SO(3)$ and $\br(s,t)\in\mathbb{R}^3$. The vector $\br(s,t)$ is the \textit{spatial position of the filament} and the variable $\Lambda(s,t)$ denotes the \textit{rotation of the RCC at the point $s$ along the filament at time $t$}. By taking the time and space derivatives, we find the \textit{material velocity} $(\dot\Lambda(s,t),\dot{\br}(s,t))$ and the angular and linear \textit{deformation gradients} $(\Lambda'(s,t),\br'(s,t))$, respectively. Given $\Lambda$ and $\br$, we define notation for the following \textit{reduced variables}
\begin{eqnarray}
\Om&=\, \Lambda^{-1} \Lambda'&  \in \mso(3)
\,,\nonumber
\\
\om&= \, \Lambda^{-1} \dot{\Lambda}& \in \mso(3)
\,,\nonumber
\\
\bGam&= \, \Lambda^{-1} \br'& \in \mathbb{R}^3
\,,\label{bundle.coords}
\\
\bgam&= \, \Lambda^{-1} \dot{\br}& \in \mathbb{R}^3
\,,\nonumber
\\
\brho&= \, \Lambda^{-1} \br&  \in \mathbb{R}^3
\,.\nonumber
\end{eqnarray}

\begin{remark}[Notation]
{\rm   
Quantities defined using derivatives in $s$ are denoted using capital Greek letters, whereas lower-case Greek letters (except for $\brho$) denote quantities whose definitions involve derivatives with respect to time. Bold letters, for example $\bGam$, denote vectors in $\mathbb{R}^3$ whereas $\Omega$ is a $3\times3$ skew-symmetric matrix in the Lie algebra $\mso(3)$.
}
\end{remark}

\begin{definition}
The {\bfi hat map} $\widehat{\phantom{\bOm}}: (\mathbb{R}^3, \times ) \to (\mso(3), [\,,])$ is the Lie algebra isomorphism given by $\widehat{\boldsymbol{u}}\boldsymbol{v}=\boldsymbol{u} \times \boldsymbol{v}$ for all $\boldsymbol{v}\in\mathbb{R}^3$.
\end{definition}

Thus, in an orthonormal basis of $\mathbb{R}^3$ and $\boldsymbol{u} \in \mathbb{R}^3$, the $3\times3$ antisymmetric matrix $u:=\widehat{\boldsymbol{u}} \in \mso(3)$ has entries
\begin{equation}
u_{jk}
=
(\widehat{\boldsymbol{u}})_{jk}
=
-\epsilon_{jkl}\boldsymbol{u}^l
\, .
\label{hatmap-components}
\end{equation}
Here $\epsilon_{jkl}$ with $j,k,l \in \{1,2,3\}$ is the totally antisymmetric tensor density with $\epsilon_{123}=+1$ that defines the cross product of vectors in $\mathbb{R}^3$.
In what follows, we shall employ this notation by writing $\Omega : = \widehat{\bOm}$ and $\omega : = \widehat{\bom}$.
\medskip

The physical interpretation of the variables \eqref{bundle.coords} is as follows:  The variable $\brho(s,t)$ represents the \textit{position of the filament in space as viewed by an observer} who rotates with the RCC at $(s,t)$. The variables \big($\bOm(s,t), \bGam(s,t)$\big) describe the \textit{deformation gradients as viewed by an observer} who rotates with the RCC.  The variables \big($\bom(s,t), \bgam(s,t)$\big) describe the \textit{body angular velocity} and the \textit{linear velocity as viewed by an observer} who rotates with the RCC.

\subsubsection{The canonical point of view}

The canonical viewpoint of continuum dynamics derives the equations of motion by applying a process of reduction by symmetry to a cotangent bundle $T^*Q$ endowed with a canonical symplectic form. This approach has been extensively studied for fluids, see for example \cite{MaRaWe1984} for the Hamiltonian description and \cite{HoMaRa1998} for the Lagrangian side. In hydrodynamics, $Q$ is the product of a Lie group $G$ and a representation space $V$ on which the group acts linearly as $G\,\times\,V\to V$. The dual space, 
$V^*$, is the space of linearly advected quantities such as the mass density or the magnetic field. The associated process of reduction by symmetry under the action of $G$ is called {\bfi Lie-Poisson reduction for semidirect products}. For such systems (in the left version), we have the relations
\begin{equation}\label{semidirect_relation}
\begin{array}{l}
\xi=g^{-1}\dot g\,,\\
a=g^{-1}a_0\,,
\end{array}
\end{equation}
where $g(t)\in G$ is the Lagrangian motion, $\xi(t)$ is the convection velocity, and $a(t)\in V^*$ is the evolution of the advected quantity for a given initial condition $a_0$. Note that $a(t)$ is also a convected quantity.
For the molecular strand we have $g=(\Lambda,\br)$ and $a=(\bOm,\bGam,\brho)$. However, the relations \eqref{bundle.coords} cannot be recovered from \eqref{semidirect_relation} because the variables $(\bOm,\bGam,\brho)$ are not linearly advected. Thus, a generalization of \eqref{semidirect_relation} is needed, in which $G$ acts on $a$ by an {\bfi affine action}. Such a generalization is given by the process of {\bfi affine Euler-Poincar\'e reduction} developed in the context of complex fluids in \cite{Gay-Bara2007}. This theory, which we recall in Section~\ref{sec:affineEP}, produces the relations
\begin{equation}\label{semidirect_relation_affine}
\begin{array}{l}
\xi=g^{-1}\dot g
\,,\\
a=g^{-1}a_0+c(g^{-1})
\,,
\end{array}
\end{equation}
where $c$ is a group one-cocycle.%
\footnote{ That is, $c$ satisfies the property $c(fg) = c(f) + fc(g)$, where $f$ acts on $c(g)$ by a left representation, as discussed in Section~\ref{sec:affineEP}.}
If we take $a_0=0$, then the advected quantity evolves as
\[
a=c(g^{-1})\,.
\]
Remarkably, the evolution of $(\bOm,\bGam,\brho)$ in \eqref{bundle.coords} is precisely of this form for a well chosen cocycle. The variables $(\Lambda(s,t),\br(s,t))$ are interpreted as \textit{time-dependent curves in the infinite dimensional Lie group}
\[
G=\mathcal{F}(I,SE(3))
\]
of all $SE(3)$-valued smooth functions on $I$.
\begin{remark}{\rm The variables
\[
(\bom,\bgam)=(\Lambda,\br)^{-1}(\dot\Lambda,\dot\br)\,,
\]
their associated momenta
\[
(\bmu,\bbeta):=\left(\frac{\delta l}{\delta \bom},\frac{\delta l}{\delta \bgam}\right),
\]
and the affine advected variables $(\bOm,\bGam,\brho)$ are all {\bfi convective quantities}, see \cite{MaRa2002}. In this context, convective quantities are also called {\bfi body quantities}, since they are defined in a frame following the motion of the molecular strand.

In contrast, the variables
\begin{equation} 
\left(\mathbf{\bpi}^{(S)},\mathbf{p}^{(S)}\right)
:=\operatorname{Ad}^*_{(\Lambda,\br)^{-1}}\left(\frac{\delta l}{\delta \bom},\frac{\delta l}{\delta \bgam}\right)
\label{BodytoKirchhoff} 
\end{equation} 
are {\bfi spatial quantities}, i.e., they are defined at fixed points in Euclidean space. 
 }
\end{remark}

\subsubsection{The covariant point of view}

The covariant point of view interprets the Lagrangian variables $(\Lambda(s,t),\br(s,t))$ as a \textit{section $\sigma$ of a trivial fiber bundle
over spacetime}. The bundle is given by
\[
\pi_{XP}:P=X\times SE(3)\rightarrow X,\quad X=[0,L]\times\mathbb{R}\ni (s,t)=x
\]
and the section $\sigma$ is naturally defined by
\[
\sigma(x)=(x,\Lambda(x),\br(x)).
\]
Carrying out a covariant reduction for the first jet extension $j^1\sigma$ recovers the relations \eqref{bundle.coords} in a natural way.

From the covariant point of view, the main results and relationships among the various sections of the paper may be understood by tracing through the following diagram:
\[
\begin{CD}
 \Lambda(s,t) \in SO(3) &&  SO(3)\\
@VVV  @VVV \\
\lp \Lambda, \br\rp(s,t)\in SE(3) @>>> P @>\pi_{XP}>> X  \\
 @V\pi_{SE(3)} VV  @V\pi_{\Sigma P} VV @VV\id V \\
 \brho(s,t) \in \mR^3 @>>> \Sigma @>>\pi_{X\Sigma} > X  \\
\end{CD}
\]
This diagram lays out the sequences of coordinates and manifolds used here to represent the molecular strand dynamics.
The definitions of the various projections are given in Section~\ref{sec:Coordinatechange} and Section~\ref{sec:Lagrange-Poincare}. For the moment we shall only introduce the variables and use the diagram to describe the relationships between the two main convective representations of filament dynamics used here.

Recall from \S\ref{definition_variables} that  $\Lambda(s,t) \in SO(3)$ denotes the rotation of the RCC at a point $s$ along the filament at time $t$.  The vector $\br(s,t)$ is the spatial position of the filament at $(s,t)$ and $\brho(s,t)$ is the position of the filament in space as viewed by an observer who rotates with the RCC at $(s,t)$.  The left hand vertical sequence describes the fibers over spacetime of the right hand vertical sequence.  Therefore we regard the variables introduced above as local descriptions of sections of the spaces in the right hand vertical sequence.  Since the bundles involved are trivial, these local sections are sufficient to describe the global sections.  For example, $\brho(s,t) \in \mR^3$ corresponds to the section
\[
\big(s,t, \brho(s,t) \big) \in \Sigma = X \times \mR^3
\,.
\]
This characterizes the space $\Sigma$. 
These sections naturally lead to the use of linear maps in place of tangent vectors. Thus, in place of tangent bundles, we introduce jet bundles, as described in Section~\ref{sec:Lagrange-Poincare}.  These jet bundles have local representations in terms of fiber linear maps from $TX$ into vector spaces covering the section $\brho(s,t)$ in the vector bundles $TP/SO(3)$, $T\Sigma$, and $\ad P$.  The space of all such linear maps is denoted, for example, $L(TX, T\Sigma)$.  With this in mind we can express the reduced configuration spaces in terms of the following diagram:

{\footnotesize
\[
\!\!\!\!
\begin{CD}
%0 && 0\\
%@VVV @VVV\\
 \lp \brho, \bOm ds + \bom dt\rp_{SO(3)} \in L(TX,\ad SE(3)) && L(TX,\ad P)\\
 @VVV  @VVV \\
\lp \brho, \bGam ds + \bgam dt, \bOm ds + \bom dt\rp \in L(TX, TSE(3)/SO(3)) @>>> J^1P/SO(3) @>T\pi_{XP}/SO(3)>> L(TX, TX)  \\
 @V T\pi_{SE(3)}/SO(3) VV  @V T\pi_{\Sigma P}/SO(3) VV @VV\id V \\
 \lp \brho, \brho_s ds + \brho_t dt\rp \in L(TX,T\mR^3) @>>> J^1\Sigma @>> T\pi_{X\Sigma} > L(TX, TX) \\
%@VVV @VVV\\
%0 && 0
\end{CD}
\]
}

where $T\pi_{XP}/SO(3)$ and $T\pi_{\Sigma P}/SO(3)$ denote the tangent maps $T\pi_{XP}$ and $T\pi_{\Sigma P}$ acting by composition on linear maps with values in the fibers of $TP/SO(3)$.

This diagram arises in Section~\ref{sec:Coordinatechange}, where we find that the equations of motion can be drastically simplified by a change of variables that passes from the upper to the lower horizontal sequence in the diagram.  The geometrical significance of this change of variables is made more precise in Section~\ref{sec:Lagrange-Poincare}.  In particular, the variables $(\brho, \brho_t, \brho_s)$ introduced in Section~\ref{sec:Coordinatechange} are coordinates in the $J^1\Sigma$ component, where $\Sigma:=P/SO(3)$. The variables $(\bOm, \bom)$ describe the $L(TX,\ad P)$ component, where $\ad P$ denotes the adjoint bundle associated with $P$. 
The coordinates on $L(TX,TP/SO(3))$ 
recover the definitions \eqref{bundle.coords} by covariant reduction. 
Within this framework, we develop a new formulation of molecular strand dynamics.  The modified Euler-Poincar\'e argument for a Lagrangian $l(\bom,\bgam,\bOm,\bGam,\brho)$ and related arguments that use the same set of variables $\lp\brho, \bGam, \bgam, \bOm, \bom\rp$ are formulated as sections of $TP/SO(3)= TX \times TSE(3)/SO(3)$, where $X := [0,L] \times \mR\ni (s,t)$ is the spacetime.  

\begin{remark}[Defining the convective representation]
{\rm Formulas \eqref{bundle.coords} for the variables in the upper horizontal sequence in the diagram above define the convective representation of the exact geometric rod theory \cite{SiMaKr1988}.  We shall see in a moment how these variables in the convective representation are related to the Kirchhoff variables for strand dynamics.}
\end{remark}

In Section~\ref{sec:Lagrange-Poincare} we introduce the structure described by the left hand vertical sequence.  This allows us to construct the lower horizontal sequence and the right hand vertical sequence.  When we formulate the problem in the new geometry in Section~\ref{sec:Lagrange-Poincare} using the {\bfi covariant Lagrange-Poincar\'e approach}, we recover the change of variables introduced in Section~\ref{sec:Coordinatechange}. In the new  geometry, the following coordinates are used, cf. the definitions in equation \eqref{bundle.coords}:
\begin{eqnarray}
\Om&=&\Lambda^{-1} \Lambda' \  \in \mso(3)
\,,\nonumber
\\
\om&=&\Lambda^{-1} \dot{\Lambda} \ \in \mso(3)
\,,\nonumber
\\
\brho_s&=&\Lambda^{-1} \br' 
- \Lambda^{-1}\Lambda'\brho \   \in \mathbb{R}^3
\,,\label{bundle.coords2}
\\
\brho_t&=&\Lambda^{-1} \dot{\br}
- \Lambda^{-1}\dot \Lambda \brho \  \in \mathbb{R}^3
\,,\nonumber
\\
\brho&=&\Lambda^{-1} \br \   \in \mathbb{R}^3
\,.\nonumber
\end{eqnarray}
This covariant Lagrange-Poincar\'e approach is generalized in Subsection~\ref{sec.strand.gen} and Section~\ref{sec:ApplicationLP} to consider higher dimensional problems such as the molecular sheet, as well as problems such as the spin chain that have different microstructures.

\subsection{Connection to previous studies}
\label{sec:PreviousStudies}
\subsubsection{Purely elastic motion and Kirchhoff equations for elastic rod}
\label{sec:Elastic-Rod}
The results of this paper may be compared to the classical Kirchhoff theory of the purely elastic rod, particularly in terms of the available conservation laws. This comparison was presented for the purely elastic case, \emph{i.e.}, the Lagrangian $l$ is an explicit (local) function of the variables $l=l(\bom, \bgam, \bOm, \bGam, \brho)$ 
 in \cite{SiMaKr1988}. This work is extended here to the case of nonlocal interaction. 

Of particular interest to us are the balance laws for angular and linear momenta.  For this comparison, we shall use the notation of \cite{DiLiMa1996}. For simplicity, we assume that the position $\br (s)$ along the filament is given by the arc length $s$. This assumption conveniently avoids extra factors of $\big| \bGam(s) \big|$ in the expressions. We shall also mention here that in order to connect to the Kirchhoff theory, we need to make an explicit choice of $\Lambda(s) \in SO(3)$ as a  transformation matrix from the fixed  orthonormal basis $\{\mathbf{E}_1, \mathbf{E}_2, \mathbf{E}_3\}$ of $\mathbb{R}^3$ to the orthonormal  basis of directors $\{ \mathbf{d}_1(s), \mathbf{d}_2(s), \mathbf{d}_3(s) \}$ describing the orientation of the filament (see Figure \ref{model-fig}), that is, 
\begin{equation}\label{def_basis_d}\mathbf{d}_i(s) = \Lambda_i ^k(s)\mathbf{E}_k, \quad i = 1,2,3.\end{equation}
There is some ambiguity in the choice of the basis $\{ \mathbf{d}_1(s), \mathbf{d}_2(s), \mathbf{d}_3(s) \}$ at every given point. The most popular selection of the basis  is governed by the so-called \emph{natural frame}. We shall not go into the details of this right now and refer the reader to \cite{DiLiMa1996} for a more complete discussion. In principle, we need not have taken this particular choice of $\Lambda$, since for  rigid charge conformations (RCC), the \emph{relative} configuration of charges is not changed under the dynamics, and the configuration of an RCC state at each point $s$ is completely described by a pair $(\Lambda, \br) \in SE(3)$.  Taking $\Lambda$ to be a different presentation of RCC would lead to a transformation 
$\Lambda(s,t) \rightarrow A \Lambda(s,t)$ where $A \in SO(3)$ is a fixed matrix. While our description is equivalent in this case, the explicit relation to Kirchhoff formulas is cumbersome.

We shall note that if the charge conformations were allowed to deform, then $\Lambda$ would no longer be an element of $SO(3)$. Instead, the charge conformation would be described by a general matrix $\Lambda$ and a vector $\br \in \mathbb{R}^3$. No explicit relation to Kirchhoff's formulas is possible in this case.

As mentioned in Section~\ref{subsec-phys}, Kirchhoff's approach does not allow for a simple computation of Euclidian distances between the charges unless the spatial length-scale of the rigid charge conformations (RCCs) holding the charges at given point $\boldeta_k(s)$ is negligible. It is interesting that in the more complex case considered here, the equations become formally equivalent to Kirchhoff's equations, provided the effects of non-locality are computed appropriately. In particular, one requires an appropriate mapping from the convective representation to the  Kirchhoff representation, as well as  some identities connecting nonlocal contributions to the \emph{total} derivatives of the Lagrangian. 
This mapping is amplified in more detail in Subsection~\ref{sec:Conservation} below.

The linear momentum density $\mathbf{p}$ is defined as $\mathbf{p}(s)=\rho_d(s) \dot{\br}(s)$, where $\rho_d(s)$ is the local mass density of the rod. In that case, the kinetic energy due to linear motion $K_{lin}$ is given by
\[ 
K_{lin}= \frac{1}{2} \int \rho_d(s) \|\dot{\br}(s) \|^2 \mbox{d} s 
= 
\frac{1}{2} \int \rho_d(s) \|  \Lambda^{-1} \dot{\br}(s)\|^2 \mbox{d} s
=  \frac{1}{2} \int \rho_d(s) \|\bgam(s) \|^2 \mbox{d} s 
\, . 
\] 
Consequently, the variable $\mathbf{p}$ and the linear momentum 
$\delta K_{lin}/\delta \bgam$ are related by 
\begin{equation} 
\mathbf{p}=\rho_d \dot{\br} = \Lambda \rho_d \bgam = \Lambda \frac{\delta K_{lin}}{\delta \bgam}
\, .
\label{pgamconnect}  
\end{equation}

After these preliminaries, we are ready for a detailed comparison with Kirchhoff's theory. A point on a rod in Kirchhoff's theory is parameterized by the distance $\br(s,t)$ measured from a \emph{fixed} point in space.
The $i$-th component of the local angular momentum in the \emph{body} frame $\{ \mathbf{d}_1(s), \mathbf{d}_2(s), \mathbf{d}_3(s) \}$ is defined by  $\bpi^i(s):=\mathbb{I}^i_{j}(s) \bom^j(s)$, where 
$\bom^j(s)$ is the $j$-th component of \emph{body} angular velocity given by $\widehat{\bom}(s) := \omega(s)=\Lambda(s)^{-1} \dot{\Lambda}(s)$, and 
$\mathbb{I}^i_{j}(s)$ is the local value of the inertia tensor. Note that the inertia tensor $\mathbb{I}(s)$ expressed in body coordinates is time-independent. Thus the local kinetic energy due to rotation is given by 
\[ 
K_{rot}= \frac{1}{2} \int \bom (s)\cdot  \mathbb{I}(s)  \bom(s) \mbox{d} s 
\] 
and hence
\[ 
\bpi = \mathbb{I} \bom = \frac{\delta K_{rot}}{\delta \bom }. 
\] 
To write the conservation laws, we need to express the angular momentum in the fixed spatial frame $\{ \mathbf{E}_1, \mathbf{E}_2, \mathbf{E}_3\}$. To distinguish it from $\bpi$ which was expressed in the  \emph{body} frame 
$\{ \mathbf{d}_1(s), \mathbf{d}_2(s), \mathbf{d}_3(s) \}$, we shall denote the 
same vector in the fixed spatial frame $\{ \mathbf{E}_1, \mathbf{E}_2, \mathbf{E}_3\}$ by $\bpi^{(\mathbf{E})}$. The same convention will be used for all other vectors. Thus, \eqref{def_basis_d} yields 
\[
\boldsymbol{\pi}(s)=\bpi^i (s)\mathbf{d}_i(s)= \mathbb{I}^i_{j}(s) \bom^j(s) \mathbf{d}_i(s) = 
\mathbb{I}^i_{j}(s) \bom^j(s) \Lambda^k_i(s)\mathbf{E}_k  = \bpi^{( \mathbf{E}),k}(s)\mathbf{E}_k, 
\]
so the $k$-th component of the spatial angular momentum is expressed in terms of the local body quantities $\mathbb{I}^i_{j}(s) $ and $\bom^k(s)$ as 
\begin{equation} 
\boldsymbol{\pi}^{(\mathbf{E}),k}=\Lambda^k_i\mathbb{I}^i_{j} \bom^j = \left[ \Lambda  \mathbb{I} \bom  \right]^k =  \left[ \Lambda  \frac{\delta K_{rot}}{\delta \bom}  \right]^k .
\label{piomconnect0}
\end{equation} 
Thus, the vector $\boldsymbol{\pi}^{( \mathbf{E})}(s)$  of \textit{body} angular momentum expressed in the \textit{spatial} frame  is connected to the local body quantities as 
\begin{equation} 
\boldsymbol{\pi}^{(\mathbf{E})}= \Lambda \mathbb{I} \bom   = \Lambda  \frac{\delta K_{rot}}{\delta \bom}.
\label{piomconnect}
\end{equation} 
\begin{remark} {\rm 
The vector $\boldsymbol{\pi}^{(\mathbf{E})}$ and all other vectors with the subscript $(\mathbf{E})$ do not have the physical meaning of the angular momentum in the fixed frame. The true angular and linear  momenta in the spatial frame will be denoted (see immediately below) with the superscript $(S)$. The quantities with the superscript $(\mathbf{E})$ are just the transformations of vectors with respect to rotation of the base frame. 
Hopefully, no confusion  should arise over this distinction. }
\end{remark} 

In general, it is assumed for physical reasons, that the Lagrangian in Kirchhoff's formulation has the form 
\begin{equation} 
l(\bom,\bgam,\bOm,\bGam)=K_{lin}(\bgam)+K_{rot}(\bom)-E(\bOm,\bGam) \, , 
\label{lkirchhoff}
\end{equation} 
where $E(\bOm, \bGam)$ is a certain explicit  function of $\bOm$ and $\bGam$ (not necessarily quadratic). In this case, the body  
forces $\mathbf{n}={\delta l}/{\delta \bGam}$  and torques $\mathbf{m}={\delta l}/{\delta \bOm}$ are connected to the transformed quantities $\mathbf{n}^{(\mathbf{E})}, \mathbf{m}^{(\mathbf{E})}$ in Kirchhoff's theory as 
\begin{equation} 
\mathbf{n}^{(\mathbf{E})}=\Lambda \frac{\delta l}{\delta \bGam}
\, , 
\quad 
\quad 
\mathbf{m}^{(\mathbf{E})}=\Lambda \frac{\delta l}{\delta \bOm}
\, . 
\label{forcetorque} 
\end{equation} 
Next, we use formula (\ref{BodytoKirchhoff}) to transfer to spatial frame. Identifying elements of $\mathfrak{se}(3)^\ast$  with pairs of vectors  $ \left(\boldsymbol{\mu}, \boldsymbol{\eta} \right) \in \mathbb{R}^3$, produces a useful formula for the coadjoint action 
\begin{equation} 
{\rm Ad}^*_{(\Lambda,\br)^{-1}} \left( \boldsymbol{\mu}, \boldsymbol{\eta} \right) 
= \left( \Lambda \boldsymbol{\mu}+ \br \times \Lambda \boldsymbol{\eta}, 
\Lambda \boldsymbol{\eta} \right). 
\label{AdstarR3}
\end{equation} 
Thus, the spatial momenta -- denoted by a superscript $(S)$ -- become 
\begin{align} 
\left(\mathbf{\bpi}^{(S)},\mathbf{p}^{(S)}\right):&=\operatorname{Ad}^*_{(\Lambda,\br)^{-1}}\left(\frac{\delta l}{\delta \bom},\frac{\delta l}{\delta \bgam}\right)=\left( 
\Lambda \frac{\delta l}{\delta \bom}+ \br \times \Lambda \frac{\delta l}{\delta \bgam}
\, , \, 
 \Lambda \frac{\delta l}{\delta \bgam}
\right)\nonumber\\
&=\left( \bpi^{(\mathbf{E})}+ \br \times \mathbf{p}^{(\mathbf{E})} \, , \, \mathbf{p}^{(\mathbf{E})} 
\right)
\, , 
\label{spacemomenta} 
\end{align} 
upon using (\ref{pgamconnect}) and (\ref{piomconnect}).  Analogously, using \eqref{forcetorque}, the spatial torques  $\mathbf{m}^{(S)}$ and forces $\mathbf{n}^{(S)}$  are  
\begin{align} 
\left(\mathbf{m}^{(S)},\mathbf{n}^{(S)}\right)
:&=\operatorname{Ad}^*_{(\Lambda,\br)^{-1}}
\left(\frac{\delta l}{\delta \bOm},\frac{\delta l}{\delta \bGam}\right)
=\left( 
\Lambda \frac{\delta l}{\delta \bOm}+ \br \times \Lambda \frac{\delta l}{\delta \bGam}
\, , \, 
 \Lambda \frac{\delta l}{\delta \bGam}
\right)\nonumber 
\\
&=\left( \mathbf{m}^{(\mathbf{E})}+ \br \times \mathbf{n}^{(\mathbf{E})} \, , \, \mathbf{n}^{(\mathbf{E})} 
\right). 
\label{spaceforces} 
\end{align} 
The conservation laws in Kirchhoff theory may now  be written as 
\begin{equation} 
\frac{\partial}{\partial t} (\mathbf{\bpi}^{(S)},\mathbf{p}^{(S)})
+
\frac{\partial}{\partial s}(\mathbf{m}^{(S)},\mathbf{n}^{(S)})
=(\mathbf{T}, \mathbf{f}), 
\label{Kirchcons0} 
\end{equation} 
where $\mathbf{T}$ and $\mathbf{f}$  are external torques and forces, respectively.  Equations (\ref{Kirchcons0}) give, componentwise,  the following linear and angular momentum conservation laws (cf. equations (2.5.5) and (2.5.7) of \cite{DiLiMa1996})
\begin{align}
&\frac{\partial}{\partial t} \mathbf{p}^{(\mathbf{E})} +\frac{\partial }{\partial s} \lp \mathbf{n}^{(\mathbf{E})} -\mathbf{F} \rp =0
\,,
\label{momcons}
 \\
&\frac{\partial }{\partial t} \lp  \boldsymbol{\pi}^{(\mathbf{E})}+\mathbf{r} \times \mathbf{p} ^{(\mathbf{E})}\rp 
+\frac{\partial }{\partial s} \lp \mathbf{m}^{(\mathbf{E})} +\br \times \mathbf{n} ^{(\mathbf{E})}- \mathbf{L} \rp=0
\, ,
\label{angcons}
\end{align}
where $\mathbf{F}$ and $\mathbf{L}$ are defined as the indefinite integrals,
\[
\mathbf{F}=\int^s \mathbf{f} (q)\, \mbox{d} q
\quad\hbox{and}\quad
\mathbf{L}=\int^s 
[\br(q) \times \mathbf{f}(q) + \mathbf{T}(q)]\, \mbox{d} q
\,.
\]
Opening the brackets in (\ref{momcons}) and (\ref{angcons}) gives 
the balances of linear and angular momenta in Kirchhoff's approach (cf. eqs. (2.3.5) and (2.3.6) of \cite{DiLiMa1996})
\begin{align}
&\frac{\partial \mathbf{p}^{(\mathbf{E})}}{\partial t}  + \frac{\partial \mathbf{n}^{(\mathbf{E})}}{\partial s} =\mathbf{f},
\label{linmom}
\\
&\frac{\partial \boldsymbol{\pi}^{(\mathbf{E})}}{\partial t} + \frac{\partial  \mathbf{m}^{(\mathbf{E})}}{\partial s}+
\frac{\partial \br}{\partial s} \times \mathbf{n}^{(\mathbf{E})}
=\mathbf{T}.
\label{angmom}
\end{align}
To see how these Kirchhoff balance laws look in our representation, 
one may substitute relations \eqref{spacemomenta} and \eqref{spaceforces}  into \eqref{Kirchcons0} to obtain: 
%--------------------------------
\begin{equation}\label{consgen-intro}\frac{\partial}{\partial t}
\left[{\rm Ad}^*_{(\Lambda,\br)^{-1}}
\left( \frac{\delta l}{\delta \bom} \, , \,  \frac{\delta l}{\delta \bgam}
\right) \right] + 
\frac{\partial}{\partial s}  \left[
 {\rm Ad}^*_{(\Lambda,\br)^{-1}}
\left( \frac{\delta l}{\delta \bOm} \, , \,
  \frac{\delta l}{\delta \bGam} 
\right) \right]
= ( \mathbf{T},  \mathbf{f}).
\end{equation}
Assume now that the Lagrangian $l$ depends explicitly on the  additional variable $\brho = \Lambda^{-1}\br$. This corresponds to \emph{potential} forces exerting forces and torques.  As shown in Section~\ref{sec:Conservation},  in our representation the external torques $\mathbf{T}$ and forces $\mathbf{f}$ are  given by
\begin{equation} 
(\mathbf{T},\mathbf{f})={\rm Ad}^*_{(\Lambda,\br)^{-1}}\left(\frac{\delta l}{\delta \brho} \times \brho, \frac{\delta l}{\delta \brho} \right) \, . 
\label{Tf0} 
\end{equation} 
By using formula \eqref{AdstarR3}, relationship \eqref{Tf0} simplifies to 
\begin{align} 
{\rm Ad}^*_{(\Lambda,\br)^{-1}} \left(\frac{\delta l}{\delta \brho}  \times \brho,\frac{\delta l}{\delta \brho}\right) &= 
\left( 
\Lambda \left( \frac{\delta l}{\delta \brho} \times \brho \right) + 
\br \times \Lambda \frac{\delta l}{\delta \brho}, 
\Lambda \frac{\delta l}{\delta \brho}
\right) = 
\nonumber 
\\ 
&\left( 
 \left( \Lambda \frac{\delta l}{\delta \brho}\right)  \times\left( \Lambda \brho \right) + 
\br \times \Lambda \frac{\delta l}{\delta \brho}, 
\Lambda \frac{\delta l}{\delta \brho}
\right) 
=\left(0,\Lambda \frac{\delta l}{\delta \brho} \right)
\, ,
\label{Tffin}
\end{align} 
upon remembering that $\Lambda \brho=\br$. 

\begin{remark}[Potential external forces produce no net torque]$\quad$\\
\rm
Equation \eqref{Tffin} implies that potential external forces produce no net torque on the strand. Hence, the nonzero torques $\mathbf{T}$ in \eqref{consgen-intro} must arise from non-potential forces. 
\end{remark}

The conservation law \eqref{consgen-intro} is formally equivalent to the classical expressions in \eqref{momcons} and \eqref{angcons}, even if  nonlocal interaction is present. This equivalence shows how the classical results \eqref{momcons} and \eqref{angcons} generalize for the
case of nonlocal orientation-dependent interactions. Clearly, the conservation laws are simpler in the Kirchhoff representation. However, if nonlocal interactions are present (called \emph{self-interaction forces} in \cite{DiLiMa1996}), the computation of the required time-dependent Euclidian distances in the interaction energy becomes problematic in the classical Kirchhoff approach.
As we shall see below in  Section~\ref{sec:Conservation}, these conservation laws may be obtained, even when nonlocal interactions are present.  Also in Section~\ref{sec:Conservation},  we show that the nonlocal forces are included in the conservation law (\ref{consgen-intro}) and are expressed in the same form as a purely elastic conservation law. 

The balance laws  \eqref{momcons} and \eqref{angcons} 
are much simpler in appearance than the expressions in (\ref{consgen-intro}), as they do not involve  computing $(\Lambda,\br)$ at each instant in time and point in space. Thus for elastic rods, in the absence of  nonlocal interactions, the Kirchhoff mixed (convective-spatial) representation appears simpler than either the convective or spatial representations. However, the presence of nonlocal terms summons the more general convective approach.

\begin{remark}
[Reduction of static equations of motion to the heavy top]$\quad$\\
{\rm A famous analogy exists between the stationary shapes of an elastic filament and the equations of motion of a heavy top \cite{KeMa1997, NiGo1999}. In our formulation, this analogy appears naturally. This shows the advantage of using the geometric approach, even in the study of classical problems of filament dynamics.
This paper focuses, however, on the derivations and geometric structures underlying the dynamical equations, rather than on the \emph{solutions} of the equations.}
\end{remark}

%%%%%%%%%%%%%%%%%%%%%%%%%%%%%%%%%
\subsubsection{Reductions for a fixed filament}
\label{fixedfilament}

We may briefly apply the ideas of the present paper to the particular case of a fixed filament, in order to compare the motion equations with those arising in \cite{Mezic2006}.

The analysis of filament dynamics induced by nonlocal interactions simplifies in the case when the position of the filament is fixed as $\br(s)$ and does not depend on time. For simplicity, we shall assume that the filament is straight and $s$ is the arc length, so that $\br(s) = (s,0,0)^T$. 
The following reduced Lagrangian is invariant under the left action of the Lie group $SO(3)$:
\begin{equation}
l
=
\underbrace{\
\frac{1}{2} \int
\bom (s) \cdot \mathbb{I}(s) \bom (s) 
\mbox{d} s\
}_{\hbox{Kinetic energy}}
-
\underbrace{\
 \frac{1}{2} \int
 f \lp \bOm(s) \rp  \mbox{d} s\
}_{\hbox{Elastic energy}}
-
\underbrace{\
 \frac{1}{2} \iint U\lp \brho(s),\xi(s,s') \rp
\mbox{d} s \mbox{d}s'\
}_{\hbox{Potential energy}}
\, .
\label{reducedL}
\end{equation}
A nonlocal interaction term appears in the potential energy of relative orientation in this Lagrangian. This term involves a variable 
\[
\xi(s,s')= \Lambda^{-1}(s) \Lambda(s') \in SO(3)
\,,
\]
which defines the relative orientation of rigid charge conformations at two \emph{different points} in space. The variable $\xi(s,s')\in SO(3)$ is invariant with respect to simultaneous rotations of the coordinate frames for $s$ and $s'$, but it is not an element of a Lie algebra. In particular, $\xi(s,s')$ is not a vector.
The presence of nonlocal interactions introduces dependence on relative orientation and thereby produces new types of nonlocal terms in the corresponding Euler-Poincar\'e dynamics obtained in applying reduction by $SO(3)$ symmetry to Hamilton's principle.

\begin{remark}[Aim of the paper]$\quad$\\
{\rm
In this paper, the influence of non-locality due to electrostatic forces on rod mechanics is studied by using various approaches, including the Euler-Poincar\'e variational method.  This variational approach leads to an equivalent Lie-Poisson Hamiltonian formulation of the new equations appearing below in \eqref{hpvarsig}, \eqref{hpvarpsi}.
Applying the Ad$\,^*_{(\Lambda,\br)^{-1}}$ transformation from convective to spatial variables in these equations streamlines the form and exposes the meaning of the interplay among their various local and nonlocal terms, relative to the Kirchhoff theory.}
\end{remark}

\paragraph{Euler-Poincar\'e dynamics} Euler-Poincar\'e dynamics for the angular dynamics on a \emph{fixed}  filament follows from stationarity of the left invariant total action 
\[
S=\int l(\bom,\brho,\xi, \bOm)\,dt
\,.
\]
Note that this case does not require computation of the evolution equation for $\bgam$ since the filament is assumed to be fixed in space, i.e., $\bgam=\Lambda^{-1} \dot{\br}=0$.
The variational derivative $\delta  S$ for such a Lagrangian is computed as,
\begin{equation}
\delta S
=
\int \delta l(\bom,\Lambda, \bOm) \,dt
 =
 \int
\left<
\frac{\delta l}{\delta \bom}, \delta \bom
\right>
+
\left<
\Lambda^{-1} \frac{\delta l}{\delta \Lambda}, \Sigma
\right>
+
\left<
\frac{\delta l}{\delta \bOm}, \delta \bOm
\right>
\,dt
\,,
\label{delta-lag}
\end{equation}
for the notation $\Sigma = \Lambda^{-1} \delta \Lambda$.
As we will see in Section~\ref{var-defs}, these variations are related by
\begin{align*}
\delta \omega&=\dot{\Sigma}+\lsb \omega, \Sigma \rsb =
\dot{\Sigma}+{\rm ad}_\omega \Sigma 
\,,\\
\delta \Omega &= \Sigma\,'+\lsb \,\Omega, \Sigma \rsb
=
\Sigma\,'+{\rm ad}_{\Omega} \Sigma 
\,,\\
\delta\brho&=-\mathbf{\Sigma}\times\brho
\,.
\end{align*}
Substituting these formulas into \eqref{delta-lag} then integrating by parts in the time $t$ and one-dimensional coordinate $s$ along the fiber yields
\begin{align}
\delta S
=
\int \delta l \,\mbox{d}t
=
\int
\Bigg<
&
- \frac{\partial}{\partial t} \frac{\delta l}{\delta \omega} +
{\rm ad}^*_\omega  \frac{\delta l}{\delta \omega}
-\frac{\partial}{\partial s} \frac{\delta l}{\delta \Om} +
{\rm ad}^*_ {\Om}  \frac{\delta l}{\delta \Om}
 \nonumber
 \\
 &
-
 \int
 \lp
-\,\frac{\partial  U}{\partial \xi}(s,s') \xi^T(s,s')
+\xi(s,s') \left( \frac{\partial  U}{\partial \xi}(s,s') \right)^T
\rp
\nonumber
\\
&-
 \left(
 \frac{\delta l}{\delta \brho} \times \brho
 \right)^{\! \widehat{\phantom{o}}}
\mbox{d}s'
\, ,
\Sigma
\Bigg>
\,\mbox{d}t
\,.
\label{deltalvar}
\end{align}
Thus, Hamilton's principle $\delta S=0$ implies the {\bfi Euler-Poincar\'e} equations,
\begin{align}\label{modified_EP_fixed_filament}
-\frac{\partial}{\partial t} \frac{\delta l}{\delta \omega} +
{\rm ad}^*_\omega  \frac{\delta l}{\delta \omega}
= & \,\frac{\partial}{\partial s} \frac{\delta l}{\delta\Om}
-
{\rm ad}^*_{\Om}  \frac{\delta l}{\delta\Om}
+
\int
\left(
 \frac{\delta l}{\delta \brho} \times \brho
 \right)^{\!\bf \widehat{\,\,\,}  }
\mbox{d} s'\nonumber\\
&+
\int
 \lp
-\frac{\partial  U}{\partial \xi}(s,s') \xi^T(s,s')
+\xi(s,s') \left( \frac{\partial  U}{\partial \xi}(s,s') \right)^T
\rp
\mbox{d}s'
\,. 
\end{align} 
Note that these Euler-Poincar\'e equations are \textit{nonlocal}. That is, they are integral-partial-differential equations.

Reformulating (\ref{modified_EP_fixed_filament}) in terms of vectors yields the following generalization of equations considered by \cite{Mezic2006}, written in a familiar vector form: 
\begin{align}\label{rigid_bodies}
\lp  -\frac{d}{dt}  \frac{ \delta l}{\delta  \bom} \right.
&
+\left.
 \frac{ \delta l}{\delta  \bom} \times \bom
 -
 \frac{\partial}{\partial s} \frac{\delta l}{\delta\bOm}
 -
\bOm \times   \frac{\delta l}{\delta \bOm}
+
\brho \times \frac{\delta l}{\delta \brho} 
\rp
^{\!\bf \widehat{\,\,\,}  }
\nonumber
\\
&=
 \int
 \lp
-\frac{\partial  U}{\partial \xi}(s,s') \xi^T(s,s')
+\xi(s,s') \left( \frac{\partial  U}{\partial \xi}(s,s') \right)^T
\rp
\mbox{d} s'
 \,.
\end{align}
In order to close the system, one computes the time derivative of
${\xi}(s,s')=\Lambda^{-1}(s') \Lambda (s) $:
\begin{align}
\dot{\xi}(s,s') &=- \Lambda^{-1}(s') \dot{\Lambda}(s')
\Lambda^{-1}(s')\Lambda (s)+ \Lambda^{-1}(s') \dot{\Lambda}({s})
\nonumber
\\
&= -\omega(s') \xi({s,s'}) + \xi(s,s') \omega({s})
\, .
\label{dotxi}
\end{align}
This expression is not quite a commutator because different positions $s$ and $s'$ appear in $\omega$. However, operating with $\xi^{-1}$ from the left in equation \eqref{dotxi} gives a proper Lie-algebraic expression for the reconstruction of the relative orientation,
\begin{equation}
\label{ }
\xi^{-1}\dot{\xi} (s,s')
=
 \omega({s})  - {\rm Ad}_{\xi^{-1}(s,s')}\omega({s}')
 \,.
\end{equation}
Formulas \eqref{modified_EP_fixed_filament} - \eqref{dotxi} generalize the results in \cite{Mezic2006} for a fixed filament from $SO(2)$ to $SO(3)$ rotations.

\section{Motion of exact self-interacting geometric rods}
\label{moving-filament}
\subsection{Problem set-up}
Suppose each rigid conformation of charges is identical and the $k$-th electrical charge is positioned near a given spatial point $\br$ through which the curve of base points of the RCCs passes. This curve is parametrized by a variable $s$ which need not be the arc length. Rather, we take $s \in [0,L]$ to be a parameter spanning a fixed interval.%
\footnote{Note: limiting its parametrization to a fixed interval does not mean that the filament is inextensible.}

The spatial reference (undisturbed) state for the $k$-th charge in a given RCC is the sum $\br(s)+\boldeta_k(s)$. That is, $\boldeta_k(s)$ is a vector of constant length that determines the position of the $k$-th electrical charge relative to the point $\br(s)$ along the curve in its reference configuration. The $\boldeta_k(s)$ specify the shape of the rigid conformation of charges. At time $t$ the position $\mathbf{c}_k$ of the $k$-th charge in the rigid conformation anchored at spatial position $\br(s)$ along the curve parametrized by $s$ may rotate to a new position corresponding to the orientation $\Lambda(s,t)$ in the expression
\begin{equation}
\mathbf{c}_k(s)=\br(s)+\Lambda(s,t) \boldeta_k(s)
\,,\quad\hbox{where}\quad
\Lambda(s,0) = {\rm Id}
\,.
\label{charge-position}
\end{equation}
This rigid conformational rotation is illustrated in Figure \ref{model-fig}. In Mezic's case \cite{Mezic2006}, the rotation is in the plane, so that $\Lambda \in SO(2)$, and there is only one charge, so $k=1$.

\subsection{Convected representation of nonlocal potential energy}\label{convected_representation}
One part of the potential energy of interaction between rigid conformations of charges  at spatial coordinates $\br(s)$ and $\br(s')$ along the filament depends only on the magnitude $|\mathbf{c}_m(s') -\mathbf{c}_k (s)|$ of the vector from charge $k$ at spatial position $\mathbf{c}_k(s)$ to charge $m$ at spatial position $\mathbf{c}_m(s' )$. This is the Euclidean spatial distance
\begin{equation}
d_{k,m} (s,s')=\big|\mathbf{c}_m(s') -\mathbf{c}_k (s) \big|
\label{distance0}
\end{equation}
between the $k$-th and $m$-th charges in the two conformations whose base points are at $\br(s)$ and $\br(s')$, respectively. In this notation, the  potential energy is given by
\begin{equation}
E=E_{loc}(\bOm,\bGam)-\sum_{k,m} \frac{1}{2}
\iint U\lp  d_{k,m} (s, s')  \rp
 \Big| \frac{d \br}{ds} (s) \Big|  \, \,  \Big| \frac{d \br}{ds} (s') \Big| \mbox{d}s \mbox{d}s'\,,
\label{Energy0}
\end{equation}
for an appropriate physical choice of the \textit{interparticle interaction potential} $U(d_{k,m})$, and 
the quantities $\bOm$, $\bGam$ (and $\bom$, $\bgam$ and $\brho$ below) are defined in  (\ref{bundle.coords}). The part $E_{loc}(\bOm,\bGam)$ represents the purely elastic part of the potential, and is usually taken to be a quadratic function of the deformations $(\bOm,\bGam)$, but more complex expressions are possible as well; we shall not restrict the functional form of that dependence.  The total Lagrangian $l$  is then written as a sum of local $l_{loc}$ and nonlocal $l_{np}$: 
\begin{equation} 
l_{loc}=K(\bom,\bgam)-E_{loc}(\bOm,\bGam,\brho)
\quad 
\mbox{and} 
\quad 
l_{np}=-E_{np} 
\, , 
\label{locL} 
\end{equation}
where  $K$ is the kinetic energy that depends only on the local velocities $\bom, \bgam$. For the sake of generality, here and everywhere else below we shall simply consider the total Lagrangian to be a sum of the local part $l_{loc}(\bom,\bgam,\bOm,\bGam,\brho)$, and the nonlocal part given by (\ref{Energy0}): 
\begin{equation} \label{l_loc}
l=l_{loc}(\bom,\bgam,\bOm,\bGam,\brho) + l_{np} 
\, . 
\end{equation} 
The scalar distance $d_{k,m}$ in \eqref{distance0} and \eqref{Energy0} may also be expressed in terms of vectors seen from the frame of orientation of the rigid body at a spatial point $\br(s)$ along the filament, as
\begin{align}
d_{k,m}(s,s')
&=\left| \mathbf{c}_m(s') -\mathbf{c}_k (s) \right|\nonumber\\
&=\left|
 \Lambda^{-1}(s) \left( \mathbf{c}_m(s') -\mathbf{c}_k (s) \right)
 \right|
\nonumber
\\
&=
\left|
 \Lambda^{-1}(s) \left(\br(s')-\br(s))  + \Lambda^{-1}(s) \Lambda(s')\boldeta_m(s') - \boldeta_k(s) \right)
 \right|
 \nonumber
 \\
 & =:
 \left|
\bkappa(s,s')+ \xi(s,s') \boldeta_m (s') - \boldeta_k(s)
 \right|
 \, ,
\label{LiePoissondist}
\end{align}
where we have defined the quantities
\begin{equation}
\bkappa(s,s'):= \Lambda^{-1}(s) \lp \br(s')-\br(s) \rp  \in \mathbb{R}^3
\quad\hbox{and}\quad
\xi(s,s'):=  \Lambda^{-1}(s) \Lambda(s') \in SO(3)
\,.
\label{rho-xi-defs}
\end{equation}
The first of these quantities is the spatial vector from $\br(s)$ to $\br(s')$, as seen from the orientation $\Lambda(s)$ of the rigid charge conformation located at coordinate label $s$ along the filament. 
The second is the relative orientation of the rigid charge conformations located at coordinate labels $s$ and $s'$. For later use, we record the {\bfi transposition identities},
\begin{equation}
\xi(s,s')^T=\xi(s',s)=\xi(s,s')^{-1}
\,,
\label{xi-trans}
\end{equation}
which follow from the definition of $\xi(s,s')$ in \eqref{rho-xi-defs}.

\begin{remark}[Left $SO(3)$ invariance]
{\rm Both the body separation vector $\bkappa(s,s')$ and the relative
orientation $\xi(s,s')$ defined in \eqref{rho-xi-defs} are invariant under rotations of the spatial coordinate system obtained by the left action 
\[
(\br(s')-\br(s))\to O \big(\br(s') - \br(s) \big)
\quad\hbox{and}\quad
\Lambda\to O\Lambda
\,,
\] 
by any element $O$ of the rotation group $SO(3)$.}
\end{remark}

\begin{proposition}[Left $SE(3)$ invariance]
The quantities $(\xi,\bkappa)\in SO(3)\times \mathbb{R}^3$ defined in \eqref{rho-xi-defs} are invariant under all transformations of the special Euclidean group $SE(3)$ acting on the left.
\end{proposition}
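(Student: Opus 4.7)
The plan is to verify invariance by direct computation, after writing down the left action of $SE(3)$ on the Lagrangian variables $(\Lambda(s,t),\br(s,t))$ explicitly. Recall that an element $(O,\mathbf{a})\in SE(3)$ acts on the left on $SE(3)$-valued data by
\[
(O,\mathbf{a})\cdot(\Lambda(s),\br(s)) = (O\Lambda(s), O\br(s)+\mathbf{a}),
\]
and this action is applied pointwise in $s$ (and in $t$, but $t$ plays no role here). To establish the proposition I will compute how $\xi(s,s')$ and $\bkappa(s,s')$ transform under this action and check that both reduce to themselves.

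For $\xi$, using $(O\Lambda(s))^{-1}=\Lambda(s)^{-1}O^{-1}$, I would write
\[
\xi(s,s')\ \longmapsto\ (O\Lambda(s))^{-1}(O\Lambda(s')) = \Lambda(s)^{-1}O^{-1}O\,\Lambda(s') = \Lambda(s)^{-1}\Lambda(s') = \xi(s,s').
\]
Note that in this step the translation $\mathbf{a}$ does not enter at all because $\xi$ depends only on the rotational components. For $\bkappa$, the key observation is that the translation cancels inside the difference $\br(s')-\br(s)$:
\[
\bkappa(s,s')\ \longmapsto\ (O\Lambda(s))^{-1}\bigl((O\br(s')+\mathbf{a})-(O\br(s)+\mathbf{a})\bigr)
= \Lambda(s)^{-1}O^{-1}O(\br(s')-\br(s)) = \bkappa(s,s').
\]
Combining these two identities gives the claim.

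There is essentially no obstacle here; the whole content of the proposition is the cancellation of $O^{-1}O$ on one hand and the cancellation of the translation $\mathbf{a}$ in the difference $\br(s')-\br(s)$ on the other. The conceptual point worth emphasizing, and which I would include as a short remark after the calculation, is that $\bkappa$ is constructed precisely as a difference of positions so as to annihilate translations, and is then pulled back by $\Lambda(s)^{-1}$ so as to annihilate the residual left rotation; $\xi$ is likewise a product of the form $\Lambda^{-1}\Lambda$, in which the left $O$ is canceled. This explains geometrically why these two quantities are the natural convective observables and justifies the preceding remark on left $SO(3)$ invariance as a special case (taking $\mathbf{a}=0$).
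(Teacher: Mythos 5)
Your proof is correct and is essentially the paper's argument, carried out componentwise: both rest on the cancellation of the left factor $(O,\mathbf{a})$ in the combination $(\Lambda,\br)^{-1}(s)\,(\Lambda,\br)(s')$. The paper packages this a little more compactly by first noting that $(\xi,\bkappa)$ \emph{is} the $SE(3)$ group product $(\Lambda,\br)^{-1}(s)(\Lambda,\br)(s')$, after which left invariance is immediate, but the content is identical to your calculation.
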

\begin{proof}
As a set, the special Euclidean group $SE(3)$
is the Cartesian product $SE(3)=SO(3)\times \mathbb{R}^3$ whose elements are denoted as $(\Lambda,\br)$. Its group multiplication is given, e.g., in \cite{Ho2008} by the {\bfi semidirect-product action},
\begin{equation}
(\Lambda_1,\br_1)(\Lambda_2,\br_2)
=(\Lambda_1\Lambda_2,\br_1
+
\Lambda_1\br_2)
\,,
\label{SE3-multiplyrule}
\end{equation}
where the action of $\Lambda\in SO(3)$ on $\br\in \mathbb{R}^3$ is denoted as the concatenation $\Lambda\br$ and the other notation is standard. For the choice
\[
(\Lambda_1,\br_1)=(\Lambda,\br)^{-1}(s)
\quad\hbox{and}\quad
(\Lambda_2,\br_2)=(\Lambda, \br)(s^\prime)
\,,
\]
the $SE(3)$ multiplication rule \eqref{SE3-multiplyrule} yields the quantities $(\xi,\bkappa)\in SO(3)\times \mathbb{R}^3$ as
\begin{equation}
(\Lambda,\br)^{-1}(s)(\Lambda, \br)(s^\prime) = (\xi(s,s^\prime),\bkappa(s,s^\prime))
\,.
\label{SE3-act-xikappa}
\end{equation}
This expression is invariant under the left action $(\Lambda,\br)\to(O,\bv)(\Lambda,\br)$ of any element $(O,\bv)$ of the special Euclidean group $SE(3)$.
\end{proof}

\begin{remark}{\rm 
The $SE(3)$ setting will be especially important to the development of the Lagrange-Poincar\'e formulation of the dynamical filament equations in Section~\ref{sec:Lagrange-Poincare}.}
\end{remark}

Next, let us define the following $SE(3)$-invariant quantities, where prime denotes the derivative with respect to $s$ and dot is the derivative with respect to $t$:
\begin{eqnarray}
\Omega &\!\!\!\!\!\!\!\!\!\!\!\!\!\!\!\!\!\! =\, \Lambda^{-1} \Lambda' &  \in \mso(3)
\,,\nonumber
\\
\omega& \!\!\!\!\!\!\!\!\!\!\!\!\!\!\!\!\!\! = \, \Lambda^{-1} \dot{\Lambda}& \in \mso(3)
\,,\nonumber
\\
\bGam&\!\!\!\!\!\!\!\!\!\!\!\!\!\!\!\!\!\! = \, \Lambda^{-1} \br'& \in \mathbb{R}^3
\,,\label{rhodef}
\\
\bgam& \!\!\!\!\!\!\!\!\!\!\!\!\!\!\!\!\!\! = \, \Lambda^{-1} \dot{\br}& \in \mathbb{R}^3
\,,\nonumber
\\
\brho&= \, \Lambda^{-1} (\br - \br_0)&  \in \mathbb{R}^3
\,.\nonumber
\end{eqnarray}
Hereafter, we shall choose $\br_0=\mathbf{0}$ to recover the bundle coordinates (\ref{bundle.coords}).

\begin{remark}{\rm  
Note that here $\Lambda, \br, \Om, \om, \bGam, \bgam, \brho$ are interpreted as functions of the two variables $s$ and $t$. It will be important to see these variables as time-dependent curves with values in functions spaces. For example, we can interpret $\Lambda(s,t)$ as a function of space and time
\[
(s,t)\in [0,L]\times\mathbb{R}\mapsto \Lambda(s,t)\in SO(3),
\]
or we can see $\Lambda$ as a curve in an infinite dimensional Lie group
\[
t\in \mathbb{R}\mapsto \Lambda(\cdot ,t)\in \mathcal{F}([0,L],SO(3)),
\]
where $\mathcal{F}([0,L],SO(3))$ denotes the group of smooth functions defined on $[0,L]$ with values in $SO(3)$.

This observation is fundamental and leads to two different geometric approaches to the same equations: the {\bfi affine Euler-Poincar\'e} and the {\bfi covariant Lagrange-Poincar\'e} approaches.}
\end{remark}

\begin{remark}{\rm 
Since $\Lambda \in SO(3)$, one finds that
\begin{equation}
\bigg| \frac{d \br}{ds} (s) \bigg| 
= \bigg| \Lambda^{-1} \frac{d \br}{ds}  (s) \bigg|
=\big| \bGam(s) \big|
\, ,
\end{equation}
and the nonlocal part of the potential energy in \eqref{Energy0} reduces to
\begin{equation}
E_{np}=-\sum_{k,m} \frac{1}{2}
\iint U\lp  d_{k,m}(s,s')  \rp
 \big| \bGam(s) \big|   \big| \bGam(s') \big| \mbox{d}s \mbox{d}s'
\,.
\label{Energy1}
\end{equation}}
\end{remark}

\begin{remark} 
{\rm 
Everywhere in this paper, we shall assume that the nonlocal part of the Lagrangian $l_{np}$ is a function or functional of $\bGam$, $\xi$ and $\bkappa$. It could, for example, be expressed in the integral form 
\begin{equation} 
l_{np}( \xi, \bkappa,\bGam)= \iint U\big( \xi(s,s'), \bkappa(s,s'), \bGam(s), \bGam(s')\big) \mbox{d}s \mbox{d} s'
\label{lnpgen}
\end{equation} 
or be a more general functional. In this work, we shall consistently use formula \eqref{lnpgen} to make our computations more explicit, although of course the methods would apply to more general functionals. 
Clearly, expression (\ref{Energy0}) is a reduction of (\ref{lnpgen})  obtained when the energy of the system of charges is a (half)-sum of interactions between all charges.  This happens, for example, when investigating electrostatic or screened electrostatic charges in a linear media. 

Even though the expression $l_{np}=l_{np}(\xi,\bkappa,\bGam)$ is rather general, it is interesting to note that physical systems exist whose nonlocal interactions do not satisfy that law.
For example, the electrostatic potential around a DNA molecule immersed in a fluid satisfies the nonlinear Poisson-Boltzmann equation and finding the potential in that case is a well-known problem for supercomputers \cite{Ba-etal-2001}. If we could somehow explicitly solve this equation -- which is impossible --  we would be able to write a more general Lagrangian. In general, to apply our theories to this problem we would have to couple our methods to a numerical solution of the Poisson-Boltzmann equation at each time step. We shall also note that for the case of linearized Poisson-Boltzmann equation, we can solve the equation exactly in terms of the screened electrostatic potential $U(r)= e^{-k r}/r$ and the expression (\ref{lnpgen}) holds. 
} 
\end{remark}

\subsection{Kinematics}\label{kinematics}

We first define auxiliary kinematic equations that hold without any reference to dynamics. We call these {\bfi advection relations}, in order to distinguish them from the {\bfi dynamical  equations}  
(derived later) that balance the forces determined from the physics of the problem. In contrast, the advection relations hold for all strands, irrespective of their dynamic properties. 

In order to derive the first set of advection relations, 
we compute the time and space derivatives of $\brho(s,t)= \Lambda^{-1}\br(s,t)$. First, the $s$-derivative along the filament is given  by:
\begin{equation*}
\brho' = - \Lambda^{-1} \Lambda ' \Lambda^{-1} \br + \Lambda^{-1} \br'
\,,
\end{equation*}
and hence equations \eqref{rhodef} imply
\begin{equation}
\brho' = - \Omega \brho +\bGam=-\bOm \times  \brho +\bGam
\,.
\label{rhospacederiv}
\end{equation}
 Next, the time derivative is written as,
\begin{equation}
\dot{\brho} = - \Lambda^{-1}\dot{ \Lambda } \Lambda^{-1} \br + \Lambda^{-1} \dot{\br}
\,,
\label{rhokin}
\end{equation}
and equations \eqref{rhodef} yield the formula,
\begin{equation}
\dot{\brho} = - \omega \brho +\bgam=- \bom \times  \brho +\bgam
\,.
\label{rhotimederiv}
\end{equation}

The next set of advection relations is derived by the equality of cross-derivatives with respect to $t$ and $s$ for any sufficiently smooth quantity. 
First, we use the fact that  $\partial_s \partial_t \br=\partial_t \partial_s\br$.
Equality of these cross-derivatives implies the relations,
\[
\bgam\, '=- \bOm \times \bgam + \Lambda^{-1} \dot{\br} ' 
\,,
\]
and
\[
\dot{\bGam} =- \bom \times \bGam + \Lambda^{-1} \dot{\br} ' 
\,.
\]
The difference of the last two equations yields the following relation
\begin{equation}
\dot{\bGam}+ \bom \times \bGam =
\bgam\,'+ \bOm \times \bgam \, .
\label{kincond}
\end{equation}
As we shall see later, the latter is a type of {\bfi zero-curvature relation}.
Similarly, equality of cross-derivatives $\partial_s \partial_t \Lambda=\partial_t \partial_s \Lambda$
yields  the other advection relation,
\begin{equation}
\dot{\bOm}-\bom'=\bom \times \bOm \, .
\label{kincondom}
\end{equation}

\subsection{Remark on the $n$-dimensional generalization and the use of other groups}\label{n_dimensional_generalization}
The previous setting may be generalized to $n$ dimensions and to arbitrary Lie groups. This is not only useful for the generalization of charged strands to membranes and, more generally, to deformable media; it also gives a more transparent vision of the underlying geometric structure underlying the phenomena.

\medskip
Consider the semidirect product $\mathcal{O}\,\circledS\,E$ of a Lie group $\mathcal{O}$ with a  \textit{left} representation space $E$. The variables $\br$ and $\Lambda$ defined above are now functions defined on a spacetime $\mathcal{D}\times \mathbb{R}$, where $\mathcal{D}$ is a $n$-dimensional manifold:
\[
\Lambda : (s,t)\in \mathcal{D}\times\mathbb{R}\rightarrow \Lambda(s,t)\in\mathcal{O},\quad\text{and}\quad r:(s,t)\in \mathcal{D}\times\mathbb{R}\rightarrow r(s,t)\in E.
\]
We will avoid using boldface notation as the functions we consider may be more general geometric quantities, not only vectors. 
As before, ``dot'' $(\ \dot{}\  )$ over a quantity denotes its time derivative. The derivative with respect to a variable in $\mathcal{D}$ is denoted by $\mathbf{d}$; for $\mathcal{D} = [0,L] $ this was previously denoted by ``prime'' $(\,{}'\,)$. 
The definitions \eqref{rhodef} become
\begin{align}\label{rhodef_n_dimensional}
\Omega&=\Lambda^{-1} \mathbf{d}\Lambda : T\mathcal{D}\rightarrow \mathfrak{o}
\,,\nonumber\\
\omega&=\Lambda^{-1} \dot{\Lambda} :\mathcal{D}\rightarrow\mathfrak{o}
\,,\nonumber\\
\Gamma&=\Lambda^{-1} \mathbf{d} r :T\mathcal{D}\rightarrow E
\,,\\
\gamma&=\Lambda^{-1} \dot r :\mathcal{D}\rightarrow E
\,,\nonumber\\
\rho&=\Lambda^{-1} r :\mathcal{D}\rightarrow\nonumber E
\,.
\end{align}
Thus, if we interpret $(\Lambda,r)$ as a curve in the group $\mathcal{F}(\mathcal{D},\mathcal{O}\,\circledS\,E)$, the previous definition can be rewritten as
\begin{align*}
(\omega,\gamma)&=(\Lambda, r)^{-1}(\dot\Lambda,\dot r)
\,,\\
(\Omega,\Gamma,\rho)&=c((\Lambda, r)^{-1}),
\end{align*}
where $c$ is defined by
\begin{equation}\label{cocycle}
c(\Lambda, r)=\big((\Lambda, r)\,\mathbf{d}(\Lambda, r)^{-1},- r \big)
\,.
\end{equation}
Remarkably, $c$ is a {\bfi group one-cocycle}.
Thus, the previous definition simply says that $(\Omega,\Gamma,\rho)$ are affine advected quantities with zero initial values. This observation strongly  suggests a relation with the affine Euler-Poincar\'e theory developed in the context of complex fluids in \cite{Gay-Bara2007}.

On the other hand, if we interpret $(\Lambda,r)$ as a section of the trivial principal bundle
\[
(\mathcal{D}\times \mathbb{R})\times \mathcal{O}\,\circledS\,E\rightarrow \mathcal{D}\times \mathbb{R}
\]
over spacetime, definition \eqref{rhodef_n_dimensional} simply says that the variables $(\Omega,\omega,\rho)$ are obtained by reduction by the subgroup $\mathcal{O}$ of the first jet extension of $(\Lambda,\br)$. This, in turn, leads to a relation with the covariant Lagrange-Poincar\'e reduction for field theories developed in \cite{CaRa03}.
Note that by choosing the one-dimensional interval $\mathcal{D}=[0,L]$, the Lie group $\mathcal{O}=SO(3)$ and left representation space $E=\mathbb{R}^3$, one recovers the advection of charged strands discussed earlier. 

\begin{remark}\rm
Generalizing to higher dimensions reveals certain distinct aspects of the underlying geometry of the problem that are not distinguished in considering the particular case of the charged strands. For example, in the case of \textit{charged sheets} or \textit{charged elastic deformed media}, $\mathcal{D}$ is a domain in $\mathbb{R}^n$, with $n=2$ or $3$, respectively, so the coordinate $s$ has several dimensions. Then, $\bGam$ should be considered as a set of vectors $\bGam_1, \ldots, \bGam_n$. 
Likewise, for the problem of flexible strands of rigid charge conformations the distinct objects $E$ and $\mathfrak{o}$ both coincide with $\mathbb{R}^3$. This coincidence is removed in higher dimensions and thereby clarifies the underlying geometric structure of the theory. 
\end{remark}

\section{Derivation of the equations of motion}

In this section we shall derive the convective equations of motion for a charged strand from two different, but equivalent, viewpoints. The first derivation is based on the classical Hamilton-Pontryagin (HP) approach in control theory (see, for example, \cite{Bloch2003}). The second derivation is based on the Euler-Poincar\'e (EP) approach, modified to include additional terms describing nonlocal contributions. We shall present both methods in this section. 

The Hamilton-Pontryagin Theorem \ref{HamPont-thm} elegantly delivers the key formulas for the Euler-Poincar\'e equations and leads efficiently to its Lie-Poisson Hamiltonian formulation. Perhaps surprisingly, the HP theorem produces these results without invoking any properties of how the invariance group of the Lagrangian acts on the configuration space (a manifold) and leads directly to the equations of motion \eqref{hpvarsig} and \eqref{hpvarpsi}. The equivalent alternative EP derivation of these formulas does explicitly involve the action of the Lie group on the configuration space and is, therefore, slightly more elaborate than the HP theorem. This elaboration invokes the Lie group action on the configuration space and thereby provides additional information. In particular, the EP approach reveals how the Lie group action on the configuration space induces the affine structure of the EP equations \eqref{EPsigma} and \eqref{EPpsi}. The alternative EP approach also yields information that explains precisely how the canonical phase space (the cotangent bundle of the configuration manifold) maps to the Lie-Poisson space associated to the action, which is the dual of the Lie algebra of symmetries via the momentum map defined by the infinitesimal affine Lie algebra action. We explore in detail the EP route in this paper because it explicitly reveals the role of the Lie group action in symmetry reduction. In Section~\ref{sec:affineEP} it will be shown that the derivation of the EP equations and of the associated variational principle are corollaries of general theorems for systems whose configuration space is a Lie group. The complementary, but less transparent, HP route reveals other perspectives and results whose abstract general formulation will be explored in future work.

\subsection{A modified Hamilton-Pontryagin approach}
\label{sec:Hamilton-Pontryagin}
\subsubsection{Filament dynamics}

We begin with the Hamilton-Pontryagin approach applied to the case when the Lagrangian includes only the local part, so $l=l_{loc}(\bom, \bgam, \bOm,  \bGam, \brho)$. 
In  order to simplify the formulas and avoid extra factors in the integrals, we shall implicitly incorporate the dependence of the nonlocal potential on $\bGam=\Lambda^{-1}\mathbf{r'} $. See \eqref{EPsigma} and \eqref{EPpsi} below for the explicit formulas for the pairwise potential, that gives multiplication by  $| \bGam(s)| |\bGam(s')|$ in the integrals.

Inspired by the classical Hamilton-Pontryagin approach, we introduce Lagrange multipliers for the holonomic constraints that impose the defining relations \eqref{rhodef} for the five quantities $(\bom, \bgam, \bOm,  \bGam, \brho)$.

\begin{thm}[Hamilton-Pontryagin theorem for filament dynamics]
\label{HamPont-thm}$\quad$\\
The equations for filament dynamics arise from the variational principle $\delta S=0$ with action $S$ given by
\begin{eqnarray*}
S &=& \int l (\bom, \bgam, \bOm,  \bGam, \brho)\,  dt
+
\iint 
\bigg(
\boldsymbol{\pi}\cdot
\lp  \Lambda^{-1} \dot{\Lambda} - \omega  \rp
+
\boldsymbol{\Pi} \cdot
\lp  \Lambda^{-1} \Lambda' - \Omega  \rp
\\
&& \qquad
+\
 \mathbf{R} \cdot \lp  \Lambda^{-1}\br - \brho   \rp
+ \boldsymbol{ \mu} \cdot \lp  \Lambda^{-1}\dot{\br} - \bgam \rp
+  \mathbf{M} \cdot \lp  \Lambda^{-1}\br' - \bGam \rp
 \bigg)   \mbox{d}s\,\mbox{d}t.
\end{eqnarray*}
These equations are
\begin{equation*}
\frac{\delta l }{\delta \brho} - \mathbf{R}
 = 0
\,, \quad
\frac{\delta l }{\delta \bom} - \boldsymbol{\pi}
 = 0
\,, \quad
\frac{\delta l }{\delta \bOm} - \boldsymbol{\Pi}
 = 0
\,, \quad
\frac{\delta l }{\delta \bgam} - \boldsymbol{\mu}
 = 0
\,, \quad
\frac{\delta l }{\delta \bGam} - \mathbf{ M}
 = 0,
 \,
\end{equation*}
\begin{equation*}
\dot{\boldsymbol{\pi}} + \bom\times\boldsymbol{\pi} +\boldsymbol{\Pi}' + \bOm\times \boldsymbol{\Pi}
+ \bgam\times\boldsymbol{\mu}
+ \bGam\times \mathbf{M}
+ \brho\times\mathbf{R}
= 0
 \,,
\end{equation*}
and
\begin{equation*}
\dot{\boldsymbol{\mu}} + \bom\times \boldsymbol{\mu} + \mathbf{M}' + \bOm\times \mathbf{M} - \mathbf{R}
= 0
 \,,
\end{equation*}
together with the constraints,
\begin{displaymath}
\Lambda^{-1} \dot{\Lambda} = \omega
\,,\quad  
\Lambda^{-1} {\Lambda}' = \Om 
\,,\quad 
\Lambda^{-1}\br = \brho 
\,, \quad \Lambda^{-1}\dot{\br} = \bgam 
\,,\quad 
\Lambda^{-1}\br' = \bGam.
\end{displaymath}
\end{thm}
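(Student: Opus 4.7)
The plan is to recognize that the action $S$ depends on twelve independent fields, namely the five reduced variables $(\bom,\bgam,\bOm,\bGam,\brho)$, the five Lagrange multipliers $(\boldsymbol{\pi},\boldsymbol{\mu},\boldsymbol{\Pi},\mathbf{M},\mathbf{R})$, and the configuration variables $\Lambda$ and $\br$. I would vary these one family at a time. First, free variations of the multipliers immediately recover the five constraints $\Lambda^{-1}\dot\Lambda=\omega$, $\Lambda^{-1}\Lambda'=\Omega$, $\Lambda^{-1}\br=\brho$, $\Lambda^{-1}\dot\br=\bgam$, $\Lambda^{-1}\br'=\bGam$, since the multipliers appear linearly. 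Next, free variations of the reduced variables $(\bom,\bgam,\bOm,\bGam,\brho)$ affect only $l$ and the multiplier term in which the corresponding variable enters linearly, so they yield the algebraic identifications $\boldsymbol{\pi}=\delta l/\delta\bom$, $\boldsymbol{\mu}=\delta l/\delta\bgam$, $\boldsymbol{\Pi}=\delta l/\delta\bOm$, $\mathbf{M}=\delta l/\delta\bGam$ and $\mathbf{R}=\delta l/\delta\brho$.

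The two dynamical equations come from varying the configuration variables. For $\Lambda$, I would introduce $\Sigma:=\Lambda^{-1}\delta\Lambda\in\mathfrak{so}(3)$ and compute, in the standard way for left-trivialised groups,
\begin{equation*}
\delta(\Lambda^{-1}\dot\Lambda)=\dot\Sigma+[\omega,\Sigma],\quad \delta(\Lambda^{-1}\Lambda')=\Sigma'+[\Omega,\Sigma],
\end{equation*}
together with $\delta(\Lambda^{-1}\br)=-\Sigma\brho$, $\delta(\Lambda^{-1}\dot\br)=-\Sigma\bgam$, $\delta(\Lambda^{-1}\br')=-\Sigma\bGam$. Substituting into $\delta S$, integrating by parts in $t$ and $s$, and using the hat-map identities $[\hat{\mathbf{a}},\hat{\mathbf{b}}]=\widehat{\mathbf{a}\times\mathbf{b}}$ and $\hat{\mathbf{a}}\mathbf{b}=\mathbf{a}\times\mathbf{b}$, together with the triple-product identity $\mathbf{a}\cdot(\mathbf{b}\times\mathbf{c})=\mathbf{c}\cdot(\mathbf{a}\times\mathbf{b})$, I collect the coefficient of $\boldsymbol{\Sigma}$ to obtain precisely the stated balance $\dot{\boldsymbol{\pi}}+\bom\times\boldsymbol{\pi}+\boldsymbol{\Pi}'+\bOm\times\boldsymbol{\Pi}+\bgam\times\boldsymbol{\mu}+\bGam\times\mathbf{M}+\brho\times\mathbf{R}=0$. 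Endpoint terms vanish by the usual assumption that variations are free in the interior but vanish at the temporal endpoints (and that the natural boundary conditions hold at $s=0,L$).

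For $\br$, I would set $\zeta:=\Lambda^{-1}\delta\br$ and compute $\delta(\Lambda^{-1}\br)=\zeta$, $\delta(\Lambda^{-1}\dot\br)=\dot\zeta+\bom\times\zeta$, $\delta(\Lambda^{-1}\br')=\zeta'+\bOm\times\zeta$ from the Leibniz rule applied to $\Lambda\zeta=\delta\br$ and its $t$- and $s$-derivatives. Inserting these into $\delta S$, integrating by parts in $t$ for the $\dot\zeta$ term and in $s$ for the $\zeta'$ term, and rearranging cross products yields $\dot{\boldsymbol{\mu}}+\bom\times\boldsymbol{\mu}+\mathbf{M}'+\bOm\times\mathbf{M}-\mathbf{R}=0$, completing the proof.

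The main obstacle is purely bookkeeping: getting the signs right when translating between the $\mathfrak{so}(3)$-commutator form of the variations and the vector-product form that appears in the final equations, and making sure that the five different $\Sigma$-dependent contributions to the $\Lambda$-variation are rearranged consistently via the cyclic triple-product identity before the coefficient of $\boldsymbol{\Sigma}$ can be equated to zero. No deeper structural ingredient is required; all the geometric content is encoded in the left-trivialisation formula for $\delta(\Lambda^{-1}\dot\Lambda)$ and its $s$-analogue.
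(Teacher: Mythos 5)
Your proposal is correct and follows essentially the same route as the paper: the paper's Lemma on the variations of $\Lambda^{-1}\dot\Lambda$, $\Lambda^{-1}\Lambda'$, $\Lambda^{-1}\br$, $\Lambda^{-1}\dot\br$, $\Lambda^{-1}\br'$ in terms of $\widehat\bsigma=\Lambda^{-1}\delta\Lambda$ and $\bpsi=\Lambda^{-1}\delta\br$ is exactly the sum of the separate $\Lambda$- and $\br$-variations you compute, and the subsequent collection of coefficients of $\bsigma$ and $\bpsi$ after integration by parts is identical. The only difference is organizational (you split the variation by configuration variable rather than quoting the combined formulas), and your sign bookkeeping via the cyclic triple-product identity reproduces the stated equations correctly.
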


We begin by computing the variations of the quantities appearing in the action $S $.

\begin{lem} \label{lamvar} The variations of the quantities in $\Lambda$ and $\br$ of the formulas in \eqref{rhodef} are
\begin{align*}
\delta \lp  \Lambda^{-1} \dot{\Lambda} \rp
& =
\frac{\partial \widehat \bsigma}{\partial t}
+ \lsb \Lambda^{-1}\dot \Lambda , \widehat \bsigma \rsb
\,,\\
\delta \lp  \Lambda^{-1} \Lambda' \rp
& =
\widehat \bsigma' + \lsb \Lambda^{-1} \Lambda' , \widehat \bsigma\rsb
\,,\\
\delta \lp  \Lambda^{-1}  \mathbf{r} \rp
& =
\bPsi - \widehat\bsigma\lp \Lambda^{-1}\br\rp
\,,\\
\delta \lp  \Lambda^{-1}  \mathbf{\dot{r}} \rp
& =
\dot\bPsi - \widehat\bsigma\lp \Lambda^{-1} \dot \br \rp + \lp\Lambda^{-1}\dot \Lambda\rp \bPsi
\,,\\
\delta \lp  \Lambda^{-1}  \mathbf{r'} \rp
& =
\bPsi' - \widehat\bsigma\lp \Lambda^{-1} \br' \rp + \lp\Lambda^{-1} \Lambda'\rp \bPsi
\,.
\end{align*}
\end{lem}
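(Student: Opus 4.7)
The plan is to reduce everything to a direct application of the Leibniz rule for $\delta$, together with the single identity $\delta(\Lambda^{-1})=-\Lambda^{-1}(\delta\Lambda)\Lambda^{-1}$, while consistently using the shorthand $\widehat\bsigma:=\Lambda^{-1}\delta\Lambda\in\mathfrak{so}(3)$ and $\bPsi:=\Lambda^{-1}\delta\mathbf{r}\in\mathbb{R}^3$. The five formulas split naturally into two groups: the two variations involving a derivative of $\Lambda$ (which will produce commutator terms), the purely algebraic one, and the two involving a derivative of $\mathbf{r}$ (which mix the two types of contributions).

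First I would dispatch the algebraic formula $\delta(\Lambda^{-1}\mathbf{r})$: apply Leibniz, substitute $\delta(\Lambda^{-1})=-\Lambda^{-1}(\delta\Lambda)\Lambda^{-1}=-\widehat\bsigma\Lambda^{-1}$, and read off $\delta(\Lambda^{-1}\mathbf{r})=\bPsi-\widehat\bsigma(\Lambda^{-1}\mathbf{r})$. Next I would compute $\delta(\Lambda^{-1}\dot\Lambda)$. Two terms appear: $-\widehat\bsigma(\Lambda^{-1}\dot\Lambda)$ and $\Lambda^{-1}\delta\dot\Lambda$. To rewrite the second, differentiate the defining relation $\widehat\bsigma=\Lambda^{-1}\delta\Lambda$ in $t$ (using that $\partial_t$ and $\delta$ commute), which gives
\[
\Lambda^{-1}\delta\dot\Lambda=\partial_t\widehat\bsigma+(\Lambda^{-1}\dot\Lambda)\widehat\bsigma.
\]
Adding the two contributions produces exactly $\partial_t\widehat\bsigma+[\Lambda^{-1}\dot\Lambda,\widehat\bsigma]$. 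The formula for $\delta(\Lambda^{-1}\Lambda')$ is obtained by the identical calculation with $\partial_s$ in place of $\partial_t$.

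For the last two formulas, the strategy is the same but one must also use $\delta\mathbf{r}=\Lambda\bPsi$. For $\delta(\Lambda^{-1}\dot{\mathbf{r}})$, Leibniz gives $-\widehat\bsigma(\Lambda^{-1}\dot{\mathbf{r}})+\Lambda^{-1}\delta\dot{\mathbf{r}}$; differentiating $\delta\mathbf{r}=\Lambda\bPsi$ in time and multiplying by $\Lambda^{-1}$ yields $\Lambda^{-1}\delta\dot{\mathbf{r}}=\dot\bPsi+(\Lambda^{-1}\dot\Lambda)\bPsi$, and combining gives the claim. Replacing $\partial_t$ by $\partial_s$ throughout gives the formula for $\delta(\Lambda^{-1}\mathbf{r}')$.

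There is no conceptual obstacle here; each identity is a one- or two-line manipulation. The only bookkeeping point that warrants care is the sign and ordering in $\delta(\Lambda^{-1})=-\Lambda^{-1}(\delta\Lambda)\Lambda^{-1}=-\widehat\bsigma\Lambda^{-1}$ (as opposed to $-\Lambda^{-1}\widehat\bsigma$), and the fact that $\widehat\bsigma$ and $\Lambda^{-1}\dot\Lambda$ do not commute, which is precisely what generates the commutator $[\,\cdot\,,\widehat\bsigma]$ in the first two formulas. Once those conventions are fixed, the five formulas drop out by symmetric calculations, establishing the lemma.
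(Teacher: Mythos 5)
Your proposal is correct and follows essentially the same route as the paper: a direct Leibniz computation using $\delta(\Lambda^{-1})=-\Lambda^{-1}(\delta\Lambda)\Lambda^{-1}$ together with the commutation of $\delta$ with $\partial_t$ and $\partial_s$ to rewrite $\Lambda^{-1}\delta\dot\Lambda$ and $\Lambda^{-1}\delta\dot{\br}$ in terms of $\partial_t\widehat\bsigma$, $\dot\bPsi$ and their spatial analogues. The paper's proof performs exactly these manipulations, so no further comment is needed.
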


\begin{proof}  We calculate the variations directly, one by one.  First we have,
\begin{eqnarray*}
\delta \lp  \Lambda^{-1} \dot{\Lambda} \rp
& = &  - \Lambda^{-1}\de\Lambda \lp\Lambda^{-1}\dot{\Lambda}\rp + \Lambda^{-1}\de\dot\Lambda \\
& = &  - \Lambda^{-1}\de\Lambda \lp\Lambda^{-1}\dot{\Lambda}\rp + {\lp\Lambda^{-1}\de\Lambda\rp}^{.} + \lp\Lambda^{-1}\dot \Lambda\rp\lp\Lambda^{-1}\de \Lambda\rp\\
&=& \frac{\partial \widehat \bsigma}{\partial t}
+ \lsb \Lambda^{-1}\dot \Lambda , \widehat \bsigma\rsb.\\
\end{eqnarray*}
Similarly, for the variation of $\Lambda^{-1}\Lambda'$ we have,
\begin{displaymath}
\de\lp\Lambda^{-1}\Lambda'\rp = \widehat \bsigma' + \lsb \Lambda^{-1} \Lambda' , \widehat \bsigma\rsb
\,.
\end{displaymath}

Now we consider the variation of $\Lambda^{-1}\dot\br$, which is given by
\begin{eqnarray*}
\de\lp \Lambda^{-1}\dot\br\rp &=& -\Lambda^{-1} \de \Lambda \Lambda^{-1} \br + \Lambda^{-1}\de \dot \br\\
&=&  -\lp\Lambda^{-1} \de \Lambda\rp\lp \Lambda^{-1} \br\rp + \lp\Lambda^{-1}\de \br\rp^. + \lp\Lambda^{-1}\dot \Lambda\rp \lp\Lambda^{-1} \br\rp\\
&=& \dot\bPsi - \widehat\bsigma\lp \Lambda^{-1} \dot\br \rp + \lp\Lambda^{-1}\dot \Lambda\rp \bPsi
\,.\\
\end{eqnarray*}
A similar argument yields the variation of $\Lambda^{-1}\br'$,
\[
\de\lp \Lambda^{-1}\br'\rp = \bPsi' - \widehat\bsigma\lp \Lambda^{-1} \br' \rp + \lp\Lambda^{-1} \Lambda'\rp \bPsi
\,.
\]

Finally, the variation of $\Lambda^{-1}\br$ is given by,
\[
\de\lp\Lambda^{-1}\br\rp = -\Lambda^{-1}\de\Lambda \Lambda^{-1}\br + \Lambda^{-1}\de\br\\
= \bPsi - \widehat\bsigma\lp \Lambda^{-1}\br\rp
\]
and all the formulas in the statement are proved.
\end{proof}
\medskip

We may now use these identities to prove the Hamilton-Pontryagin Theorem \ref{HamPont-thm} for the equations of filament dynamics.
\medskip

\begin{proof}
The main results from this Hamilton's principle arise from the following identities, written in terms of the skew-symmetric $3\times3$ matrix $\Sigma= \Lambda^{-1}\delta \Lambda\in \mathfrak{so}(3)\simeq\mathbb{R}^3$ and the vector $\bpsi = \Lambda^{-1} \delta \mathbf{r}\in \mathbb{R}^3$.
Variations with respect to the Lagrange multipliers impose the expected defining relations for the five quantities $\{\brho,\bom,\bOm,\bgam,\bGam\}$. The conjugate variations give
\begin{equation*}
\frac{\delta l }{\delta \brho} - \mathbf{R}
 = 0
\,, \quad
\frac{\delta l }{\delta \bom} - \boldsymbol{\pi}
 = 0
\,, \quad
\frac{\delta l }{\delta \bOm} - \boldsymbol{\Pi}
 = 0
\,, \quad
\frac{\delta l }{\delta \bgam} - \boldsymbol{\mu}
 = 0
\,, \quad
\frac{\delta l }{\delta \bGam} - \mathbf{ M}
 = 0
 \,.
\end{equation*}
Finally, the variations proportional to $\bsigma$ and $\bpsi$ yield the filament equations
\begin{equation*}
\dot{\boldsymbol{\pi}} + \bom\times\boldsymbol{\pi} +\boldsymbol{\Pi}' + \bOm\times \boldsymbol{\Pi}
+ \bgam\times\boldsymbol{\mu}
+ \bGam\times \mathbf{M}
+ \brho\times\mathbf{R}
= 0
 \,,
\end{equation*}
and
\begin{equation*}
\dot{\boldsymbol{\mu}} + \bom\times \boldsymbol{\mu} + \mathbf{M}' + \bOm\times \mathbf{M} - \mathbf{R}
= 0
 \,,
\end{equation*}
respectively. 
\end{proof}

\begin{remark}{\rm 
The Hamilton-Pontryagin approach used here also allows nonholonomic constraints to be imposed on the motion of the strands, if one desires. See \cite{Ho2008} for a discussion of nonholonomic constraints using the Hamilton-Pontryagin approach.}
\end{remark}

\subsubsection{Nonlocal potential}

For the nonlocal potential \eqref{Energy1} we may form a Hamilton-Pontryagin variational principal in a similar fashion.  In this case, the action $S_{np}$ is given by
\begin{eqnarray*}
S_{np} &=& \int l_{np}(\xi, \bkappa, \bGam) dt + \iint\bm\cdot \lp \Lambda^{-1}(s)\br'(s) - \bGam\rp  \mbox{d}s \mbox{d}t\\
&&
+\iiint \Big( X \cdot \lp \Lambda^{-1}(s)\Lambda(s') - \xi\rp + \bK\cdot \lp \Lambda^{-1}(s)\lp \br(s') - \br(s)\rp - \bkappa\rp \Big)  \mbox{d}s  \mbox{d}s' \mbox{d}t
\end{eqnarray*}

\begin{lem} \label{lemvarnonlocal}
The additional variational formulas needed for calculating the equations of motion are given by
\begin{eqnarray*}
\Lambda^{-1}(s')\Lambda(s) \lp\delta \lp\Lambda^{-1}(s)\Lambda(s')
\rp\rp  
&=&
-{\rm Ad}_{\Lambda^{-1}(s')\Lambda(s)} \widehat{\bsigma} (s) + \widehat{\bsigma}(s')
\,,\\
\de \lp \Lambda^{-1}(s)\lp \br(s') - \br(s)\rp\rp 
&=& 
-\widehat{\bsigma}(s)\Lambda^{-1}(s)\lp \br(s') - \br(s)\rp 
\\&&\hspace{5mm}
+\
 \Lambda^{-1}(s)\Lambda(s')\bpsi(s') + \bpsi(s)
 \,,
\end{eqnarray*}
where the independent variations are defined by
\begin{equation}
\bpsi(s)=\Lambda^{-1}(s) \delta \br(s)
\quad\hbox{and} \quad
\widehat{\bsigma}(s)=\Lambda^{-1}(s) \delta \Lambda(s) \, .
\end{equation}
\end{lem}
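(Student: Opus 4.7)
The proof proposal is to compute both variations by direct application of the Leibniz rule and the standard identity $\delta(\Lambda^{-1}) = -\Lambda^{-1}(\delta\Lambda)\Lambda^{-1} = -\widehat{\bsigma}\Lambda^{-1}$, essentially mimicking the argument used in Lemma \ref{lamvar}. The new ingredient is that two distinct parameter values $s$ and $s'$ now enter, so factors of $\Lambda^{-1}(s)\Lambda(s')$ must be inserted to rewrite $\delta\br(s')$ and $\delta\Lambda(s')$ in terms of $\bpsi(s')$ and $\widehat{\bsigma}(s')$.

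For the first identity, I would begin with
\[
\delta\bigl(\Lambda^{-1}(s)\Lambda(s')\bigr) = -\Lambda^{-1}(s)\,\delta\Lambda(s)\,\Lambda^{-1}(s)\Lambda(s') + \Lambda^{-1}(s)\,\delta\Lambda(s'),
\]
then use $\Lambda^{-1}(s)\delta\Lambda(s) = \widehat{\bsigma}(s)$ on the first term and the factorization $\Lambda^{-1}(s)\delta\Lambda(s') = \bigl(\Lambda^{-1}(s)\Lambda(s')\bigr)\bigl(\Lambda^{-1}(s')\delta\Lambda(s')\bigr) = \Lambda^{-1}(s)\Lambda(s')\,\widehat{\bsigma}(s')$ on the second. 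Multiplying from the left by $\Lambda^{-1}(s')\Lambda(s)$ clears the trailing factor from the second term, and turns the first term into the adjoint action $-\operatorname{Ad}_{\Lambda^{-1}(s')\Lambda(s)}\widehat{\bsigma}(s)$, yielding the stated formula.

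For the second identity the calculation splits naturally into the $\delta\Lambda(s)$ piece and the $\delta\br$ piece:
\[
\delta\bigl(\Lambda^{-1}(s)(\br(s')-\br(s))\bigr)
= -\widehat{\bsigma}(s)\Lambda^{-1}(s)\bigl(\br(s')-\br(s)\bigr) + \Lambda^{-1}(s)\delta\br(s') - \Lambda^{-1}(s)\delta\br(s).
\]
The last term is immediately $\bpsi(s)$ (up to the sign appearing in the statement), and for the middle term one inserts $\Lambda(s')\Lambda^{-1}(s')$ to convert $\Lambda^{-1}(s)\delta\br(s')$ into $\Lambda^{-1}(s)\Lambda(s')\bpsi(s')$, exactly as in the Ad-rewriting used for the first identity.

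The only genuinely delicate point is bookkeeping the arguments $s$ versus $s'$ carefully so that each instance of $\delta\Lambda$ or $\delta\br$ ends up multiplied by the correct $\Lambda^{-1}$ to produce $\widehat{\bsigma}$ or $\bpsi$ at the correct base point; the Ad factor in the first formula is precisely the residue of this reshuffling. No integration by parts or use of dynamics is needed, so this is a purely algebraic lemma whose only subtlety is keeping the two-point structure straight.
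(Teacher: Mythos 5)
Your proposal is correct and follows essentially the same route as the paper's proof: both identities are obtained by the Leibniz rule together with $\delta(\Lambda^{-1})=-\Lambda^{-1}\delta\Lambda\,\Lambda^{-1}$ and the insertion of $\Lambda(s')\Lambda^{-1}(s')$ to turn the variations at $s'$ into $\widehat{\bsigma}(s')$ and $\bpsi(s')$, with the $\operatorname{Ad}$ factor arising from the left multiplication by $\Lambda^{-1}(s')\Lambda(s)$. The sign you flag on the last term is a genuine discrepancy in the paper: the direct computation gives $-\Lambda^{-1}(s)\delta\br(s)=-\bpsi(s)$, so the $+\bpsi(s)$ in the statement (copied verbatim into the paper's proof) is a typo rather than an error on your part.
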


\begin{proof}  The first variational formula is calculated directly, as
\begin{eqnarray*}
\Lambda^{-1}(s')\Lambda(s) \lp\delta \lp\Lambda^{-1}(s)\Lambda(s')\rp\rp &=& \Lambda^{-1}(s')\Lambda(s)\lp \Lambda^{-1}(s)\de\Lambda(s')\rp \\
&& \qquad - \Lambda^{-1}(s')\Lambda(s)\lp \Lambda^{-1}(s)\de\Lambda(s)\Lambda^{-1}(s)\Lambda(s')\rp\\
&=& -{\rm Ad}_{\Lambda^{-1}(s')\Lambda(s)} \widehat{\bsigma} (s) + \widehat{\bsigma}(s')
\,.
\end{eqnarray*}
The second variational formula follows similarly from a direct calculation,
\begin{eqnarray*}
\de \lp \Lambda^{-1}(s)\lp \br(s') - \br(s)\rp\rp
&=& - \Lambda^{-1}(s)\de \Lambda(s) \Lambda^{-1}(s)\lp \br(s') - \br(s)\rp
\\ && \qquad + \Lambda^{-1}(s)\lp \de\br(s') - \de\br(s)\rp\\
&=& -\widehat{\bsigma}(s)\Lambda^{-1}(s)\lp \br(s') - \br(s)\rp + \Lambda^{-1}(s)\Lambda(s')\bpsi(s') + \bpsi(s)
\end{eqnarray*}
which proves the lemma.
\end{proof}

\begin{thm} The equations that arise from the variational principle with the nonlocal action
\begin{eqnarray*}
S_{np} &=& \iiint U(\xi, \bkappa, \bGam)  \mbox{d}s  \mbox{d}s' \mbox{d}t + \iint\bm\cdot \lp \Lambda^{-1}(s)\br'(s) - \bGam\rp  \mbox{d}s \mbox{d}t\\
&& \quad
+\iiint \Big( X \cdot \lp \Lambda^{-1}(s)\Lambda(s') - \xi\rp + \bK\cdot \lp \Lambda^{-1}(s)\lp \br(s') - \br(s)\rp - \bkappa\rp \Big)  \mbox{d}s  \mbox{d}s' \mbox{d}t
\end{eqnarray*}
are given by:
\[
X = \frac{\partial U}{\partial \xi} 
\,, \qquad
\bK =  \frac{\partial U}{\partial \bkappa} 
\,, \qquad
\bm =  \frac{\partial U}{\partial \bGam} 
\,,
\]
and
\[
 \bGam \times \bm = \int \lp \xi(s,s') X(s',s) - X(s,s')\xi^{-1}(s,s')  +  \bK(s,s') \times \bkappa(s,s')\rp ds',
\]

\[
\bm' + \bOm \times \bm = \int \lp \xi(s,s')\bK(s',s) - \bK(s,s')\rp ds',
\]
together with the constraints,
\[
\xi = \Lambda^{-1}(s)\Lambda(s')
\,, \qquad
\bkappa = \Lambda^{-1}(s)\lp \br(s') - \br(s)\rp
\,, \qquad
\bGam = \Lambda^{-1}(s)\br'(s)
\,.
\]
\end{thm}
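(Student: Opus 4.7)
The plan is to apply Hamilton's principle $\delta S_{np}=0$ to the augmented action by taking independent variations of all eight fields $\xi$, $\bkappa$, $\bGam$, $X$, $\bK$, $\bm$, $\Lambda$, $\br$, following the same strategy as in Theorem~\ref{HamPont-thm} but now invoking Lemmas~\ref{lamvar} and~\ref{lemvarnonlocal} to compute the variations of the constrained quantities $\Lambda^{-1}(s)\br'(s)$, $\Lambda^{-1}(s)\Lambda(s')$, and $\Lambda^{-1}(s)(\br(s')-\br(s))$.

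Six of the eight fields are disposed of immediately: varying $\xi$, $\bkappa$, $\bGam$ in the integrand $U$ and in the paired Lagrange-multiplier terms reads off the identities $X=\partial U/\partial\xi$, $\bK=\partial U/\partial\bkappa$, $\bm=\partial U/\partial\bGam$, while the conjugate variations of the multipliers $X$, $\bK$, $\bm$ recover the three constraints. The remaining two variations, with respect to $\Lambda(s)$ and $\br(s)$, I would parametrize by $\widehat\bsigma(s)=\Lambda^{-1}(s)\delta\Lambda(s)\in\mathfrak{so}(3)$ and $\bpsi(s)=\Lambda^{-1}(s)\delta\br(s)\in\mathbb{R}^3$, exactly as in Theorem~\ref{HamPont-thm}. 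Lemma~\ref{lamvar} yields $\delta(\Lambda^{-1}\br')=\bpsi'-\widehat\bsigma\,\bGam+\widehat\bOm\,\bpsi$, whose pairing with $\bm$ contributes, after a single integration by parts in $s$, the terms $-\bm'-\bOm\times\bm$ to the $\bpsi(s)$-coefficient and $-\bGam\times\bm$ to the $\widehat\bsigma(s)$-coefficient. Lemma~\ref{lemvarnonlocal} then handles the two double-integral constraints, producing additional contributions distributed between variations at $s$ and at $s'$.

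Collecting coefficients is the crux. For every contribution proportional to $\widehat\bsigma(s')$ or $\bpsi(s')$, I would swap the dummy variables $s\leftrightarrow s'$ and invoke the transposition identity $\xi(s,s')^T=\xi(s,s')^{-1}=\xi(s',s)$ from \eqref{xi-trans} so that every term is ultimately paired with $\widehat\bsigma(s)$ or $\bpsi(s)$. This swap is precisely what produces the asymmetric combinations $\xi(s,s')X(s',s)-X(s,s')\xi^{-1}(s,s')$ and $\xi(s,s')\bK(s',s)-\bK(s,s')$ appearing on the right-hand sides; the $\bK\times\bkappa$ term is picked up from the $-\widehat\bsigma(s)\,\Lambda^{-1}(s)(\br(s')-\br(s))$ piece of Lemma~\ref{lemvarnonlocal} after pairing with $\bK$ and invoking the constraint $\Lambda^{-1}(s)(\br(s')-\br(s))=\bkappa(s,s')$. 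Requiring the coefficients of the arbitrary variations $\widehat\bsigma(s)$ and $\bpsi(s)$ to vanish then delivers the two dynamical equations. The main obstacle is bookkeeping: because $X=\partial U/\partial\xi$ is a cotangent vector at the group element $\xi\in SO(3)$ rather than a Lie-algebra element, one must carefully extract via the hat map the skew-symmetric component that genuinely pairs against $\widehat\bsigma(s)$, and the swap $s\leftrightarrow s'$ together with $\xi^T=\xi^{-1}$ must be applied at precisely the right step so that the $X$ and $\bK$ contributions package cleanly into the antisymmetric matrix combinations stated in the theorem.
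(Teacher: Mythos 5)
Your proposal is correct and follows essentially the same route as the paper's own proof: vary the multipliers $X$, $\bK$, $\bm$ to recover the constraints, vary $\xi$, $\bkappa$, $\bGam$ to read off the functional-derivative relations, then substitute the variations from Lemmas \ref{lamvar} and \ref{lemvarnonlocal} and collect the coefficients of $\widehat{\bsigma}(s)$ and $\bpsi(s)$. The extra detail you supply — the $s\leftrightarrow s'$ dummy-variable swap with the transposition identity \eqref{xi-trans}, and the need to extract the skew-symmetric part of $\frac{\partial U}{\partial \xi}\,\xi^T$ — is exactly the bookkeeping the paper carries out (in the parallel Euler--Poincar\'e computation leading to \eqref{Zdef}) but leaves implicit in this proof.
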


\begin{proof}  The proof is obtained by substituting the variations given in Lemma \ref{lemvarnonlocal} into the Hamilton's principle for the action in the statement of the theorem.  Variations in $X$, $\bK$ and $\bm$ yield the constraints,
\[
\xi = \Lambda^{-1}(s)\Lambda(s')
\,, \qquad
\bkappa = \Lambda^{-1}(s)\lp \br(s') - \br(s)\rp
\,, \qquad
\bGam = \Lambda^{-1}(s)\br'(s)
\,.
\]

Variations in $\xi$, $\bkappa$ and $\bGam$ yield the relationships
\[
X = \frac{\partial U}{\partial \xi} 
\,, \qquad
\bK =  \frac{\partial U}{\partial \bkappa} 
\,, \qquad
\bm =  \frac{\partial U}{\partial \bGam} 
\,.
\]

Finally, the variations proportional to $\widehat{\bsigma}(s)$ and $\bpsi(s)$ yield
\[
 \bGam \times \bm = \int \lp \xi(s,s') X(s',s) - X(s,s')\xi^{-1}(s,s')  +  \bK(s,s') \times \bkappa(s,s')\rp ds'
\]

and
\[
\bm' + \bOm \times \bm = \int \lp \xi(s,s')\bK(s',s) - \bK(s,s')\rp ds'
\,,
\]
respectively.
\end{proof}

\medskip

We may combine these nonlocal terms and the local part of the equations to produce the full set of equations. These are given by
\begin{align*}\label{hampont1}
\dot{\boldsymbol{\pi}} + \bom\times\boldsymbol{\pi} +\boldsymbol{\Pi}' + \bOm\times \boldsymbol{\Pi}
+ \bgam\times\boldsymbol{\mu}
+& \bGam\times \lp \mathbf{M} + \bm\rp
+ \brho\times\mathbf{R}\\
&= \int \lp \bK(s,s') \times \bkappa(s,s')
+ \bZ(s,s')\rp ds'
\end{align*}
and
\begin{equation*}\label{hampont2}
\dot{\boldsymbol{\mu}} + \bom\times \boldsymbol{\mu} + \lp\mathbf{M}+\bm\rp' + \bOm\times \lp\mathbf{M}+\bm\rp - \mathbf{R} = \int \lp \xi(s,s')\bK(s',s)- \bK(s,s')\rp ds'
 \,,
\end{equation*}
where one defines
\begin{equation}
\widehat\bZ(s,s') :=  \xi(s,s') X(s',s) -X(s,s')\xi^{-1}(s,s')
\,,
\label{Zdef1}
\end{equation}
denoted as $\widehat\bZ$ since the right hand side of this equation is in $\mathfrak{so}(3)$.

We may now use these functional-derivative relations to express the equations of motion in terms of the reduced Lagrangian, $l=l_{loc}+l_{np}$.  The functional-derivative relations obtained in the Hamilton-Pontryagin approach are
\begin{eqnarray*}
\mathbf{R} = \frac{\delta l_{loc} }{\delta \brho} \,, &\qquad&
\boldsymbol{\pi} = \frac{\delta l_{loc} }{\delta \bom} \,,\\
\boldsymbol{\Pi} = \frac{\delta l_{loc} }{\delta \bOm} \,, &\qquad&
\boldsymbol{\mu} = \frac{\delta l_{loc} }{\delta \bgam}\,,\\
\mathbf{ M} = \frac{\delta l_{loc} }{\delta \bGam} \,, &\qquad&
X = \dede{l_{np}}{\xi} \,,\\
\bK = \dede{l_{np}}{\bkappa} \,, &\qquad&
\bm = \dede{\big( l_{loc}+l_{np}\big)}{\bGam}.
\end{eqnarray*}

Substituting these relations into the equations of motion above gives the following equations of motion for the charged strand.

\begin{align}\label{hpvarsig}
\lp \prt_t + \bom\times\rp\dede{l_{loc}}{\bom} + \lp\prt_s + \bOm\times\rp&\dede{l_{loc}}{\bOm}= \dede{l_{loc}}{\bgam}\times\bgam + \dede{\lp l_{loc}+l_{np}\rp }{\bGam}\times\bGam + \dede{l_{loc}}{\brho}\times\brho \nonumber\\
&\quad +\int \lp \frac{\partial U}{\partial \bkappa}(s,s')\times\bkappa(s,s') +  \bZ(s,s')\rp ds'\,,
\end{align}
\begin{align}\label{hpvarpsi}
\lp \prt_t + \bom\times\rp\dede{l_{loc} }{\bgam} &+ \lp\prt_s + \bOm\times\rp\dede{\lp l_{loc}+l_{np} \rp }{\bGam}\nonumber \\
&\quad= \dede{l_{loc}}{\brho} + \int\lp \xi(s,s')\frac{\partial U}{\partial \bkappa}(s',s) -  \frac{\partial U}{\partial \bkappa}(s,s')\rp ds'.
\end{align}

The term $\widehat\bZ(s,s')$ is the contribution from the nonlocal part of the Lagrangian that we have sought. 
\begin{remark}$\quad${\rm 

\begin{itemize}
\item
The dynamical equations \eqref{hpvarsig} and \eqref{hpvarpsi} must be augmented by the advection conditions  \eqref{rhokin}, \eqref{kincond} and \eqref{kincondom} in order to close the system.
\item
The resulting system of equations  describes an elastic filament with  \emph{two nonlocal} additional components, or degrees of freedom, compared to the ordinary Kirchhoff filament (to which the system reduces, when $\brho$ and $\xi$  are absent).
\item
The two additional (nonlocal) degrees of freedom in $\brho$ and $\bXi$ (with $\boldsymbol{\widehat{\Xi}}=\xi^{-1}{\partial \xi/\partial s}$) will produce an important effect that will distinguish the behavior of this system from that of the ordinary Kirchhoff filament. Namely, the presence of the two additional equations for $\brho$ and $\bXi$ \emph{raises the order}  of the equation set. In turn, the increase in differential order of the system will produce additional modes of excitation for the waves that will propagate along the filament when the system is linearized around the static solutions.
\end{itemize}
}
\end{remark}

\paragraph{Summary.} Equations \eqref{hpvarsig}, \eqref{hpvarpsi}, \eqref{rhokin}, \eqref{kincond} and \eqref{kincondom} represent the generalization of the Kirchhoff model that we have sought. As we shall see in the next section, under a certain transformation of variables this model reduces to a conservation law formulated in terms of coadjoint action on $\mse(3)$ Lie algebras.

\subsection{A modified Euler-Poincar\'e approach}
\label{sec:Euler-Poincare}

The Euler-Poincar\'e approach is based on applying Hamilton's variational principle to the symmetry-reduced Lagrangian and constraining the variations properly. While this will be yet another way of deriving equations \eqref{hpvarsig}, \eqref{hpvarpsi}, we believe that such a ``bare hands'' derivation will benefit understanding, as it represents a direct and explicit derivation of those equations of motion. See \cite{MaRa2002} and \cite{Ho2008} for an introduction to the
classical Euler-Poincar\'e approach. Some calculations in this section overlap with those in Section~\ref{sec:Hamilton-Pontryagin}. Nonetheless, we have chosen to present them here for completeness of exposition.

\subsubsection{Variations: Definitions}\label{var-defs}
Let us compute variations of $\brho$, $\bom$, $\bgam$, $\bOm$ and $\bGam$. We proceed by first computing,
\begin{equation}
\delta \brho =
- \Lambda^{-1} \delta \Lambda \Lambda^{-1} \br
+ \Lambda^{-1} \delta \br
=
- \Sigma \brho +\bpsi
=
- \bsigma \times \brho + \bpsi
=  \brho \times \bsigma  + \bpsi
\,,
\label{rhovar}
\end{equation}
where we have defined the variational quantities
\begin{eqnarray}
\Sigma= \Lambda^{-1} \delta \Lambda
\,,
\label{sigmadef}
\\
\bpsi=\Lambda^{-1} \delta \br
\,.
\label{psidef}
\end{eqnarray}
Next, we compute the space and time derivatives of $\Sigma$ and $\bpsi$ along the curve.
We have the space derivative,
\begin{equation}
\frac{\partial \bpsi}{ \partial s} =
 - \Lambda^{-1} \Lambda'  \Lambda^{-1} \delta \br + \Lambda^{-1} \delta  \br '
 = - \Omega \bpsi +  \Lambda^{-1} \delta  \br ' =
 - \bOm \times \bpsi +  \Lambda^{-1} \delta  \br ' \,,
 \label{psisderiv}
\end{equation}
and the time derivative,
\begin{equation}
\frac{\partial \bpsi}{ \partial t} =
 - \Lambda^{-1} \dot{\Lambda}  \Lambda^{-1} \delta \br + \Lambda^{-1} \delta  \dot{\br}
 = - \omega \bpsi +  \Lambda^{-1} \delta  \dot{\br} =
 - \bom \times \bpsi +  \Lambda^{-1} \delta  \dot{\br} \,  .
 \label{psitderiv}
\end{equation}
Analogously, for the space derivative of $\Sigma$,
\begin{equation}
\frac{\partial \Sigma}{ \partial s} =
 - \Lambda^{-1} \Lambda'  \Lambda^{-1} \delta \Lambda + \Lambda^{-1} \delta  \Lambda'
 =
  - \Omega \Sigma  + \Lambda^{-1} \delta  \Lambda ' \,,
 \label{sigmasderiv}
\end{equation}
while the time derivative of $\Sigma$ is computed as follows:
\begin{equation}
\frac{\partial \Sigma}{ \partial t} =
 - \Lambda^{-1} \dot{\Lambda}  \Lambda^{-1} \delta \Lambda + \Lambda^{-1} \delta  \Lambda'
 =
  - \omega \Sigma  + \Lambda^{-1} \delta \dot{ \Lambda }
  \,.
 \label{sigmatderiv}
\end{equation}

Now we are ready to compute the variations $\delta \bgam$, $\delta \bGam$, $\delta \omega$ and  $\delta \Omega$. The first of these is
\[
\delta \bgam =
  - \Lambda^{-1} \delta \Lambda \Lambda^{-1} \dot{\brho} +
\underbrace{\Lambda^{-1} \delta \dot{\brho} }
_{\mbox{use  (\protect{\ref{psitderiv}})}}
=
 -\Sigma \bgam + \omega \bpsi +\frac{\partial \bpsi}{\partial t}  \, ,
\]
so in vector form,
\begin{equation}
\delta \bgam =
 \bgam \times \bsigma + \bom \times \bpsi +\frac{\partial \bpsi}{\partial t}
 \, .
 \label{gammavar}
\end{equation}
Likewise,
\[
\delta \bGam =
  - \Lambda^{-1} \delta \Lambda \Lambda^{-1} \brho' +
\underbrace{\Lambda^{-1} \delta \rho' }
_{\mbox{use  (\protect{\ref{psisderiv}})}}
=
 -\Sigma \bGam + \Omega \bpsi +\frac{\partial \bpsi}{\partial s}  \, ,
\]
which has the vector form,
\begin{equation}
\delta \bGam =
 \bGam \times \bsigma + \bOm \times \bpsi +\frac{\partial \bpsi}{\partial s}
 \, .
 \label{Gammavar}
\end{equation}
Next,
\[
\delta \omega =
  - \Lambda^{-1} \delta \Lambda \Lambda^{-1} \dot{\Lambda} +
\underbrace{\
\Lambda^{-1} \delta \dot{\Lambda}\
 }_{\mbox{use  (\protect{\ref{sigmatderiv}})}} 
= 
 -\Sigma \omega + \omega \Sigma +\frac{\partial \Sigma}{\partial t} 
 =
 [\omega, \Sigma]+ \frac{\partial \Sigma}{\partial t}
 \, ,
\]
so expressing these formulas in terms of vectors yields
\begin{equation}
\delta \bom =
 \bom  \times \bsigma
 +
 \frac{\partial \bsigma}{\partial t}
 \, .
 \label{omegavar}
\end{equation}
Finally,
\[
\delta \Omega =
  - \Lambda^{-1} \delta \Lambda \Lambda^{-1} \Lambda ' +
\underbrace{\
\Lambda^{-1} \delta \Lambda' \
}_{\mbox{use  (\protect{\ref{sigmasderiv}})}}
=
 -\Sigma \Omega + \Omega \Sigma +\frac{\partial \Sigma}{\partial s} =
 [\Omega, \Sigma]+ \frac{\partial \Sigma}{\partial s} \, ,
\]
so, again, expressing in terms of vectors leads to
\begin{equation}
\delta \bOm =
 \bOm  \times \Sigma +\frac{\partial \Sigma}{\partial s}
 \, .
 \label{Omegavar}
\end{equation}
Finally, the variation of $\xi(s,s')$ is given by
\begin{equation}
\lp \xi^{-1} \delta \xi(s,s') \rp  =-{\rm Ad}_{\xi^{-1}(s,s')} \Sigma (s) + \Sigma(s') \, .
\label{xivar3}
\end{equation}

\subsubsection{Derivation of the equations of motion}
Suppose now we want to compute variations of the reduced energy Lagrangian $l$ which is a functional of
$(\brho, \bgam, \bGam, \omega, \Omega)$.
From \eqref{LiePoissondist} we see that
\begin{align}
d_{k,m}(s,s')
=&
\big|
\Lambda^{-1} (s) \br(s,t) - \Lambda^{-1}(s,t) \br(s',t)+   \boldeta_k(s) -\xi(s,s') \boldeta_m (s')
\big|
\nonumber
\\
=&
\big|
\bkappa(s, s')+   \boldeta_k(s) -\xi(s,s') \boldeta_m(s')
\big|
\, ,
\label{dist2}
\end{align}
where we have defined
\begin{equation}
\bkappa(s,s') = \Lambda^{-1} (s)\lp  \br(s') - \br(s) \rp
=  \xi(s,s') \brho(s')- \brho(s) 
\label{kappadef}
\, .
\end{equation}
The variation of $\bkappa$ is then given by
\begin{align}
\delta \bkappa(s,s') =
& \Sigma(s)  \bkappa(s,s')-\bpsi(s) 
+ \xi(s,s') \bpsi(s')
\nonumber
\\
= & \, \bsigma(s)  \times \bkappa(s,s')
-\bpsi(s)+\xi(s,s') \bpsi(s') \, .
\label{deltakappa}
\end{align}

Let us first define the Lagrangian $l$ as the sum of a `local' part $l_{loc}$ and a nonlocal part $l_{np}$, according to
\begin{align}
l(\bom, &\bgam, \bOm,  \bGam, \brho, \xi, \bkappa):=
l_{loc}+l_{np}
\nonumber
\\
& =l_{loc}(\bom, \bgam, \bOm,  \bGam, \brho)
+ \iint
U \lp
\bkappa(s,s'), \xi(s,s'), \bGam(s), \bGam(s') 
\rp
\mbox{d} s
\mbox{d} s'
\, .
\label{totalL}
\end{align}

\emph{Note.} From now on, we assume that the nonlocal part of the potential energy $U$  is a function of the two variables $\bkappa(s,s')$ and $\xi(s,s')$, as well as $\bGam$, since $s$ is not necessarily the arc length. In particular, for a potential energy depending on the distance $d_{k,m}$, the variables  $\bkappa$ and $\xi$ enter in the linear combination defined by \eqref{dist2}. In principle, the potential energy could have chosen to be an arbitrary functional of $\Lambda^{-1}(s)  \br(s)$, $\Lambda^{-1}(s) \br(s')$ and $\xi(s,s')$. Euler-Poincar\'e methods would be directly applicable to these functionals as well.
\smallskip

The equations of motion are computed from the stationary action principle $\delta S=0$, with $S=\int l\,dt$ and $l=l_{loc}+l_{np}$ in equation \eqref{totalL}. We have
\begin{align}
\delta S= & \int
\left<
\frac{\delta l_{loc}}{\delta \brho}
\, , \,
\delta \brho
\right>
+
\left<
\frac{\delta l_{loc}}{\delta \bgam}
\, , \,
\delta \bgam
\right>
+
\left<
\frac{\delta ( l_{loc}+l_{np}) }{\delta \bGam}
\, , \,
\delta \bGam
\right>
+
\left<
\frac{\delta l_{loc}}{\delta \omega}
\, , \,
\delta \omega
\right>
\\
&
+
\left<
\frac{\delta l_{loc}}{\delta \Omega}
\, , \,
\delta \Omega
\right>
\nonumber
+
\left<
\frac{\delta l_{np}}{\delta \bkappa}
\, ,
\delta \bkappa
\right>
+
\left<
\xi^{-1} \frac{\delta l_{np}}{\delta \xi}
\, ,
\xi^{-1} \delta \xi
\right> \mbox{d} t =0 
\, ,
\label{deltal0}
\end{align}
where $\langle\,\cdot\,,\,\cdot\, \rangle =  \int (\,\cdot\,,\,\cdot)_{\mR^3} ds $ represents $L^2$ pairing in the filament variable $s$.
We may now substitute $\delta \brho$ from \eqref{rhovar}, $\delta \bgam$ from \eqref{gammavar} and $\delta \bOm$ from \eqref{Omegavar}.
We have
\begin{equation}
\left<
\frac{\delta l_{loc}}{\delta \brho}
\, , \,
\delta \brho
\right>
=
\left<
\frac{\delta l_{loc}}{\delta \brho}
\, , \,
\brho \times \bsigma  +\bpsi
\right>
=
\left<
\frac{\delta l_{loc}}{\delta \brho} \times \brho
\, , \,
\bsigma
\right>
+
\left<
\frac{\delta l_{loc}}{\delta \brho}
\, , \,
\bpsi
\right>
\, .
\label{dedrho}
\end{equation}

For $\delta \bkappa$  we obtain
\begin{align}
\left<
\frac{\delta l_{np}}{\delta \bkappa}
\, , \,
\delta \bkappa
\right> = &
\int \Big<
 \int \frac{\partial U}{\partial \bkappa} (s,s') \times \bkappa(s,s')  \mbox{d} s'
\, , \,
\bsigma(s)
\Big>
\nonumber
\\
& +
\Big<
 \int
 \Big(
\xi(s,s') \frac{\partial U}{\partial \bkappa} (s',s)-\frac{\partial U}{\partial \bkappa} (s,s') 
\Big)
\mbox{d} s'
\, , \,
\bpsi(s)
\Big>
\, .
\label{dlnpdrhovar}
\end{align}
Next,
\begin{align}
&\Bigg<
\frac{\delta \big( l_{loc}+l_{np} \big) }{\delta \bGam}
\, , \,
\delta \bGam
\Bigg>
=
\left<
\frac{\delta\big( l_{loc}+l_{np} \big)}{\delta \bGam}
\, , \,
\bGam \times \bsigma + \bOm \times \bpsi +\frac{\partial \bpsi}{\partial s}
\right>
\nonumber
\\
&
=
\left<
\frac{\delta \big( l_{loc}+l_{np} \big)}{\delta \bGam} \times \bGam
\, , \,
\bsigma
\right>
+
\left<
\frac{\delta \big( l_{loc}+l_{np} \big)}{\delta \bGam} \times \bOm
-\frac{\partial}{\partial s} \frac{\delta \big(l_{loc}+l_{np} \big) }{\delta \bGam}
\, , \,
\bpsi
\right>
\, ,
\label{dedGamma}
\end{align}
and
\begin{align}
\left<
\frac{\delta l_{loc}}{\delta \bgam}
\, , \,
\delta \bgam
\right>
=
&
\left<
\frac{\delta l_{loc}}{\delta \bgam}
\, , \,
 \bgam \times \bsigma + \bom \times \bpsi +\frac{\partial \bpsi}{\partial t}
\right>
\nonumber
\\
 = &
\left<
\frac{\delta l_{loc}}{\delta \bgam} \times \bgam
\, , \,
\bsigma
\right>
+
\left<
\frac{\delta l_{loc}}{\delta \bgam} \times \bom
-\frac{\partial}{\partial t} \frac{\delta l_{loc}}{\delta \bgam}
\, , \,
\bpsi
\right>
\, .
\label{dedgamma}
\end{align}

Variations in $\bom$ and $\bOm$ give, respectively, after integrating by parts,
\begin{align}
\int
\left<
\frac{\delta l_{loc}}{\delta \bom}
\, , \,
\delta \bom
\right>
dt
&=
\int
\left<
\frac{\delta l_{loc}}{\delta \bom}
\, , \,
\bom \times \bsigma  +\frac{\partial \bsigma}{\partial t}
\right> 
dt  \nonumber \\
&=
\int
\left<
\frac{\delta l_{loc}}{\delta \bom} \times \bom
-\frac{\partial}{\partial t} \frac{\delta l_{loc}}{\delta \bom}
\, , \,
\bsigma
\right>
dt
\, ,
\label{dedom}
\end{align}
and
\begin{equation}
\left<
\frac{\delta l_{loc}}{\delta \bOm}
\, , \,
\delta \bOm
\right>
=
\left<
\frac{\delta l_{loc}}{\delta \bOm}
\, , \,
\bOm \times \bsigma  +\frac{\partial \bsigma}{\partial s}
\right>
=
\left<
\frac{\delta l_{loc}}{\delta \bOm} \times \bOm
-\frac{\partial}{\partial s} \frac{\delta l_{loc}}{\delta \bOm}
\, , \,
\bsigma
\right>
\, .
\label{dedOm}
\end{equation}
Finally, one computes the variations in $\xi$ as follows:
\begin{align}
&\int
\left<
\xi^{-1} \frac{\delta l_{np}}{\delta \xi}
\, , \, 
\xi^{-1} \delta \xi
\right>
\mbox{d} s' \nonumber \\
& \qquad \qquad =
\int
\left< \xi^{-1}(s,s') \frac{\partial U}{\partial \xi} (s,s')
\, ,
-{\rm Ad}_{\xi^{-1}(s,s')} \Sigma (s) + \Sigma(s')
\right>_{\mso(3)}
\mbox{d} s'
\,,
\label{dldxi}
\end{align}
where $\left\langle\cdot , \cdot \right\rangle_{\mso(3)}: 
\mso(3)^\ast \times \mso(3)\rightarrow \mathbb{R}$ is the real-valued pairing between the Lie algebra $\mso(3)$ and its dual $\mso(3)^\ast$.

Substitution of \eqref{dedrho},\eqref{dedgamma}, and \eqref{dedom}
gives an expression for $\delta S$ that is linear in $\bsigma$ and $\bpsi$. Collecting those terms when imposing $\delta S =0$ implies from the term proportional to $\bsigma$ that:
\begin{align}
\lp \frac{\partial}{\partial t} \frac{\delta l_{loc}}{\delta \bom} \right.
&+
\left.
\bom \times \frac{\delta l_{loc}}{\delta \bom}
\rp
+
\lp
\frac{\partial}{\partial s} \frac{\delta l_{loc}}{\delta \bOm}
+
\bOm \times \frac{\delta l_{loc}}{\delta \bOm}
\rp
= \frac{\delta l_{loc}}{\delta \bgam} \times \bgam
+
\frac{\delta \lp l_{loc}+l_{np} \rp }{\delta \bGam} \times \bGam
\nonumber \\
&
+
\frac{\delta l_{loc}}{\delta \brho} \times \brho
+\int \left( \frac{\partial U}{\partial \bkappa} (s,s') \times \bkappa (s,s')
+
 \mathbf{Z}(s,s') \right)   
\mbox{d} s'
\,,
\label{EPsigma}
\end{align}
where the term $\bZ(s,s')$ is the vector given by
\begin{equation}
\widehat\bZ(s,s') = \xi(s,s')
\lp
\frac{\partial U}{\partial \xi} (s,s')
\rp ^T
-\frac{\partial U}{\partial \xi} (s,s')
\xi^T(s,s')
\,,
\label{Zdef}
\end{equation}
which is the same quantity that we found using the Hamilton-Pontryagin approach. \smallskip

Formula \eqref{Zdef} is computed from the variation
in \eqref{dldxi} as follows

\begin{align}
&\iint
\left< \xi^{-1}(s,s') \frac{\partial U}{\partial \xi} (s,s')
\, ,
-{\rm Ad}_{\xi^{-1}(s,s')} \Sigma (s) + \Sigma(s')
\right>_{\mso(3)}
\mbox{d} s \mbox{d} s'
\nonumber
\\
&=
\iint
 \left< -
{\rm Ad}^*_{\xi^{-1}(s,s')}  \xi^T(s,s') \frac{\partial U}{\partial \xi} (s,s')
+
 \xi^T(s',s) \frac{\partial U}{\partial \xi} (s',s)
\, ,
\Sigma (s)
\right>_{\mso(3)}
\mbox{d} s \mbox{d} s'
\nonumber
\\
&=
\iint
 \left< -
\xi(s,s')  \xi^T(s,s') \frac{\partial U}{\partial \xi} (s,s') \xi^T(s,s')
+
 \xi(s,s') \left( \frac{\partial U}{\partial \xi} (s,s') \right)^T
, \,
\Sigma (s)
\right>_{\mso(3)}
\mbox{d} s \mbox{d} s'
\nonumber
\\
&= \iint \left< -
\frac{\partial U}{\partial \xi} (s,s') \xi^T(s,s')
+
 \xi(s,s') \left( \frac{\partial U}{\partial \xi} (s,s') \right)^T, \,
\Sigma (s) \right>_{\mso(3)}
\mbox{d} s \mbox{d} s' \, .
\end{align}
Here, we have used the fact that $\xi^T(s,s')=\xi^{-1}(s,s')$, and
$\xi(s',s)=\xi^{-1}(s',s)$.

\smallskip
Next, we collect the terms proportional to $\bpsi$ in order to close the system.   We find
\begin{align}
\lp \frac{\partial}{\partial t} \frac{\delta l_{loc}}{\delta \bgam}
\right.
+&
\left.
\bom \times \frac{\delta l_{loc}}{\delta \bgam}
\rp
+
\left(
\frac{\partial}{\partial s} \frac{\delta \lp  l_{loc}+l_{np} \rp }{\delta \bGam}
+
\bOm \times \frac{\delta \lp  l_{loc}+l_{np} \rp }{\delta \bGam}
\right)
\nonumber
\\
&
=\frac{\delta l_{loc}}{\delta \brho}
+
\int \left(
\xi(s,s') \frac{\partial U}{\partial \bkappa} (s',s)
-\frac{\partial U}{\partial \bkappa} (s,s')
\right) \mbox{d} s' \,.
\label{EPpsi}
\end{align}
\begin{remark}{\rm 
Equations \eqref{EPsigma} and \eqref{EPpsi} obtained by the Euler-Poincar\'e approach recover equations \eqref{hpvarsig} and \eqref{hpvarpsi}, respectively, from the Hamilton-Pontryagin approach.}
\end{remark}

\section{Conservation laws}
\label{sec:Conservation}

In order to elucidate the physical meaning of the somewhat complex-looking equations \eqref{hpvarsig} and \eqref{hpvarpsi}, we shall write them explicitly as conservation laws.  For this purpose, we invoke the following identities valid for any Lie group $G$. Given a smooth curve $g(t)\in G$, $\eta\in\mathfrak{g}$, and $\mu\in\mathfrak{g}^*$, we have
\begin{align}
&{\rm Ad}_{g^{-1}(t)} \frac{\partial }{\partial t} {\rm Ad}_{g(t)} \eta={\rm ad}_{\sigma(t)}\eta
\,,
\label{diffAd}\\
&{\rm Ad}^*_{g(t)} \frac{\partial}{\partial t}
{\rm Ad}^*_{g^{-1}(t)}\mu=- {\rm ad}^*_{\sigma(t)} \mu
\,,
\label{diffAdstar}
\end{align}
where $\sigma(t)=g^{-1}\dot g(t)\in\mathfrak{g}$ and $\operatorname{Ad}^*$ denotes the coadjoint action of $G$ on $\mathfrak{g}^*$ defined by
$\langle {\rm Ad}^*_g\mu\,,\,\eta\rangle:=\langle \mu\,,\,{\rm Ad}_g\eta\rangle$. Formula \eqref{diffAdstar} generalizes to a curve $\mu(t)$ as
\begin{equation}\label{diffAd2}
{\rm Ad}^*_{g(t)} \frac{\partial}{\partial t}
{\rm Ad}^*_{g^{-1}(t)}\mu(t)=\dot\mu(t)- {\rm ad}^*_{\sigma(t)} \mu(t).
\end{equation}

To derive the conservation form of equations  \eqref{hpvarsig} and \eqref{hpvarpsi}
we need to consider the group $G=SE(3)$ whose elements are denoted by $g=(\Lambda , \br )$. Consider the function $(\Lambda(s,t), \br(s,t) )$ defined on spacetime. Then we have
\begin{equation}
\sigma = (\Lambda,\br)^{-1}(\dot{\Lambda}, \dot {\br})=( \Lambda^{-1} \dot{\Lambda}, \Lambda^{-1} \dot {\br}) = (\bom, \bgam) \, .
\label{mut}
\end{equation}
Recall that the infinitesimal coadjoint action on $\mathfrak{se}(3)^*$ is
\begin{equation}\label{coad_se3}
{\rm ad}^*_{(\bom, \bgam)}(\bmu,\bbeta)=-(\bom\times\bmu+\bgam\times\bbeta,\bom\times\bbeta)
\,.
\end{equation}
Then, using equations \eqref{diffAd2} and \eqref{coad_se3} for the temporal dual Lie algebra elements $(\bmu,\bbeta)=\lp  \d l/ \d \bom\, , \, \d l/\d \bgam \rp $ yields
\begin{align}
&{\rm Ad}^*_{(\Lambda,\br)}
\frac{\partial}{\partial t} \lsb
 {\rm Ad}^*_{(\Lambda,\br)^{-1}}
 \lp
 \frac{\delta l_{loc}}{\delta \bom} \, , \,  \frac{\delta l_{loc}}{\delta \bgam}
 \rp
  \rsb \nonumber\\
&\qquad=\frac{\partial}{\partial t}
\lp
\frac{\delta l_{loc}}{\delta \bom} \, , \,  \frac{\delta l_{loc}}{\delta \bgam}
\rp
+
\lp
\bom \times \frac{\delta l_{loc}}{\delta \bom} +
\bgam \times  \frac{\delta l_{loc}}{\delta \bgam}
\, , \, \bom \times  \frac{\delta l_{loc}}{\delta \bgam}
\rp
\, .\label{timeAd}
\end{align}

For the derivative with respect to curve parametrization $s$, we need to remember that the nonlocal part of the potential depends on $\bGam$ as well.
Thus, we have
\begin{align}
&{\rm Ad}^*_{(\Lambda,\br)}\frac{\partial}{\partial s} \lsb
 {\rm Ad}^*_{(\Lambda,\br)^{-1}}
 \lp
 \frac{\delta l_{loc}}{\delta \bOm} \, , \,  \frac{\delta ( l_{loc}+l_{np})}{\delta \bGam}
 \rp
  \rsb\nonumber\\
&\qquad=\frac{\partial}{\partial s}
\lp
\frac{\delta l_{loc}}{\delta \bOm} \, , \,  \frac{\delta l_{loc}}{\delta \bGam}
\rp
+ 
\lp
\bOm \times \frac{\delta l_{loc}}{\delta \bOm} +
\bGam \times  \frac{\delta (l_{loc}+l_{np}) }{\delta \bGam}
\, , \,
 \bOm \times  \frac{\delta l_{loc}}{\delta \bGam}
\rp
\, .
\label{spaceAd}
\end{align}
Some additional identities derived below will be needed in treating the nonlocal part of the potential. 

First we deal with the nonlocal term by referring to equation \eqref{Zdef1}. This can be expressed as a formal derivative of the nonlocal part of the potential with respect to Lie algebra elements $\bOm$ and $\bGam$ as follows.
Note that there are only \emph{two} free variations $
\boldsymbol{\widehat{\Sigma}} = \Lambda^{-1} \delta \Lambda$ and $\bpsi=\Lambda^{-1} \delta \br$.
On the other hand, the nonlocal part of the Lagrangian depends on three variables
$\brho, \xi$, and $\bGam$. Thus, there must be a relation between the partial derivatives of the nonlocal part of the Lagrangian and the total derivatives with respect to $\bGam$ and $\bOm$. This relation is computed as follows.

Upon identifying coefficients of the free variations  $\bsigma\times=\Lambda^{-1} \delta \Lambda$ and $\bpsi= \Lambda^{-1} \delta\br$, the following identity relates different variational derivatives of the nonlocal potential $l_{np}$:
\begin{align}
\delta l_{np} =
\left\langle
\xi^{-1} \frac{\delta l_{np} }{\delta \xi}
\, , \,
\xi^{-1} \delta \xi
\right\rangle
\
+\
&
\left\langle
\frac{\delta l_{np} }{\delta \bkappa}
\, , \,
\delta \bkappa
\right\rangle
\
+\
\left\langle
\frac{\delta l_{np} }{\delta \bGam}
\, , \,
\delta \bGam
\right\rangle
\nonumber
\\
&
=
\left\langle
\left.
\frac{\delta l_{np} }{\delta \bGam}
\right|_{Tot}
\, , \,
\delta \bGam
\right\rangle
\
+\
\left\langle
\frac{\delta l_{np} }{\delta \bOm}
\, , \,
\delta \bOm
\right\rangle
\, .
\label{changevar}
\end{align}
We will discuss this point in detail in \S\ref{sec:recovering_EP}.
Here, the subscript on $ (\,\cdot\,) |_{Tot}$ denotes the \emph{total}  derivative with respect to $\bGam$.
Using expressions \eqref{xivar3} for $\xi^{-1} \delta \xi$,
\eqref{deltakappa} for $\delta \bkappa$, \eqref{Omegavar} for $\delta \bOm$ and
\eqref{gammavar} for $\delta \bGam$,  then collecting terms proportional to the free variation $\bsigma$
yields the following identity, which implicitly defines $\delta l_{np} / \delta \bOm$  in terms of known quantities,
\begin{align}
-\frac{\partial}{\partial s} \frac{\delta  l_{np}}{\delta \bOm}
 &- \bOm \times \frac{\delta  l_{np}}{\delta \bOm}
 =
\label{dlnpdOm}
 \\
& \int \frac{\partial U}{\partial \bkappa} (s,s') \times \bkappa (s,s')
 \,\mbox{d} s'
+
\int \mathbf{Z}(s,s')  \,\mbox{d} s'
\,,
\nonumber
\end{align}
where we have defined $\mathbf{Z}(s,s')$ according to \eqref{Zdef1}. Likewise, identifying terms multiplying $\bpsi$ gives
\begin{align}
-\frac{\partial}{\partial s}
\left. \frac{\delta  l_{np}}{\delta \bGam}
\right|_{Tot}
&- \bOm \times
\left. \frac{\delta  l_{np}}{\delta \bGam}
\right|_{Tot} =
-\frac{\partial}{\partial s}
 \frac{\delta  l_{np}}{\delta \bGam}
- \bOm \times
\frac{\delta  l_{np}}{\delta \bGam}
\label{dlnpdGam}
\\
& + \int \frac{\partial U}{\partial \bkappa} (s,s')
-
\xi(s,s') \frac{\partial U}{\partial \bkappa} (s',s)
 \,\mbox{d} s'
\,.
 \nonumber
\end{align}

Therefore, we conclude that equations \eqref{hpvarsig}, \eqref{hpvarpsi} are equivalent to the following equations expressed on $\mse^*(3)$ in conservative form using variations of the total Lagrangian, $l:=l_{loc}+l_{np}$:
\begin{eqnarray}
\frac{\partial}{\partial t}
\lsb
 {\rm Ad}^*_{(\Lambda,\br)^{-1}}
 \lp
 \frac{\delta l}{\delta \bom} \, , \,  \frac{\delta l}{\delta \bgam}
 \rp
  \rsb
 &+&
\frac{\partial}{\partial s} \lsb
 {\rm Ad}^*_{(\Lambda,\br)^{-1}}
 \lp
 \frac{\delta l}{\delta \bOm} \, , \,
 \left.  \frac{\delta l}{\delta \bGam} \right|_{Tot}
\, \rp
  \rsb
  \nonumber\\
  &=&
  {\rm Ad}^*_{(\Lambda,\br)^{-1}}
 \lp
  \frac{\delta l}{\delta \brho} \times \brho
  \, , \,
  \frac{\delta l}{\delta \brho}
  \rp
  \,.
\label{consgen}
\end{eqnarray}
Here, the components of 
\[
 {\rm Ad}^*_{(\Lambda,\br)^{-1}}
 \lp
 \frac{\delta l}{\delta \bom} \, , \,  \frac{\delta l}{\delta \bgam}
 \rp
\]
represent, respectively, the spatial angular momentum density and the spatial linear momentum density of the strand, whose center of mass lies along its centerline. The components of
\[
 {\rm Ad}^*_{(\Lambda,\br)^{-1}}
 \lp
  \frac{\delta l}{\delta \brho} \times \brho
  \, , \,
  \frac{\delta l}{\delta \brho}
  \rp = \lp 0, \Lambda   \frac{\delta l}{\delta \brho} \rp 
\]
are the external torques and forces. (See \eqref{Tffin} for the last simplification.) As mentioned above, only external forces arising from potentials are considered in this paper.  In principle, more general non-conservative forces and torques can be considered as well, but we shall leave this question for further studies.

\begin{remark} {\rm
For future reference, it is advantageous to write out the conservation law (\ref{consgen}) in convective form as 
\begin{equation}\label{Final_Euler_Poincare_equations}
\left\lbrace\begin{array}{l}
\displaystyle\lp \prt_t + \bom\times\rp\dede{l}{\bom} + \lp\prt_s + \bOm\times\rp\dede{l}{\bOm} +\brho\times\dede{l}{\brho}+\bGam\times\dede{l}{\bGam}+\bgam\times\dede{l}{\bgam}=0,\\
\displaystyle\lp \prt_t + \bom\times\rp\dede{l}{\bgam} + \lp\prt_s + \bOm\times\rp\dede{l}{\bGam}
 -\dede{l}{\brho}=0
 \,.
\end{array}\right.
\end{equation}
Here we have defined the total Lagrangian $l:=l_{loc}+l_{np}$, and all the variational derivatives are assumed to be the \emph{total} derivatives. 
Note that in these equations coincide precisely with the equations for the  purely elastic filaments derived in \cite{SiMaKr1988}.

We note that the variations with respect to $\bOm$ and $\bGam$ are computed implicitly in (\ref{dlnpdOm}, \ref{dlnpdGam}). 
To actually use these equations to explicitly describe nonlocal interactions, we must expand the derivatives with respect to $\xi$ and $\bkappa$ in  \eqref{Final_Euler_Poincare_equations}. However, we emphasize again that it is interesting that nonlocal interactions can be expressed so as to formally coincide with the equations for the purely elastic motion. See \S\ref{sec:recovering_EP} for a detailed discussion of this point. 
} 
\end{remark}

\section{Hamiltonian structure of the strand equations}
\label{sec:Hamiltonian}
It is useful to transform the Lagrangian dynamical equations into the Hamiltonian description, both to relate these equations to previous work on elastic rods and to elucidate further their mathematical structure. We start by Legendre transforming the total Lagrangian $l$ to the Hamiltonian,
\begin{equation}
h(\bmu,\bbeta,\bOm,\bGam,\brho)
=
\int (\bmu\cdot \bom + \bbeta\cdot\bgam)\,ds
-
l(\bom,\bgam,\bOm,\bGam,\brho)
\,,
\label{Leg-transf}
\end{equation}
where $\bom, \bgam$ are determined from the relations $\bmu = \delta l/ \delta\bom$ and $\bbeta= \delta l/\delta\bgam $ upon assuming that $l $ is hyperregular.
Then,
equations \eqref{rhotimederiv}, \eqref{kincond}, \eqref{kincondom}, and  \eqref{Final_Euler_Poincare_equations} may be expressed in \textit{Lie-Poisson form with three cocycles} as
\begin{equation}
\frac{\partial}{\partial t}
\left[
\begin{array}{c}
    \bmu
    \\
    \bbeta
    \\
    \bOm
    \\
    \bGam
    \\
    \brho
    \end{array}
\right]
\!=\!
\left[
\begin{array}{ccccc}
   \bmu\times
   &
   \bbeta\times
   &
   (\partial_s + \bOm\times)
   &
   \bGam\times
   &\
   \brho\times\
   \\
   \bbeta\times
   &
   0
      &
   0
   &
   (\partial_s + \bOm\times)
   &
   -\rm Id\
   \\
   (\partial_s + \bOm\times)
   &
   0
   &
   0
   &
   0
   &
   0
   \\
   \bGam\times
   &
   (\partial_s + \bOm\times)
   &
   0
   &
   0
   &
   0
   \\
   \brho\times
   &
   \rm Id
   &
   0
   &
   0
   &
   0
   \end{array}
\right]
\left[
\begin{array}{c}
   \delta h/\delta\bmu\\
   \delta h/\delta\bbeta\\
   \delta h/\delta \bOm\\
   \delta h/\delta\bGam\\
   \delta h/\delta\brho
   \end{array}
\right] .
\label{LP-Ham-struct-vec-se3}
\end{equation}
Note that $\bom = \delta h / \delta \bmu$ and $\bgam = \delta h / \delta \bbeta$.
The affine terms $\partial_s$ and $\rm Id$ arise from the cocycle appearing in the definition of the variables $\bOm,\bGam, \brho$ in \eqref{bundle.coords}; see also \eqref{rhodef_n_dimensional}. These equations produce the affine terms located in the matrix elements $\{\bmu,\,\bOm\}$, $\{\bbeta,\bGam\}$, and $\{\bbeta,\,\brho\}$.

This Hamiltonian matrix defines an \textit{affine Lie-Poisson bracket} on the dual of the semidirect product Lie algebra
\[
\mathcal{F}(I,\mse(3))\,\circledS\,\mathcal{F}(I,\mse(3)\times\mathbb{R}^3),
\]
where $\mse(3)=\mso(3)\,\circledS\,\mathbb{R}^3$, $I=[0,L]$, and
\[
(\bmu,\bbeta)\in \mathcal{F}(I,\mse(3))^*\quad\text{and}\quad (\bOm,\bGam,\brho)\in \mathcal{F}(I,\mse(3)\times\mathbb{R}^3)^*.
\]
The associated affine Lie-Poisson bracket reads
\begin{align}
\{f,g\}_{\_}(\bmu,\bbeta,\bOm,\bGam,\brho)
=&
-\int\bmu\cdot\left(
\dede{f}{\bmu}\times\dede{g}{\bmu}\right)
-\int
\bbeta\cdot\left(
 \dede{f}{\bbeta}\times\dede{g}{\bmu}
- 
\dede{g}{\bbeta}\times \dede{f}{\bmu}
\right)
\nonumber\\
&-\int\bOm\cdot\left(
\dede{f}{\bOm}\times\dede{g}{\bmu}
-
\dede{g}{\bOm}\times\dede{f}{\bmu}
\right)
\nonumber\\
&-\int\bOm\cdot\left(
\dede{f}{\bGam}\times\dede{g}{\bbeta}
- \dede{g}{\bGam}\times\dede{f}{\bbeta}
\right)
\nonumber\\
&
-\int
\bGam\cdot\left(
\dede{f}{\bGam}\times\dede{g}{\bmu}
-
\dede{g}{\bGam}\times\dede{f}{\bmu}
\right)
\label{convect-brkt} \\
&
-\int
\brho\cdot\left(
\dede{f}{\brho}\times\dede{g}{\bmu}
-
\dede{g}{\brho}\times\dede{f}{\bmu}
\right)
\nonumber \\
&+\int
\dede{f}{\bOm}\cdot\partial_s\dede{g}{\bmu}
+\dede{f}{\bGam}\cdot\partial_s\dede{g}{\bbeta}
+\dede{f}{\brho}\cdot\dede{g}{\bbeta}
\nonumber\\
&
- \int
\dede{g}{\bOm}\cdot\partial_s\dede{f}{\bmu}
+\dede{g}{\bGam}\cdot\partial_s\dede{f}{\bbeta}
+\dede{g}{\brho}\cdot\dede{f}{\bbeta}
\,.
\nonumber
\end{align}

The first line represents the Lie-Poisson bracket on the Lie algebra $\mathcal{F}(I,\mse(3))$. The first five lines represent the Lie-Poisson bracket on the semidirect product Lie algebra
\[
\mathcal{F}(I,\mse(3))\,\circledS\,\mathcal{F}(I,\mse(3)\times\mathbb{R}^3).
\]
The last two lines represent the affine terms due to the presence of a cocycle, as well as the canonical Poisson bracket in $(\brho,\, \bbeta)$. The Poisson bracket (\ref{convect-brkt}) is an extension to include $\brho$ of the Poisson bracket for the exact geometric rod theory of \cite{SiMaKr1988} in the convective representation. Remarkably, from a geometric point of view, this Hamiltonian structure is \emph{identical} to that of complex fluids \cite{Gay-Bara2007, Ho2001}. The reason for this will be explained in detail in Section~\ref{sec:affineEP}.

\section{The affine Euler-Poincar\'e and Lie-Poisson approaches}
\label{sec:affineEP}

This section explains how the equations of the charged strand may be obtained by affine Euler-Poincar\'e and affine Lie-Poisson reduction. This proves that the charged strand admits the same geometrical description as the complex fluids and spin systems.\smallskip

We begin by recalling from \cite{Gay-Bara2007} the theory of {\bfi affine Euler-Poincar\'e} and {\bfi Lie-Poisson reduction}. In contrast to \cite{Gay-Bara2007}, however, we consider here Lagrangians and Hamiltonians that are  \textit{left\/}-invariant, rather than right-invariant.

\subsection{Notations for semidirect products}\label{Notations}

Let $V$ be a vector space and assume that the Lie group $G$ acts on the \textit{left\/} by linear maps (and hence $G$ also acts on the left on the dual space $V^*$). As a set, the semidirect product $S=G\,\circledS\,V$
is the Cartesian product $S=G\times V$ whose group multiplication is given by
\[
(g_1,v_1)(g_2,v_2)=(g_1g_2,v_1+g_1v_2),
\]
where the action of $g\in G$ on $v\in V$ is denoted simply as $gv$.
The Lie algebra of $S$ is the semidirect product Lie algebra,
$\mathfrak{s}=\mathfrak{g}\,\circledS\,V$, whose bracket has the expression
\[
\operatorname{ad}_{(\xi_1,v_1)}(\xi_2,v_2)=[(\xi_1,v_1),(\xi_2,v_2)]=([\xi_1,\xi_2],\xi_1v_2-\xi_2v_1),
\]
where $\xi v$ denotes the induced action of $\mathfrak{g}$ on $V$, that is,
\[
\xi v:=\left.\frac{d}{dt}\right|_{t=0}\operatorname{exp}(t\xi)v\in
V.
\]
From the expression for the Lie bracket, it follows that for
$(\xi,v)\in\mathfrak{s}$ and $(\mu,a)\in\mathfrak{s}^*$ we have
\[
\operatorname{ad}_{(\xi,v)}^*(\mu,a)=(\operatorname{ad}^*_\xi\mu-v\diamond a,-\xi a)
\]
where $\xi a\in V^*$ and $v\diamond a\in\mathfrak{g}^*$ are given by
\[
\xi a:=\left.\frac{d}{dt}\right|_{t=0}\operatorname{exp}(t\xi)a\quad\text{and}\quad
\langle v\diamond a,\xi\rangle_\mathfrak{g}:=-\langle \xi a,v\rangle_V,
\]
and where $\left\langle\cdot , \cdot \right\rangle_ \mathfrak{g}: \mathfrak{g}
^\ast \times \mathfrak{g}\rightarrow \mathbb{R}$ and $\left\langle \cdot ,
\cdot
\right\rangle_V: V ^\ast \times V \rightarrow \mathbb{R}$ are the duality pairings. The coadjoint action of $S $ on $\mathfrak{s} ^\ast$ has the expression
\begin{equation}
\label{coadjoint_action_group}
\operatorname{Ad}^\ast_{( g, v ) ^{-1}} ( \mu, a ) = \left( \operatorname{Ad}^\ast_{g ^{-1}} \mu + v \diamond ga , ga \right).
\end{equation}

Suppose we are given a \textit{left\/} representation of $G$ on the vector space $V^*$. We can form an \textit{affine left\/}
representation $\theta_g(a):= ga+c(g)$, where
$c\in\mathcal{F}(G,V^*)$ is a {\bfi left group
one-cocycle\/}, that is, it verifies the property
\begin{equation}\label{cocycle_identity}
c(gh)=c(g)+gc(h)
\,,
\end{equation}
for all $g, h \in G$. Note that
\[
\left.\frac{d}{dt}\right|_{t=0}\theta_{\operatorname{exp}(t\xi)}(a)=\xi a+\mathbf{d}c(\xi)
\]
and
\[
\langle \xi a+\mathbf{d}c(\xi),v\rangle_V=\langle \mathbf{d}c^T(v)-v\diamond
a,\,\xi\rangle_\mathfrak{g}
\,,
\]
where $\mathbf{d}c :\mathfrak{g}\rightarrow V^*$ is defined by
$\mathbf{d}c(\xi):=T_ec(\xi)$, and $\mathbf{d}c^T:V\rightarrow\mathfrak{g}^*$
is
defined by
\[
\langle \mathbf{d}c^T(v),\xi\rangle_\mathfrak{g}:=\langle
\mathbf{d}c(\xi),v\rangle_V.
\]
\subsection{Affine Lagrangian and Hamiltonian semidirect product theory}

Concerning the Lagrangian side, the general setup is the following.

\begin{itemize}
\item Assume that we have a function $L:TG\times V^*\rightarrow\mathbb{R}$
which
is \textit{left\/} $G$-invariant under the affine action $(v_h,a)\mapsto
(gv_h,\theta_g(a))=(gv_h,ga+c(g))$.
\item In particular, if $a_0\in V^*$, define the Lagrangian
$L_{a_0}:TG\rightarrow\mathbb{R}$ by $L_{a_0}(v_g):=L(v_g,a_0)$. Then $L_{a_0}$
is left invariant under the lift to $TG$ of the left action of $G_{a_0}^c$ on
$G$, where $G_{a_0}^c$ is the isotropy group of $a_0$ with respect to the
affine action $\theta$.
\item Define $l:\mathfrak{g}\times V^*\rightarrow\mathbb{R}$ by $l: =  L|_{ \mathfrak{g}\times V ^\ast}$. Left $G$-invariance of $L$ yields
\[
l(g^{-1}v_g,\theta_{g^{-1}}(a))=L(v_g,a)
\]
for all $g \in G $, $v _g\in T _gG $, $a \in V ^\ast$.
\item For a curve $g(t)\in G$, let $\xi(t):=g(t)^{-1}\dot{g}(t)$ and
define the curve $a(t)$ as the unique solution of the following affine
differential equation with time dependent coefficients
\[
\dot{a}=-\xi a-\mathbf{d}c(\xi),
\]
with initial condition $a(0)=a_0$. The solution can be written as
$a(t)=\theta_{g(t)^{-1}}(a_0)$.
\end{itemize}

\begin{theorem}\label{AEPSD} In the preceding notation, the following are equivalent:
\begin{itemize}
\item[\bf{i}] With $a_0$ held fixed, Hamilton's variational principle
\begin{equation}\label{Hamilton_principle}
\delta\int_{t_0}^{t_1}L_{a_0}(g,\dot{g})dt=0,
\end{equation}
holds, for variations $\delta g(t)$ of $g(t)$ vanishing at the endpoints.
\item[\bf{ii}] $g(t)$ satisfies the Euler-Lagrange equations for $L_{a_0}$ on
$G$.
\item[\bf{iii}] The constrained variational principle
\begin{equation}\label{Euler-Poincare_principle}
\delta\int_{t_0}^{t_1}l(\xi,a)dt=0,
\end{equation}
holds on $\mathfrak{g}\times V^*$, upon using variations of the form
\[
\delta\xi=\frac{\partial\eta}{\partial t}+[\xi,\eta],\quad \delta
a=-\eta a-\mathbf{d}c(\eta),
\]
where $\eta(t)\in\mathfrak{g}$ vanishes at the endpoints.
\item[\bf{iv}] The affine Euler-Poincar\'e equations hold on
$\mathfrak{g}\times
V^*$:
\begin{equation}\label{AEP}
\frac{\partial}{\partial t}\frac{\delta
l}{\delta\xi}=\operatorname{ad}^*_\xi\frac{\delta l}{\delta\xi}+\frac{\delta
l}{\delta a}\diamond a-\mathbf{d}c^T\left(\frac{\delta l}{\delta a}\right).
\end{equation}
\end{itemize}
\end{theorem}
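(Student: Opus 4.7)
The four statements split naturally into three equivalences, each with a different flavor. First, (i) $\Leftrightarrow$ (ii) is simply the classical Hamilton principle on the manifold $G$ applied to $L_{a_0}$; it introduces no new ideas and I would dispatch it in one sentence.

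The equivalence (i) $\Leftrightarrow$ (iii) is the reduction step. Left $G$-invariance of $L$ under the affine action gives
\[
L_{a_0}(g(t),\dot g(t))=L(g\cdot g^{-1}\dot g,\,a_0)=L(g^{-1}\dot g,\,\theta_{g^{-1}}(a_0))=l(\xi(t),a(t)),
\]
so the two variational principles have the same integrand, and it suffices to match up the admissible variations. Setting $\eta(t):=g(t)^{-1}\delta g(t)\in\mathfrak{g}$, the correspondence $\delta g\leftrightarrow\eta$ is a bijection between variations of $g$ vanishing at the endpoints and curves $\eta(t)$ vanishing at the endpoints (reconstruction by $\delta g:=g\eta$). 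I would then compute the induced variations: the identity $\delta\xi=\dot\eta+[\xi,\eta]$ is classical and follows from equality of mixed partials. For $\delta a$, I would first note that the cocycle identity \eqref{cocycle_identity} is exactly equivalent to the \emph{functoriality} $\theta_{fh}=\theta_f\circ\theta_h$ of the affine action, since
\[
\theta_{fh}(a)=fha+c(fh)=fha+c(f)+fc(h)=\theta_f(ha+c(h))=\theta_f(\theta_h(a)).
\]
Using this with $f=g(s,t)^{-1}g(0,t)$ and $h=g(0,t)^{-1}$, one writes $\theta_{g(s,t)^{-1}}(a_0)=\theta_{h(s,t)}(a(t))$ where $h(s,t):=g(s,t)^{-1}g(0,t)$ is a curve through the identity at $s=0$ with $\partial_sh|_{s=0}=-\eta(t)$. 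Differentiating in $s$ at $s=0$ yields
\[
\delta a=-\eta\,a-\mathbf{d}c(\eta),
\]
precisely the constraint in (iii).

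Finally, (iii) $\Leftrightarrow$ (iv) is a direct computation. Substituting the constrained variations into $\delta\int l(\xi,a)\,dt$ gives
\[
\int\left[\left\langle\frac{\delta l}{\delta\xi},\dot\eta+[\xi,\eta]\right\rangle_{\mathfrak{g}}+\left\langle\frac{\delta l}{\delta a},-\eta\,a-\mathbf{d}c(\eta)\right\rangle_{V}\right]dt.
\]
Integrating by parts in $t$ on the $\dot\eta$ term (boundary terms vanish), and translating the remaining pairings via the definitions $\langle\mathrm{ad}^*_\xi\mu,\eta\rangle=\langle\mu,[\xi,\eta]\rangle$, $\langle v\diamond a,\eta\rangle_\mathfrak{g}=-\langle\eta\,a,v\rangle_V$, and $\langle\mathbf{d}c^T(v),\eta\rangle_\mathfrak{g}=\langle\mathbf{d}c(\eta),v\rangle_V$, reassembles everything as a single pairing against the arbitrary curve $\eta(t)$. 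The fundamental lemma of the calculus of variations then gives equation \eqref{AEP}.

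The routine parts are the integration by parts and the standard derivation of $\delta\xi=\dot\eta+[\xi,\eta]$. The only real obstacle is the derivation of $\delta a=-\eta a-\mathbf{d}c(\eta)$: one has to recognize that the cocycle condition \eqref{cocycle_identity} is exactly what is needed for the affine action to be functorial, so that the variation of $\theta_{g(s,t)^{-1}}(a_0)$ can be computed at the identity rather than at $g^{-1}(t)$. Everything else is bookkeeping.
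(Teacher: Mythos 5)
Your proof is correct and is precisely the standard affine Euler--Poincar\'e reduction argument: the paper itself does not write out a proof of this theorem (it defers to \cite{Gay-Bara2007}), and when it later proves the restricted version on $\mathfrak{g}\times\mathcal{O}^c_{a_0}$ it explicitly says the equivalences ``can be shown exactly as in the standard case,'' which is what you have supplied. Your packaging of the key step --- that the cocycle identity \eqref{cocycle_identity} is exactly functoriality $\theta_{fh}=\theta_f\circ\theta_h$ of the affine action, so the variation of $\theta_{g^{-1}}(a_0)$ can be computed at the identity and yields $\delta a=-\eta a-\mathbf{d}c(\eta)$ --- is the right mechanism, and your sign conventions for $\operatorname{ad}^*$, $\diamond$, and $\mathbf{d}c^T$ match the paper's, so the final pairing against $\eta$ reproduces \eqref{AEP} exactly.
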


See \cite{Gay-Bara2007} for the proof and applications to \textit{spin systems} and \textit{complex fluids}. Concerning the Hamiltonian side, the setup is the following.

\begin{itemize}
\item Assume that we have a function $H: T ^\ast G \times V ^\ast \rightarrow
\mathbb{R}$ which is \textit{left\/} invariant under the affine action $(\alpha_h,a)\mapsto(g\alpha_h ,\theta_g(a))$.
\item In particular, if $a_0\in V^*$, define the Hamiltonian $H_{a_0}:T^*G\rightarrow\mathbb{R}$ by $H_{a_0}(\alpha_g):=H(\alpha_g,a_0)$. Then $H_{a_0}$ is left invariant under the lift to $T^*G$ of the left action of $G^c_{a_0}$ on $G$.
\item Define $h:\mathfrak{g}^*\times V^*\rightarrow\mathbb{R}$ by $h: = H|_{ \mathfrak{g}^\ast \times V ^\ast}$. Left $G$-invariance of $H$ yields
\[
h(g^{-1}\alpha_g,\theta_{g^{-1}}(a))=H(\alpha_g,a).
\]
for all $g \in G $, $\alpha_g\in T_g ^\ast G$, $a \in V ^\ast$.
\end{itemize}

Note that the $G$-action on $T ^\ast G \times V ^\ast$ is induced by the $S$-action on $T ^* S $ given by
\begin{align}\label{affine_action_on_T*S}
\Psi_{(g,v)}(\alpha_h,(u,a)):=\left(g\alpha_h ,v+gu,ga+c(g)\right).
\end{align}
The affine action $\Psi$ appears as a modification of the cotangent lift of left translation on $S$ by an affine term. Thus, we can think of the Hamiltonian $H:T^*G\times V^*\to\mathbb{R}$
as being the Poisson reduction of a $S$-invariant Hamiltonian
$\overline{H}:T^*S\to\mathbb{R}$ by the normal subgroup $\{e\} \times V $ since
$(T ^\ast S)/(\{e\} \times V)  \cong T ^\ast G \times V^\ast$. Note also that every  Hamiltonian $\overline{H}=\overline{H}(\alpha_h,(u,a))$, defined on $T^*S$ and left invariant under the affine action $\Psi$, does not depend on the variable $u\in V$.

\begin{theorem}\label{ALPSD} Let $\alpha(t)\in T^*_{g(t)}G$ be a solution of Hamilton's equations associated to $H_{a_0}$ with initial condition $\mu_0 \in T_e ^\ast G$. Then $(\mu(t), a (t)):=(g(t)^{-1}\alpha(t), \theta_{ g (t)^{-1}} (a_0)) \in\mathfrak{g}^* \times  V ^\ast$  is a solution of the affine Lie-Poisson equations on $\mathfrak{s}^*$:
\[
\frac{\partial}{\partial t}(\mu,a)=\left(\operatorname{ad}^*_{\frac{\delta
h}{\delta \mu}}\mu-\frac{\delta h}{\delta a}\diamond
a+\mathbf{d}c^T\left(\frac{\delta h}{\delta a}\right),-\frac{\delta h}{\delta
\mu}a-\mathbf{d}c\left(\frac{\delta h}{\delta \mu}\right)\right)
\]
with initial conditions $( \mu(0), a (0)) = ( \mu_0, a_0)$.
The associated Poisson bracket is
the affine Lie-Poisson bracket on the dual $\mathfrak{s}^*$
\begin{align}
\label{affine_LP}
\{f,g\}(\mu,a)&=-\left\langle\mu,\left[\frac{\delta
f}{\delta\mu},\frac{\delta g}{\delta\mu}\right]\right\rangle-\left\langle
a,\frac{\delta f}{\delta \mu}\frac{\delta g}{\delta a}-\frac{\delta g}{\delta
\mu}\frac{\delta f}{\delta a}\right\rangle \nonumber \\
&\qquad+\left\langle\mathbf{d}c\left(\frac{\delta
f}{\delta\mu}\right),\frac{\delta g}{\delta
a}\right\rangle-\left\langle\mathbf{d}c\left(\frac{\delta
g}{\delta\mu}\right),\frac{\delta f}{\delta a}\right\rangle.
\end{align}
Conversely, given $\mu_0 \in T_e ^\ast G$, the solution $\alpha(t) $ of the Hamiltonian system associated to $H_{a_0}$ is reconstructed from the solution $( \mu(t), a (t)) $ of the affine Lie-Poisson equations with initial conditions $( \mu(0), a (0)) = ( \mu_0, a_0)$ by setting $\alpha(t) = g (t) \mu(t) $, where $g(t) $ is the unique solution of the differential equation $\dot{g}(t) = g(t) \frac{\delta h}{ \delta\mu(t)}$ with initial condition $g(0) = e$.
\end{theorem}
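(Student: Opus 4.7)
The plan is to obtain Theorem \ref{ALPSD} as the Hamiltonian counterpart of the affine Euler--Poincar\'e theorem via a Legendre transform, and to verify separately that the bracket \eqref{affine_LP} generates the resulting equations. Assuming $H_{a_0}$ is hyperregular, I would introduce the corresponding Lagrangian $L_{a_0}$ on $TG$ and its reduction $l=L|_{\mathfrak{g}\times V^*}$, giving the standard Legendre relations $\mu=\delta l/\delta \xi$, $\xi=\delta h/\delta\mu$, and $\delta h/\delta a=-\delta l/\delta a$. Hamilton's equations for $H_{a_0}$ on $T^*G$ are then equivalent to the Euler--Lagrange equations for $L_{a_0}$ on $TG$, and the affine $G$-invariance of $H$ descends to $L$, so Theorem \ref{AEPSD} applies to the curve $g(t)$ underlying $\alpha(t)$.

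From Theorem \ref{AEPSD} I obtain the affine Euler--Poincar\'e equation for $\xi=g^{-1}\dot g$ together with $\dot a=-\xi a-\mathbf{d}c(\xi)$, where $a(t)=\theta_{g(t)^{-1}}(a_0)=g(t)^{-1}a_0+c(g(t)^{-1})$. Substituting the Legendre relations converts this into
\[
\dot\mu=\operatorname{ad}^*_{\delta h/\delta\mu}\mu-\frac{\delta h}{\delta a}\diamond a+\mathbf{d}c^T\!\left(\frac{\delta h}{\delta a}\right),\qquad \dot a=-\frac{\delta h}{\delta\mu}a-\mathbf{d}c\!\left(\frac{\delta h}{\delta\mu}\right),
\]
which matches the statement. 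As an independent check I would re-derive the advection equation directly by differentiating $a(t)=g(t)^{-1}a_0+c(g(t)^{-1})$ and invoking the cocycle identity \eqref{cocycle_identity} together with $c(e)=0$; this confirms $\dot a$ without appealing to the Lagrangian side.

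The bracket \eqref{affine_LP} is then checked by computing $\dot f=\{f,h\}(\mu,a)$ and reading off the components of $(\dot\mu,\dot a)$ from test functions linear in $\mu$ or in $a$: the first two lines of \eqref{affine_LP} reproduce the standard Lie--Poisson dynamics on the semidirect-product dual $\mathfrak{s}^*$ (giving the $\operatorname{ad}^*$, $\diamond$, and $-\xi a$ contributions), while the final cocycle line contributes exactly $\mathbf{d}c^T(\delta h/\delta a)$ in $\dot\mu$ and $-\mathbf{d}c(\delta h/\delta\mu)$ in $\dot a$. The Jacobi property of \eqref{affine_LP} follows from recognizing it as the two-step Poisson reduction of the canonical bracket on $T^*S$ under the affine $S$-action \eqref{affine_action_on_T*S}, first by the normal subgroup $\{e\}\times V$ and then by $G$.

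For the reconstruction statement, I would set $\alpha(t):=g(t)\mu(t)$ with $g(t)$ defined by $\dot g=g\cdot(\delta h/\delta\mu)$ and $g(0)=e$, then differentiate and use the affine $G$-invariance of $H$ to identify the resulting vector field with $X_{H_{a_0}}(\alpha)$, closing the loop with Hamilton's equations. The principal obstacle I foresee is the consistent tracking of the affine (cocycle) corrections: $\mathbf{d}c$ appears in $\dot a$ while its transpose $\mathbf{d}c^T$ appears in $\dot\mu$, and the pairing identity $\langle v\diamond a,\xi\rangle=-\langle\xi a,v\rangle$ must be combined with the infinitesimal form of \eqref{cocycle_identity} to produce the correct signs across the Legendre transform, the bracket computation, and the reconstruction argument.
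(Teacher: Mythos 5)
Your argument is essentially sound, but note that the paper does not prove Theorem \ref{ALPSD} at all: its ``proof'' is a citation to \cite{Gay-Bara2007}, where the result is established purely on the Hamiltonian side by showing that the quotient map $T^*G\times V^*\to\mathfrak{s}^*$, $(\alpha_g,a)\mapsto(g^{-1}\alpha_g,\theta_{g^{-1}}(a))$, induced by the affine action \eqref{affine_action_on_T*S} is Poisson (via reduction by stages), after which the equations, the bracket \eqref{affine_LP}, and the reconstruction all follow at once. Your route instead passes through the Lagrangian side: you invoke hyperregularity, apply Theorem \ref{AEPSD}, and push \eqref{AEP} through the Legendre relations $\mu=\delta l/\delta\xi$, $\xi=\delta h/\delta\mu$, $\delta h/\delta a=-\delta l/\delta a$. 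The sign bookkeeping you describe is correct (the flip $\delta l/\delta a\mapsto-\delta h/\delta a$ converts $\frac{\delta l}{\delta a}\diamond a-\mathbf{d}c^T(\frac{\delta l}{\delta a})$ into $-\frac{\delta h}{\delta a}\diamond a+\mathbf{d}c^T(\frac{\delta h}{\delta a})$), your direct differentiation of $a(t)=\theta_{g(t)^{-1}}(a_0)$ using \eqref{cocycle_identity} is valid, and your test-function check that \eqref{affine_LP} generates the stated vector field is the right computation.

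The one substantive caveat is that the theorem as stated assumes nothing about $H_{a_0}$ beyond being a Hamiltonian, whereas your central step requires hyperregularity to manufacture $L_{a_0}$ and appeal to Theorem \ref{AEPSD}; as written, your proof only covers the hyperregular case. This is harmless for the Hamiltonians arising in this paper (kinetic plus potential), but it is a genuine restriction relative to the statement. You can close the gap with machinery you have already put in place: once your reduction-by-stages argument establishes that the quotient map above is Poisson with target bracket \eqref{affine_LP}, the fact that $(g^{-1}\alpha,\theta_{g^{-1}}(a_0))$ solves the affine Lie--Poisson equations follows immediately by pushing forward the Hamiltonian flow, with no Legendre transform and no regularity hypothesis. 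In other words, your third paragraph makes your first two redundant, and dropping them yields the proof in full generality --- which is essentially the proof in \cite{Gay-Bara2007}.
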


\begin{proof}
See \cite{Gay-Bara2007} for the proof and some applications. 
\end{proof}

\paragraph{Momentum maps} 
We now comment on the momentum maps at each stage of the reduction process. In \cite{Gay-Bara2007} it is shown that the momentum map associated to the affine action \eqref{affine_action_on_T*S} is given by
\begin{equation}\label{total_momentum}
\mathbf{J}:T^*S\rightarrow\mathfrak{s}^*,\quad\mathbf{J}(\alpha_g,(u,b))=(\alpha_gg^{-1}+u\diamond b-\mathbf{d}c^T(u),b).
\end{equation}
The proof of this formula uses the general formula for the momentum map on a magnetic cotangent bundle with respect to the cotangent-lifted action. 
In order to apply this formula, an adequate fiber translation on $T^*S$ used. This fiber translation turns out to be equivariant with respect to the affine action and the action \eqref{affine_action_on_T*S} on $T^*S$ as well as symplectic with respect to the canonical symplectic form and a magnetic symplectic form on $T^*S$. The above formula for $\mathbf{J}$ is then obtained by pulling back the magnetic momentum map via the fiber translation.

One observes that the conservation of $\mathbf{J}$ implies that the motion takes place on affine coadjoint orbits. 

Note that the Poisson action of $G $ on $T ^\ast G \times V ^\ast$ does not admit a momentum map because the leaves $T ^\ast G \times \{ b\} $, $b \in V^\ast$, are not invariant under this action. Given $a_0\in V^*$, the momentum map on $T^*G$ corresponding to the cotangent lifted left action of the isotropy group $G_{a_0}^c$ is given by
\[
\mathbf{J}_{a_0}: T^*G\rightarrow(\mathfrak{g}_{a_0}^c)^*,\quad\mathbf{J}_{a_0}(\alpha_g)=\alpha_gg^{-1}|_{\mathfrak{g}_{a_0}^c},
\]
where $\mathfrak{g}^c_a=\{\xi\in\mathfrak{g}\mid \xi a+\mathbf{d}c(\xi)=0\}$ is the Lie algebra of $G^c_{a_0}$.

\subsection{Affine reduction at fixed parameter}

As we will see, the affine reduction theorems recalled above do  
not apply directly to the molecular strand. This is because the Lagrangian of the molecular strand is only given  
for the particular value $a_0=0$ of the parameter and we do not have a concrete expression for $L_{a_0}$ when $a_0\neq 0$ is an arbitrary element of $V ^\ast$. Extending $L_0$ by $G$-invariance only yields a Lagrangian on $TG \times \mathcal{O}^c_{0}$, where $\mathcal{O}^c_{0} \subset V ^\ast$ is the orbit of the affine $G$-action on $V^\ast$. Fortunately, the  
Lagrangian $L_0$ for the molecular strand is invariant under the  
isotropy group $G^c_0=\{g\in G\mid c(g)=0\}$ and this turns out to be enough for the extension of the affine semidirect product reduction theorem.

\subsubsection{Lagrangian approach}

We consider here the case of a $G^c_{a_0}$-invariant Lagrangian  
$L_{a_0}:TG\rightarrow\mathbb{R}$ for a fixed $a_0\in V^*$, but we do   not suppose that this Lagrangian comes from a $G$-invariant function  $L:TG\times V^*\rightarrow\mathbb{R}$. In particular, we do not know the expression of $L_a$ when $a\neq 0$ is an arbitrary element of $V ^\ast$. To $L_{a_0}$ we associate the reduced Lagrangian $l$ defined on the submanifold
\[
\mathfrak{g}\times\mathcal{O}^c_{a_0}\subset \mathfrak{g}\times V ^\ast,\quad \mathcal{O}^c_{a_0}:=\{\theta_g(a_0)\mid g\in G\}
\]
given by $l(\xi,\theta_g(a_0))=L_{a_0}(g^{-1}\xi)$.  The tangent space at  $a $ to $\mathcal{O}^c_{a_0}$ is given by 
\begin{equation}\label{tangent_space_orbit_0}T_a \mathcal{O}^c_{a_0} = \{ \mathbf{d}c( \eta) + \eta  a \mid \eta  \in \mathfrak{g}\}.
\end{equation}

The analogue of Theorem \ref{AEPSD} in this case is given below.

\begin{theorem}\label{coadjoint_lagrangian_reduction} 
Let $a_0$ be a fixed element in $V^*$ and $g(t)$ be a curve in $G$ with $g(0) = e$. Define the curves $\xi(t)=g(t)^{-1}\dot g(t) \in \mathfrak{g}$ and $a (t) : = \theta_{g (t)^{-1}}a_0  \in V ^\ast$. Then the following are equivalent.
\begin{itemize}
\item[\bf{i}] With $a_0$ held fixed, Hamilton's variational principle
\begin{equation}\label{Hamilton_principle_new}
\delta\int_{t_0}^{t_1}L_{a_0}(g,\dot{g})dt=0,
\end{equation}
holds, for variations $\delta g(t)$ of $g(t)$ vanishing at the endpoints.
\item[\bf{ii}] $g(t)$ satisfies the Euler-Lagrange equations for $L_{a_0}$ on
$G$.
\item[\bf{iii}] The constrained variational principle
\begin{equation}\label{Euler_Poincare_principle_new}
\delta\int_{t_0}^{t_1}l(\xi,a)dt=0,
\end{equation}
holds on $\mathfrak{g}\times\mathcal{O}^c_{a_0} \subset  
\mathfrak{g}\times V ^\ast$, upon using variations of the form
\[
\delta\xi=\frac{\partial\eta}{\partial t}+[\xi,\eta],\quad\delta a=-\eta a-\mathbf{d}c(\eta),
\]
where $\eta(t)\in\mathfrak{g}$ vanishes at the endpoints.
\item[\bf{iv}] Extending $l$ arbitrarily to $\mathfrak{g}\times V^*$, the affine Euler-Poincar\'e equations hold on the submanifold
$\mathfrak{g}\times\mathcal{O}^c_{a_0} \subset  
\mathfrak{g}\times V ^\ast$:
\begin{equation}\label{AEP1}
\frac{\partial}{\partial t}\frac{\delta
l}{\delta\xi}=\operatorname{ad}^*_\xi\frac{\delta l}{\delta\xi}+\frac{\delta
l}{\delta a}\diamond a-\mathbf{d}c^T\left(\frac{\delta l}{\delta a}\right).
\end{equation}
\end{itemize}
\end{theorem}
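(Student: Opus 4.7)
The plan is to establish (i) $\Leftrightarrow$ (ii) as standard calculus of variations, then prove (i) $\Leftrightarrow$ (iii) using the well-definedness of the reduced Lagrangian $l$ on the orbit $\mathfrak{g}\times\mathcal{O}^c_{a_0}$, and finally derive (iv) from (iii) by a direct variational computation. The differences from Theorem \ref{AEPSD} are twofold: first, $L_{a_0}$ is only invariant under the isotropy subgroup $G^c_{a_0}$ (not all of $G$); second, the parameter $a$ now varies on the orbit $\mathcal{O}^c_{a_0}\subset V^*$ rather than in all of $V^*$. Both issues are handled by the observation that every element of $\mathcal{O}^c_{a_0}$ is of the form $\theta_{g^{-1}}(a_0)$ for some $g\in G$, and that the prescription $l(\xi,\theta_{g^{-1}}(a_0)):=L_{a_0}(g,g\xi)$ is independent of the representative $g$ precisely by $G^c_{a_0}$-invariance of $L_{a_0}$.

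For (i) $\Leftrightarrow$ (iii), I would parametrize any variation $\delta g(t)$ vanishing at the endpoints by $\eta(t):=g(t)^{-1}\delta g(t)\in\mathfrak{g}$, also vanishing at the endpoints. Standard left-trivialization yields $\delta\xi=\partial_t\eta+[\xi,\eta]$. Using the defining relation $a(t)=\theta_{g(t)^{-1}}(a_0)$ together with the cocycle identity \eqref{cocycle_identity}, a computation analogous to $\dot a=-\xi a-\mathbf{d}c(\xi)$ (already stated in the setup of Theorem \ref{AEPSD}) gives $\delta a=-\eta a-\mathbf{d}c(\eta)$. These formulas confirm that $\delta a\in T_a\mathcal{O}^c_{a_0}$ in view of \eqref{tangent_space_orbit_0}, so the constrained variational principle on $\mathfrak{g}\times\mathcal{O}^c_{a_0}$ is exactly the image of Hamilton's principle \eqref{Hamilton_principle_new} under the reduction map $(g,\dot g)\mapsto(\xi,a)$. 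Conversely, any $\eta(t)$ vanishing at the endpoints reconstructs a variation $\delta g=g\eta$.

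For (iii) $\Leftrightarrow$ (iv), arbitrarily extend $l$ from the orbit to all of $\mathfrak{g}\times V^*$ and compute
\begin{align*}
\delta\!\int_{t_0}^{t_1}\!l(\xi,a)\,dt
&=\int_{t_0}^{t_1}\!\!\left\langle\frac{\delta l}{\delta\xi},\partial_t\eta+[\xi,\eta]\right\rangle_{\!\mathfrak{g}}+\left\langle\frac{\delta l}{\delta a},-\eta a-\mathbf{d}c(\eta)\right\rangle_{\!V}dt\\
&=\int_{t_0}^{t_1}\!\!\left\langle-\partial_t\frac{\delta l}{\delta\xi}+\operatorname{ad}^*_\xi\frac{\delta l}{\delta\xi}+\frac{\delta l}{\delta a}\diamond a-\mathbf{d}c^T\!\left(\frac{\delta l}{\delta a}\right),\eta\right\rangle_{\!\mathfrak{g}}dt,
\end{align*}
where I integrate by parts in $t$, use the endpoint conditions on $\eta$, and apply the definitions of $\operatorname{ad}^*$, $\diamond$, and $\mathbf{d}c^T$ from Section \ref{Notations}. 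Since $\eta(t)\in\mathfrak{g}$ is otherwise arbitrary, the vanishing of this integral is equivalent to \eqref{AEP1}.

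The one delicate point, which is the main obstacle, is that the right-hand side of \eqref{AEP1} involves $\delta l/\delta a$, while $l$ is canonically defined only on the orbit $\mathcal{O}^c_{a_0}$; any two extensions differ by a functional whose $a$-derivative lies in the annihilator $(T_a\mathcal{O}^c_{a_0})^\circ$. To show the equation is well-posed, I would verify that if $v\in(T_a\mathcal{O}^c_{a_0})^\circ$, then $v\diamond a-\mathbf{d}c^T(v)=0$ in $\mathfrak{g}^*$. Indeed, pairing with an arbitrary $\eta\in\mathfrak{g}$ and using \eqref{tangent_space_orbit_0},
\[
\langle v\diamond a-\mathbf{d}c^T(v),\eta\rangle_{\mathfrak{g}}=-\langle \eta a+\mathbf{d}c(\eta),v\rangle_{V}=0,
\]
so the contribution $\frac{\delta l}{\delta a}\diamond a-\mathbf{d}c^T(\frac{\delta l}{\delta a})$ is intrinsic to the orbit, completing the proof.
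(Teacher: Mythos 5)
Your proposal is correct and follows essentially the same route as the paper: reduce Hamilton's principle via $\eta=g^{-1}\delta g$, observe that the constrained variations $\delta a=-\eta a-\mathbf{d}c(\eta)$ exactly span $T_a\mathcal{O}^c_{a_0}$, and conclude that the choice of extension of $l$ off the orbit is immaterial. Your explicit check that $v\diamond a-\mathbf{d}c^T(v)=0$ for $v$ in the annihilator of $T_a\mathcal{O}^c_{a_0}$ is just the adjoint restatement of the paper's observation that $\langle\delta\tilde l/\delta a,\delta a\rangle$ is extension-independent for $\delta a\in T_a\mathcal{O}^c_{a_0}$, and it usefully makes the well-posedness of \eqref{AEP1} itself explicit.
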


\begin{proof} The equivalence of \textbf{i} and \textbf{ii} is true in general. The equivalence of \textbf{i} and \textbf{iii} and the equivalence of \textbf{iii} and \textbf{iv} can be shown exactly as in the standard case, that is, the case when $l$ is defined on the whole space $\mathfrak{g}\times V^*$. The only minor difference occurs when $l$ is differentiated with respect to the second variable. In this case the functional derivative $\delta l/\delta a\in V$ is replaced by the tangent map $\mathbf{d}_2l(\xi,a)\in T^*_a\mathcal{O}^c_{a_0}$ and one observes that
\[
\mathbf{d}_2l(\xi,a)\!\cdot\!\delta a=\left\langle\frac{\delta \tilde l}{\delta a},\delta a\right\rangle,\;\;\text{for all}\;\; \delta a\in T_a\mathcal{O}^c_{a_0}
\]
for any extension $\tilde l$ of $l$ to $\mathfrak{g}\times V^*$. Note that $\delta a=-\eta a-\mathbf{d}c(\eta)\in T_a\mathcal{O}^c_{a_0}$ for $\eta\in\mathfrak{\eta}$ and that any vector in $T_a\mathcal{O}^c_{a_0}$ is of this form. From now on we denote also by $l$, instead of $\tilde l$, an arbitrary extension of $l$. \end{proof}

\begin{remark}[The case $a_0=0$ and the charged strand]\label{remark_Lagrangian_side}{\rm For the charged molecular strand we will need to choose $a_0=0$. In this case the isotropy group is $G^c_0=\{g\in G\mid c(g)=0\}$. Given a $G^c_0$-invariant Lagrangian $L_0:TG\rightarrow\mathbb{R}$, the reduced Lagrangian $l$ is defined on $\mathfrak{g}\times \mathcal{O}^c_0$ by
\[
l(\xi,c(g^{-1}))=L_0(g\xi).
\]
It will be sufficient to restrict to Lagrangians for simple mechanical systems with symmetry, that is, of the form $L_0(v_g)=K(v_g)-P(g)$, where $K$ is the kinetic energy associated to a $G^c_0$-invariant Riemannian metric on $G$ and the potential $P$ is $G^c_0$-invariant. In this case, the reduced Lagrangian is
\[
l(\xi,c(g^{-1}))=K(g\xi)-P(g).
\]
Note that the right hand side of this expression is well defined on $\mathfrak{g} \times\mathcal{O}^c_0$, that is, it depends on $g $ only through $c(g ^{-1})$. Indeed, $c(g^{-1})=c(h^{-1})$ if and only if $\theta_{g^{-1}}(0)=\theta_{h^{-1}}(0)$, which means that $hg ^{-1}\in G^c_0$. Therefore, $P(h) = P(( h g ^{-1})g) = P(g) $ by left $G^c_0 $-invariance of $P$. For the kinetic energy the same argument works since the metric is $G^c_0 $-invariant.

Thus we can write $L_0(v_g)=K(v_g)-E(c(g^{-1}))$ for the function $E:V^*\rightarrow\mathbb{R}$ uniquely determined by the relation $P(g) = E(c(g^{-1}))$. In this case, we have
\[
l(\xi,c(g^{-1}))=K(g\xi)-E(c(g^{-1})).
\]
For the Lagrangian of the charged molecular strand the potential energy is the sum of two terms, one of which, denoted by $E_{loc}$, explicitly depends only on $c(g ^{-1}) $ and the other, denoted by $E_{np} $, does not have a concrete expression only in terms of $c(g ^{-1}) $ but it is $G^c_0$-invariant. In addition, for the charged molecular strand the kinetic energy metric is not just $G^c_0$-invariant but $G $-invariant which then implies that it is only a function of $\xi \in \mathfrak{g}$. For the molecular strand the Lagrangian is of the form
\[
L_0(v_g)=K(v_g)-E_{loc}(c(g^{-1}))-E_{np}(\zeta(g),c(g^{-1})),
\]
where $\zeta$ is a $G^c_0$-invariant function defined on $G$
and  the reduced Lagrangian is 
\begin{align*}
l(\xi,c(g^{-1}))&=\underbrace{K(\xi)-E_{loc}(c(g^{-1}))}_{=l_{loc}}-E_{np}(\zeta(g),c(g^{-1}))\\
&=l_{loc}(\xi,c(g^{-1}))+l_{np}(\zeta(g),c(g^{-1})).
\end{align*}
Note that $l$ can be expressed in terms of $(\xi,a)\in\mathfrak{g}\times\mathcal{O}^c_0$ as
\begin{equation}\label{abstract_strand}
l(\xi,a)=K(\xi)-E_{loc}(a)-E_{np}(\zeta(g_a),a)=l_{loc}(\xi,a)+l_{np}(\zeta(g_a),a),
\end{equation}
where $g_a\in G$ is such that $c(g^{-1})=a$. This $g_a$ is determined only up to left multiplication by $G^c_0$. Since $E_{np}$ is $G^c_0$-invariant, the function $a\mapsto E_{np}(g_a)$ is well-defined. 
Note that the Lagrangian of the strand (see \eqref{locL}, \eqref{Energy1}, and \eqref{lnpgen}) is exactly of the form \eqref{abstract_strand}, with $\zeta=\big( \xi(s,s'), \bkappa(s,s') \big) \in SE(3)$. 
 Since $a\mapsto l_{np}(\zeta(g_a),a)$ is a well-defined function of $a\in\mathcal{O}^\sigma_{0}$ one can ask why we insist in denoting $l_{np}=l_{np}(\zeta(g_a),a)$ instead of simply $l_{np}=l_{np}(a)$ which is mathematically correct. The reason is that for the molecular strand we do not have an explicit expression for $l_{np}:\mathcal{O}^c_0\rightarrow\mathbb{R}$; see \eqref{lnpgen}. Note that \eqref{lnpgen} is exactly of the form $l_{np}=l_{np}(\zeta(g_a),a)$. This will be explained in detail  in \S \ref{Application_strand}.
}
\end{remark}

\subsubsection{Recovering the modified Euler-Poincar\'e approach}
\label{sec:recovering_EP}

By Theorem \ref{coadjoint_lagrangian_reduction}, we have seen that the Euler-Lagrange equations of a $G^c_0$-invariant Lagrangian $L_0:TG\rightarrow\mathbb{R}$ are  equivalent to the affine Euler-Poincar\'e equations for $l:\mathfrak{g}\times\mathcal{O}^c_0\rightarrow\mathbb{R}$, that is, 
\begin{equation}\label{AEP_recall}
\frac{\partial}{\partial t}\frac{\delta l}{\delta\xi}=\operatorname{ad}^*_\xi \frac{\delta l}{\delta\xi}+\frac{\delta l}{\delta a}\diamond a-\mathbf{d}c^T\left(\frac{\delta l}{\delta a}\right).
\end{equation}
Recall that to write these equations, we need to extend $l$ to $\mathfrak{g}\times V^*$. Nevertheless, as we have shown, this extension does not affect the solution of these equations. For the molecular strand, there is an additional complication coming from the fact that the Lagrangian
\begin{equation}\label{Lagrangian_g_a}
l(\xi,a)=l_{loc}(\xi,a)+l_{np}(\zeta(g_a),a)
\end{equation}
being a well defined function of $(\xi,a)\in\mathfrak{g}\times\mathcal{O}^c_0$, is not explicitly written in terms of $a$. Therefore, when computing the affine Euler-Poincar\'e equations in concrete examples, there is still a dependence on $g_a$ in the final equation, although we know that this dependence can be replaced by a dependence in $a$ uniquely, by the results above. 

Let us apply the variational principle \eqref{Euler_Poincare_principle_new} to  Lagrangian in \eqref{Lagrangian_g_a}. Let $g (t)$ be a given curve in $G $. Take a family of curves $g_ \varepsilon(t)$ satisfying $g_0(t) = g(t)$ and denote $\eta(t): = g ^{-1}(t)\delta g (t)$. Then $\delta \int_{t_0}^{t_1} l( \xi(t), c( g (t)^{-1}) )dt = 0$ implies 
\begin{equation}\label{modified_AEP}
\frac{\partial}{\partial t}\frac{\delta l_{loc}}{\delta\xi}=\operatorname{ad}^*_\xi \frac{\delta l_{loc}}{\delta\xi}+\frac{\delta ( l_{loc}+l_{np})}{\delta a}\diamond a-\mathbf{d}c^T\left(\frac{\delta (l_{loc}+l_{np})}{\delta a}\right)+g^{-1}\frac{\delta l_{np}}{\delta\zeta}T_g\zeta.
\end{equation}
Note that this equation is the abstract generalization of equations \eqref{EPsigma} and \eqref{EPpsi}. 

Recall from the abstract theory that $l_{np}$ depends only on $a \in \mathcal{O}^c_0$. However, $l_{np}$ is given as a function of $(\zeta(g), 
c(g^{-1}))$. Let 
\[
\left.\frac{\delta l_{np}}{\delta a}\right|_{Tot}
\]
denote  the functional derivative of $l_{np}$ viewed as a function of $a \in \mathcal{O}^c_0$ only. Since every curve in $\mathcal{O}^c_0$ through $a = c(g^{-1}) \in \mathcal{O}^c_0$ is of the form $c( g_ \varepsilon^{-1})$, where $g_0=g$, we have

\begin{align} \label{tot_first}
\left.\frac{d}{d\varepsilon}\right|_{\varepsilon=0} l_{np}( \zeta( g_ \varepsilon), c(g _ \varepsilon^{-1})) 
&= \left\langle \left.\frac{\delta l_{np}}{ \delta a }\right|_{Tot}, 
\left.\frac{d}{d\varepsilon}\right|_{\varepsilon=0} c(g_ \varepsilon^{-1}) \right\rangle
= -\left\langle \left.\frac{\delta l_{np}}{ \delta a }\right|_{Tot}, \eta a + \mathbf{d}c( \eta) \right\rangle \nonumber \\
& = \left\langle \left.\frac{\delta l_{np}}{ \delta a }\right|_{Tot} \diamond a - \mathbf{d}c^T \left(\left.\frac{\delta l_{np}}{ \delta a }\right|_{Tot} \right), \eta \right\rangle,
\end{align}
where $\eta : = g ^{-1}\delta g$. On the other hand, 
\begin{align} \label{tot_second}
\left.\frac{d}{d\varepsilon}\right|_{\varepsilon=0} l_{np}( \zeta( g_ \varepsilon), c(g _ \varepsilon^{-1})) 
&= \left\langle \frac{ \delta l_{np}}{ \delta \zeta}, T_g \zeta( g\eta ) \right\rangle  - \left\langle\frac{\delta l_{np}}{ \delta a}, \eta a + \mathbf{d}c ( \eta) \right\rangle  \nonumber\\
& = \left\langle g^{-1} \frac{\delta l_{np}}{ \delta \zeta} T_g \zeta 
+ \frac{\delta l_{np}}{\delta a}\diamond a - \mathbf{d}c^T\left(\frac{\delta l_{np}}{\delta a}\right),
\eta \right\rangle.
\end{align}
Equations \eqref{tot_first} and \eqref{tot_first} prove the following identity
\[
\left.\frac{\delta l_{np}}{\delta a}\right|_{Tot}\diamond a-\mathbf{d}c^T\left(\left.\frac{\delta l_{np}}{\delta a}\right|_{Tot}\right)=\frac{\delta l_{np}}{\delta a}\diamond a-\mathbf{d}c^T\left(\frac{\delta l_{np}}{\delta a}\right)+g ^{-1} \frac{\delta l_{np}}{\delta\zeta}T_g\zeta,
\]
where $a = c(g^{-1}) $. Using this identity in \eqref{modified_AEP} we obtain the 
affine Euler-Poincar\'e equations \eqref{AEP_recall} since
\[
\frac{\delta l}{\delta a}=\left.\frac{\delta l_{np}}{\delta a}\right|_{Tot}+\frac{\delta l_{loc}}{\delta a}.
\]
Thus, the affine Euler-Poincar\'e process recovers the results of the modified Euler-Poincar\'e approach described in \S\ref{sec:Euler-Poincare}.

\subsubsection{Hamiltonian approach}

We now explore the Hamiltonian counterpart of the theory, that is, the case of a $G^c_{a_0}$-invariant Hamiltonian $H_{a_0}:T^*G\rightarrow\mathbb{R}$, defined only for a fixed value $a_0\in V^*$. As before, we do not suppose that $H_{a_0}$ is induced from a $G$-invariant Hamiltonian on $T^*G\times V^*$. In particular, we do not know the expression of $H_a$ for other choices of.. In particular, we do not know the expression of $H_a$ for other choices of $a$. Such an $H_{a_0}$ is usually induced by a hyperregular $G^c_{a_0}$-invariant Lagrangian $L_{a_0}$.
\medskip

As on the Lagrangian side, the reduced Hamiltonian is only defined on the submanifold
\[
\mathfrak{g}^*\times\mathcal{O}^c_{a_0}\subset\mathfrak{s}^*
\]
and so Theorem \ref{ALPSD} cannot be applied. However, as is shown in the next theorem, the fact that the reduced motion is Hamiltonian on an affine coadjoint orbit remains true for this more general case.
\medskip

We need to introduce the affine coadjoint orbit  
$\mathcal{O}^\sigma_{( \mu, a )}$. The left $V ^\ast$-valued group  
one-cocycle $c:G \rightarrow V ^\ast$ induces a left group one-cocycle  
$\sigma:S \rightarrow (\mathfrak{g}\,\circledS\,V) ^\ast$ by
\[
\sigma(g,u)=(u\diamond c(g)-\mathbf{d}c^T(u),c(g)).
\]
The affine coadjoint action of $S $ on $\mathfrak{s} ^\ast$ is hence given by
\[
(g, u)( \mu, a): =\operatorname{Ad}^*_{(g,u)^{-1}}(\mu,a)+\sigma((g,u)^{-1}).
\]
The connected components of the coadjoint orbits  
$\left(\mathcal{O}^\sigma_{(\mu,a_0)},\omega^-\right)$ are the  
symplectic leaves of $\mathfrak{s} ^\ast$ endowed with the affine  
Lie-Poisson bracket \eqref{affine_LP}. Denote by $S^\sigma_{(\mu,a)}$  
the isotropy group of the affine coadjoint action.

\begin{theorem}\label{coadjoint_reduction} Let $H_{a_0}:T^*G\rightarrow \mathbb{R}$ be a  
$G_{a_0}^c$-invariant Hamiltonian, where $a_0$ is a fixed element in  
$V^*$.  By $G^c_{a_0}$-invariance, we obtain the reduced Hamiltonian  
$h$ on
\[
\mathfrak{g}^*\times\mathcal{O}^c_{a_0} \subset \mathfrak{s} ^\ast
\]
defined by $h(\mu,\theta_g(a_0))=H_{a_0}(g^{-1}\mu)$.
\begin{itemize}
\item[{\bf(i)}]
Let $\alpha(t) \in T_{g(t)} ^\ast G$ be a solution of Hamilton's equations  
associated to $H_{a_0}$ with initial condition $\mu_0 \in T_e ^\ast G  
= \mathfrak{g}^\ast$. Then $(\mu(t), a (t)): = (g(t)^{-1}\alpha(t),  
\theta_{g(t)^{-1}}(a_0)) \in \mathfrak{s} ^\ast$ is the integral curve  
of the Hamiltonian vector field $X_h$ on the affine coadjoint orbit  
$\left(\mathcal{O}^\sigma_{(\mu_0,a_0)}, \omega^- \right)$ with  
initial condition $( \mu_0, a _0)$. Conversely, given $\mu_0 \in  
T_e ^\ast G$, the solution $\alpha(t) $ of the Hamiltonian system  
associated to $H_{a_0}$ is reconstructed from the solution $( \mu(t),  
a (t)) $ of $X_h \in  
\mathfrak{X}\left(\mathcal{O}^\sigma_{(\mu_0,a_0)} \right)$ with  
initial condition $( \mu_0, a _0) $ by setting $\alpha(t) = g (t)  
\mu(t) $, where $g(t) $ is the unique solution of the differential  
equation $\dot{g}(t) = g(t) \frac{\delta h}{ \delta\mu(t)}$ with  
initial condition $g(0) = e$.
\item[{\bf(ii)}] Extending $h $ arbitrarily to $\mathfrak{s} ^\ast$,  
Hamilton's equations on $\left(\mathcal{O}^\sigma_{(\mu_0,a_0)},  
\omega^- \right)$ can be written as
\[
\frac{\partial}{\partial t}(\mu,a)=\left(\operatorname{ad}^*_{\frac{\delta
h}{\delta \mu}}\mu-\frac{\delta h}{\delta a}\diamond
a+\mathbf{d}c^T\left(\frac{\delta h}{\delta a}\right),-\frac{\delta h}{\delta
\mu}a-\mathbf{d}c\left(\frac{\delta h}{\delta \mu}\right)\right)
\]
where $\mu(0) = \mu_0 $, $a(0)=a_0$.
\end{itemize}
\end{theorem}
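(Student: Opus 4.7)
The plan is to construct a Poisson map from $T^*G$ into the affine Lie-Poisson manifold $(\mathfrak{s}^*, \{\cdot,\cdot\}_\sigma)$ whose image along any trajectory lies in a single affine coadjoint orbit, and then to exploit the fact that these orbits are the symplectic leaves. The natural candidate is
\[
\phi_{a_0}: T^*G \longrightarrow \mathfrak{s}^*, \qquad \phi_{a_0}(\alpha_g) := \bigl(g^{-1}\alpha_g,\; \theta_{g^{-1}}(a_0)\bigr),
\]
which by construction satisfies $H_{a_0} = h \circ \phi_{a_0}$.

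First I would verify that each trajectory $t\mapsto \phi_{a_0}(\alpha(t))$ stays in the orbit $\mathcal{O}^\sigma_{(\mu_0, a_0)}$: this is a direct consequence of the cocycle identity \eqref{cocycle_identity} together with the explicit formula for the affine coadjoint action of $S$ on $\mathfrak{s}^*$. Next I would show that $\phi_{a_0}$ is Poisson when $T^*G$ carries the canonical bracket and $\mathfrak{s}^*$ carries the affine bracket \eqref{affine_LP}. The most transparent route is to realize $\phi_{a_0}$ as the composition of the trivial inclusion $T^*G \hookrightarrow T^*G \times \{a_0\}$ with the $V$-quotient map $T^*G \times V^* \to \mathfrak{s}^*$ from Theorem \ref{ALPSD}, after modifying the canonical structure on $T^*G \times V^*$ by the magnetic term induced by the cocycle $c$, exactly as in the derivation of the momentum map \eqref{total_momentum}.

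Once the Poisson property is established, $H_{a_0}$-trajectories in $T^*G$ push forward to integral curves of the Hamiltonian vector field $X_h$ on the orbit, giving part \textbf{(i)} of the theorem; reconstruction $\alpha(t) = g(t)\mu(t)$ by integrating $\dot g(t) = g(t)\,\delta h/\delta \mu(t)$ with $g(0) = e$ then recovers the solution on $T^*G$, as in Theorem \ref{ALPSD}. For part \textbf{(ii)}, since each orbit is a symplectic leaf of $\{\cdot,\cdot\}_\sigma$, the Hamiltonian vector field on $\mathcal{O}^\sigma_{(\mu_0, a_0)}$ is insensitive to extensions of $h$ off the leaf, so writing out $\dot f = \{f, h\}_\sigma$ for an arbitrary extension via formula \eqref{affine_LP} and reading off the coefficients of $\delta f/\delta\mu$ and $\delta f/\delta a$ yields the displayed evolution equations, with the $\diamond$ and $\mathbf{d}c^T$ terms arising from the Lie bracket on $\mathfrak{s}$ and the cocycle correction, respectively.

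The main obstacle I anticipate is establishing the Poisson property of $\phi_{a_0}$ rigorously without the luxury, used in Theorem \ref{ALPSD}, of a fully $G$-invariant Hamiltonian on $T^*G \times V^*$; the hypothesis here gives only $G^c_{a_0}$-invariance of $H_{a_0}$. I expect the cleanest argument bypasses this obstruction by working with the magnetic symplectic structure on $T^*S$ and its equivariant momentum map \eqref{total_momentum}, restricting to the $S$-orbit of $(\alpha_g,(0,a_0))$ and identifying its reduction with the affine coadjoint orbit $\mathcal{O}^\sigma_{(\mu_0,a_0)}$; the $G^c_{a_0}$-invariance of $H_{a_0}$ then suffices because only the image of $\phi_{a_0}$, not a global equivariance, is needed for the dynamics to descend.
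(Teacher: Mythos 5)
Your strategy is sound and reaches the paper's conclusions, but it is organized differently, so a comparison is worthwhile. The paper proceeds by symplectic \emph{point} reduction by stages: it first reduces $T^*S$ by the normal subgroup $V$ at $a_0$, obtaining $(T^*G,\Omega_{\rm can})$; it then reduces by the residual cotangent-lifted $G^c_{a_0}$-action at $\mu_{a_0}=\mu|_{\mathfrak{g}^c_{a_0}}$ and invokes the Reduction by Stages Theorem for nonequivariant momentum maps to identify the two-stage reduced space with $\mathbf{J}^{-1}(\mu,a_0)/S^{\sigma}_{(\mu,a_0)}\cong\left(\mathcal{O}^{\sigma}_{(\mu,a_0)},\omega^-\right)$; part \textbf{(i)} is then the classical reduction/reconstruction theorem applied to the $G^c_{a_0}$-invariant $H_{a_0}$. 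You instead work at the Poisson level, pushing everything through the single map $\phi_{a_0}$ and using that Hamiltonian vector fields are tangent to symplectic leaves; this shortens the route to \textbf{(i)} (no Reduction by Stages needed) at the price of having to prove that $\phi_{a_0}$ is Poisson, which you correctly flag as the crux. The clean way to close that step is precisely the first stage of the paper's argument: $T^*G\cong T^*G\times\{a_0\}$ is a symplectic leaf of the quotient Poisson manifold $T^*S/V\cong T^*G\times V^*$ (this is where the paper's identification $(T^*S)_{a_0}\cong(T^*G,\Omega_{\rm can})$ enters), the inclusion of a symplectic leaf is a Poisson map, and the further quotient $T^*S/V\rightarrow T^*S/S\cong\mathfrak{s}^*$ endowed with the affine Lie-Poisson bracket \eqref{affine_LP} is Poisson; composing gives the Poisson property of $\phi_{a_0}$. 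Your phrase about ``modifying the canonical structure on $T^*G\times V^*$ by the magnetic term'' is the one place to tighten: the relevant structure is the quotient Poisson structure, whose leaves carry the \emph{canonical} form, while the magnetic form and fiber translation in the paper serve only to compute the momentum map \eqref{total_momentum}. Note also that your preliminary check that the trajectory stays in $\mathcal{O}^{\sigma}_{(\mu_0,a_0)}$ becomes redundant once $\phi_{a_0}$ is Poisson, since an integral curve of a Hamiltonian vector field on a Poisson manifold automatically remains in the symplectic leaf through its initial point. Your treatment of \textbf{(ii)} coincides with the paper's: both rest on the fact that $X_{\varphi}|_L=X_{\varphi|_L}$ for a symplectic leaf $L$, so the arbitrary extension of $h$ to $\mathfrak{s}^*$ is immaterial.
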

\begin{remark}{\rm It important to observe that the given Hamiltonian  
$h$ is not defined on the whole dual Lie algebra $\mathfrak{s}^*$.  
Part \textbf{ii} of the theorem states that the equations of motion  
can be nevertheless computed from the usual formula of an affine  
Lie-Poisson vector field by arbitrarily extending $h $ to  
$\mathfrak{s} ^\ast$. Note that $\delta h / \delta \mu$ and $\delta h  
/ \delta a$ are only defined when one thinks of $h $ as being defined  
on $ \mathfrak{s} ^\ast$. It will be shown in the theorem that the  
extension of $h $ does not matter. This difficulty will appear  
concretely when dealing with the molecular strand.}
\end{remark}

\begin{proof}$\quad$\\
 \textbf{(i)} The action $\Psi$ of $S$ on $T^*S$ induces  
an action of $V$ given by
\[
(\alpha_h,(u,a))\mapsto(\alpha_h,v+u,a).
\]
Since $V$ is a closed subgroup of $S$, this action admits a momentum map
given by
\[
\mathbf{J}_V(\alpha_g,(u,a))=a.
\]
Since $V$ is an Abelian group, the coadjoint isotropy group of $a_0\in V^*$ is
$V_{a_0}=V$ and the first reduced space  
$(T^*S)_{a_0}=\mathbf{J}_V^{-1}(a_0)/V$ is
symplectically diffeomorphic to the canonical symplectic manifold
$(T^*G,\Omega_{\rm can})$. The action $\Psi$ of $S$ on $T^*S$ restricts to an
action $\Psi^{a_0}$ of $G_{a_0}^c\,\circledS\,V$ on  
$\mathbf{J}_V^{-1}(a_0)$. Passing to quotient spaces, this action  
induces an action of $G^c_{a_0}$ on $(T^*S)_{a_0}$, which is readily  
seen to be the cotangent lifted action of $G^c_{a_0}$ on $T^*G$. We  
denote by
$\mathbf{J}_{a_0} : (T^*S)_{a_0}\to(\mathfrak{g}^c_{a_0})^*$ the  
associated equivariant
momentum map, where $\mathfrak{g}_{a_0}^c$ is the Lie algebra of  
$G_{a_0}^c$. Reducing
$(T^*S)_{a_0}$ at the point $\mu_{a_0}:=\mu|\mathfrak{g}^c_{a_0}$, we  
get the second
reduced space  
$\left((T^*S)_{a_0}\right)_{\mu_{a_0}}=\mathbf{J}_{a_0}^{-1}(\mu_{a_0})/(G_{a_0}^c)_{\mu_{a_0}}$, with symplectic form denoted by  
$(\Omega_{a_0})_{\mu_{a_0}}$.

By the Reduction by Stages Theorem for nonequivariant momentum maps  
\cite{MaMiOrPeRa2007}, the second
reduced space is symplectically diffeomorphic to the reduced space
\[
\left(\mathbf{J}^{-1}(\mu,a_0)/S^\sigma_{(\mu,a_0)},\Omega_{(\mu,a_0)}\right)
\]
obtained by reducing $T^*S$ by the whole group $S$ at the point
$(\mu,a_0)\in\mathfrak{s}^*$. By affine Lie-Poisson reduction, this  
space is symplectically diffeomorphic to the affine coadjoint orbit
\[
\left(\mathcal{O}^\sigma_{(\mu,a_0)},\omega^-\right)
\]
endowed with the affine orbit symplectic symplectic form.

Note finally that by the symplectic reduction theorem, any solution of  
Hamilton's equations associated to $H_{a_0}$ on $T^*G$ reduces to and  
is reconstructed from a solution of  Hamilton's equations for the  
reduced Hamiltonian $h_{\mu_{a_0}}:  
\mathbf{J}_{a_0}^{-1}(\mu_{a_0})/(G_{a_0}^c)_{\mu_{a_0}} \rightarrow  
\mathbb{R}$, for a given momentum value  
$\mu_{a_0}\in(\mathfrak{g}^c_{a_0})^*$. As we have seen, this reduced  
space is symplectically diffeomorphic to the affine coadjoint orbit  
$\mathcal{O}^\sigma_{(\mu,a_0)}\subset\mathfrak{s}^*$, where  
$\mu\in\mathfrak{g}^*$ is such that  
$\mu|_{\mathfrak{g}^c_{a_0}}=\mu_{a_0}$. Thus, we can think of $h_{  
\mu_{a_0}}$ as being defined on $\mathcal{O}^\sigma_{(\mu,a_0)} $.
Viewed this way, $h_{ \mu_{a_0}}$ is simply the restriction of the  
function $h$ constructed from $H_{a_0}$ by
\[
h(\mu,\theta_g(a_0))=H(g^{-1}\mu,a_0).
\]
Note that $h$ is defined on any affine coadjoint orbit  
$\mathcal{O}_{(\mu,a_0)}^ \sigma$ with fixed $a _0\in V ^\ast$ since
\[
\mathfrak{g}^*\times \mathcal{O}^c_{a_0}=\bigcup_{\mu\in\mathfrak{g}^*}\mathcal{O}^\sigma_{(\mu,a_0)}\subset\mathfrak{s}^*.
\]

\textbf{(ii)} We begin by recalling a general fact from the theory of  
Poisson manifolds. Let $\varphi \in C ^{\infty}(P) $, where $P $ is a  
Poisson manifold and $X_ \varphi$ its Hamiltonian vector field. If $L  
$ is a symplectic leaf of $P $, then $X_ \varphi|_L = X_{ \varphi|L}$,  
where the right hand side is the Hamiltonian vector field on the  
symplectic manifold $L $. In our case $P = \mathfrak{s} ^\ast$ and $L  
= \mathcal{O}_{(\mu,a_0)}^ \sigma$.
\end{proof}
\medskip

\begin{remark}[The case $a_0=0$ and the charged strand]\label{remark_Hamiltonian_side}{\rm The Lagrangian
\[
L_0(v_g)=K(v_g)-E_{loc}(c(g^{-1}))-E_{np}(\zeta(g),c(g^{-1}))
\]
discussed in Remark \ref{remark_Lagrangian_side} is hyperregular, thus it induces the $G^c_0$-invariant Hamiltonian
\[
H_0(\alpha_g)=K(\alpha_g)+E_{loc}(c(g^{-1}))+E_{np}(\zeta(g),c(g^{-1}))
\]
whose reduced expression on $\mathfrak{g}^*\times\mathcal{O}^c_0$ reads
\[
h(\mu,c(g^{-1}))=\frac{1}{2}\|\mu\|^2+E_{loc}(c(g^{-1}))+E_{np}(\zeta (g), c(g^{-1})).
\]
As on the Lagrangian side, for $(\mu,a)\in \mathfrak{g}^*\times\mathcal{O}^c_0$ (or $(\mu,a)\in \mathcal{O}^{\sigma}_{(\mu_0,0)}$), we can write
\[
h(\mu,a)=\frac{1}{2}\|\mu\|^2+E_{loc}(a)+E_{np}(\zeta(g_a),a),
\]
where $g_a \in G$ is any group element satisfying $c(g_a^{-1})=a$.
}
\end{remark}

\begin{remark}[Affine coadjoint orbits and Noether's  theorem]{\rm As we have seen, the solution $(\mu,a)$ evolves on an affine coadjoint orbit, for any $G^c_{a_0}$-invariant Hamiltonian $H_{a_0}$. If $L_{a_0}$ is the Lagrangian of a simple mechanical system with symmetry then, by Noether's theorem, the solution $(\xi,a)$ is constrained to evolve on the submanifolds
\[
\left(\mathcal{O}^\sigma_{(\mu_0,a_0)}\right)^\sharp=\left \{(\xi,a)\in \mathfrak{g}\times V^*\mid (\xi^\flat,\mu)\in  \mathcal{O}^\sigma_{(\mu_0,a_0)}\right\}.
\]
}
\end{remark}

\subsection{Application to the charged strand}\label{Application_strand}

In this subsection we apply the affine Euler-Poincar\'e and Lie-Poisson reduction theorems to the molecular strand. In order to give a more transparent vision of the underlying geometric structures, we consider the $n$-dimensional generalization described in Subsection \ref{n_dimensional_generalization}, that is, we replace the interval $[0,L]$ be an arbitrary manifold $\mathcal{D}$ and we replace $SE(3)$ by the semidirect product $S=\mathcal{O}\,\circledS\,E$ of a Lie group $\mathcal{O}$ with a  \textit{left\/} representation space $E$. Given a manifold $\mathcal{D}$, we define the group $G:=\mathcal{F}(\mathcal{D},S)$ and the dual vector space $V^*:=\Omega^1(\mathcal{D},\mathfrak{s})\oplus\mathcal{F}(\mathcal{D},E)$. The elements of the group $G$ are denoted by $(\Lambda,r)$, where $\Lambda:\mathcal{D}\rightarrow \mathcal{O}$ and $r:\mathcal{D}\rightarrow E$. The elements of $V^*$ are denoted by $(\Omega,\Gamma,\rho)$, where $\Omega\in\Omega^1(\mathcal{D},\mathfrak{o})$, $\Gamma\in\Omega^1(\mathcal{D},E)$, and $\rho:\mathcal{D}\rightarrow E$. The space $V^*$ can be seen as the dual of $V=\mathfrak{X}(\mathcal{D},\mathfrak{s}^*)\oplus\mathcal{F}(\mathcal{D},E^*)$, where $\mathfrak{X}(\mathcal{D},\mathfrak{s})$ is the space of $\mathfrak{s}$-valued vector fields on $\mathcal{D}$.

Consider the representation of $G$ on $V^*$ defined by
\begin{equation}\label{left_representation}
(\Lambda,r)(\Omega,\Gamma,\rho)=(\operatorname{Ad}_{(\Lambda,r)}(\Omega,\Gamma),\Lambda\rho)
\end{equation}
where the adjoint action is that of $S$, acting here on functions defined on $\mathcal{D}$, and $\Lambda\rho$ denotes the left representation of $\mathcal{O}$ on $E$, acting on functions. The main object for this approach is the {\bfi group one-cocycle} $c$ appearing already implicitly in the definition of the variables $\bOm, \bGam, \brho$ in \eqref{rhodef}, \eqref{rhodef_n_dimensional}, and explicitly in \eqref{cocycle}. Recall that it is given by
\[
c(\Lambda,r):=\left((\Lambda,r)\mathbf{d}(\Lambda,r)^{-1},-r\right).
\]
Let's verify the cocycle identity  for the first component $(\Lambda,r)\mathbf{d}(\Lambda,r)^{-1} $. To simplify notation, denote $\chi_i: = ( \Lambda_i, r_i ) \in \mathcal{F}( \mathcal{D}, S), i\in \{1;2\}$. We have
\begin{align*}
 \chi_1 \chi_2 \mathbf{d}(\chi_1 \chi_2)^{-1} 
 &=   \chi_1 \chi_2 \mathbf{d}(\chi_2^{-1}\chi_1  ^{-1}) 
 =  \chi_1 \chi_2  \mathbf{d}(\chi_2^{-1}) \chi_1  ^{-1} + 
 \chi_1 \chi_2 \chi_2^{-1}\mathbf{d}(\chi_1  ^{-1}) \\
 & = \operatorname{Ad}_ {\chi_1} \left(\chi_2  \mathbf{d}\chi_2^{-1} \right)
 +  \chi_1 \mathbf{d}(\chi_1  ^{-1}).
\end{align*}
Since the second coordinate of $\left(( \Lambda_1,r _1) ( \Lambda_2,r _2) \right)$ is equal to   $r= r _1 + \Lambda _1 r _2$,  we find
\begin{align*}
&c \left(( \Lambda_1,r _1) ( \Lambda_2,r _2) \right)= \\
&\qquad\quad = \left(\operatorname{Ad}_ {( \Lambda_1,r _1)} \left(( \Lambda_2,r _2)  \mathbf{d}( \Lambda_2,r _2)^{-1} \right)
 +  ( \Lambda_1,r _1) \mathbf{d}(( \Lambda_1,r _1)^{-1}), - r _1 - 
 \Lambda _1 r _2  \right) \\
 &\qquad\quad = \left( \operatorname{Ad}_ {( \Lambda_1,r _1)} \left(( \Lambda_2,r _2)  \mathbf{d}( \Lambda_2,r _2)^{-1} \right), -  \Lambda _1 r _2 \right) + 
 \left(  (\Lambda_1,r _1) \mathbf{d}(( \Lambda_1,r _1)^{-1}), - r _1 \right)\\
 &\qquad\quad = ( \Lambda_1,r _1) c ( \Lambda_2, r_2) + 
 c ( \Lambda_1, r_1).
\end{align*} 
This shows that $c$ verifies the cocycle property \eqref{cocycle_identity} relative to the representation \eqref{left_representation}.$\qquad\blacklozenge$

\medskip
Note that the first component of $c$ is the left version of the cocycle appearing in the theory of complex fluids; see \cite{Gay-Bara2007}. Using the expressions
\[
(u,w,f)\diamond (\Omega,\Gamma,\rho)=(\operatorname{ad}^*_{\Omega_i}u^i+w^i\diamond\Gamma_i+f\diamond \rho,-\Omega_iw^i)
\,,
\]
\[
\mathbf{d}c(\omega,\gamma)=(-\mathbf{d}\omega,-\mathbf{d}\gamma,-\gamma),\quad\text{and}\quad \mathbf{d}c^T(u,w,f)=(\operatorname{div}(u),\operatorname{div}(w)-f)
\,,
\]
the affine Euler-Poincar\'e equations \eqref{AEP} become
\begin{equation}\label{generalized_molec_strand}
\left\lbrace\begin{array}{l}
\displaystyle\vspace{0.2cm}\left(\prt_t-\operatorname{ad}^*_{\omega}\right)\frac{\delta l}{\delta\omega}+\lp \operatorname{div} - \ad^*_\Om\rp\frac{\delta l}{\delta\Omega}=\frac{\delta l}{\delta\gamma}\diamond\gamma+\frac{\delta l}{\delta\Gamma}\diamond\Gamma+\frac{\delta l}{\delta\rho}\diamond\rho
\,,\\
\displaystyle\left(\prt_t+\omega\right)\frac{\delta l}{\delta\gamma}+\left(\operatorname{div}+\Omega\right)\frac{\delta l}{\delta\Gamma}=\frac{\delta l}{\delta\rho}
\,.
\end{array}\right.\end{equation}
and the {\bfi advection equations} are
\begin{equation}\label{kincoc}
\left\lbrace\begin{array}{l}
\displaystyle\vspace{0.2cm}
\prt_t\Omega+\operatorname{ad}_\omega\Omega=\mathbf{d}\omega
\,,\\
\displaystyle\vspace{0.2cm}\lp \prt_t+\omega\rp\Gamma=\lp\mathbf{d}+\Omega\rp\gamma
\,,\\
\displaystyle\vspace{0.2cm}
\prt_t\rho+\omega\rho=\gamma
\,.
\end{array}\right.
\end{equation}
\begin{remark}{\rm To write these equations, we have supposed that the dynamics is described by a Lagrangian $l$ given explicitly in terms of the variables $(\omega,\gamma,\Omega,\Gamma,\rho)$. Equivalently, we have assumed that $l$ is induced by an affine left-invariant Lagrangian $L$ defined on $TG\times V^*$. As we have seen in \S\ref{convected_representation}, such a hypothesis is not verified when nonlocal terms are taken into account. In this case, the affine Euler-Poincar\'e and affine Lie-Poisson reductions are not applicable and one needs to restrict to a particular value of the parameter $a_0$, by using Theorems \ref{coadjoint_lagrangian_reduction} and \ref{coadjoint_reduction}. For convenience, we first present the simpler case where the nonlocal terms are ignored.  We shall call this case elastic filament dynamics for simplicity.}
\end{remark}

\subsubsection{Elastic filament dynamics and Kirchhoff's theory}

Suppose that the dynamics of the strand is described by a Lagrangian $l=l(\omega,\gamma,\Omega,\Gamma,\rho)$ defined on $\mathfrak{g}\times V^*$, where $\mathfrak{g}=\mathcal{F}(\mathcal{D},\mathfrak{s})$ and $V^*=\Omega^1(\mathcal{D},\mathfrak{s})\oplus\mathcal{F}(\mathcal{D},E)$.  The Lagrangian $l$ is induced by a left invariant Lagrangian $L$ defined on $TG\times V^*$, where $G=\mathcal{F}(\mathcal{D},S)$.

Note that there is no restriction in the way $l$ depends on the variables. In particular the dependence can be nonlocal. However, it is supposed here that $l$ depends explicitly on the variables $(\omega,\gamma,\Omega,\Gamma,\rho)$. Recall that such an hypothesis is verified for the Lagrangian of Kirchhoff's theory \eqref{lkirchhoff} but is not verified for the Lagrangian of the molecular strand \eqref{locL}. 

The affine Euler-Poincar\'e reduction applies as follows. Fix the initial values $(\Omega_0,\Gamma_0,\rho_0)$ and define the Lagrangian
\[
L_{(\Omega_0,\Gamma_0,\rho_0)}(\Lambda,r):=L(\Lambda,r,\Omega_0,\Gamma_0,\rho_0).
\]
Consider a curve $(\Lambda,r)\in G$ and define the quantities
\begin{align*}
(\Omega,\Gamma,\rho)&=(\Lambda,r)^{-1}(\Omega_0,\gamma_0,\rho_0)+c((\Lambda,r)^{-1})\\
&=(\operatorname{Ad}_{\Lambda^{-1}}\Omega_0,\Lambda^{-1}(\Gamma_0+\Omega_0r),\Lambda^{-1}\rho_0) + (\Lambda^{-1}\mathbf{d}\Lambda,\Lambda^{-1}\mathbf{d}r,\Lambda^{-1}r).
\end{align*}
and
\[
\omega=\Lambda^{-1}\dot\Lambda,\quad\gamma=\Lambda^{-1}\dot r.
\]
Note that when the initial values $\Omega_0, \Gamma_0, \rho_0$ are zero, the definition of the variables $\omega, \gamma, \Omega, \Gamma, \rho$ coincide with those given in \eqref{rhodef} and \eqref{rhodef_n_dimensional}.

Then the curve $(\Lambda,r)$ is a solution of the Euler-Lagrange equations associated to $L_{(\Omega_0,\Gamma_0,r_0)}$ on $TG$ if and only if $(\omega,\gamma,\Omega,\Gamma,\rho)$
is a solution of the Euler-Poincar\'e equations \eqref{generalized_molec_strand}.

Of course, when $\mathcal{D}$ is the interval $[0,L]$ and $S$ is the semidirect product of $\mathcal{O}=SO(3)$ with $E=\mathbb{R}^3$, then we recover from  \eqref{generalized_molec_strand} the dynamical equation of the charged strand \eqref{Final_Euler_Poincare_equations}, since 
\[
\ad^* \to -\times
\quad\hbox{and}\quad
\di \to \times
.
..
\]
These equations are the convective representation of Kirchhoff's equations.
From \eqref{kincoc} we recover the advection relations derived in Subsection \ref{kinematics}.

\subsubsection{The charged strand: general case}

Recall from \S\ref{convected_representation} that the Lagrangian of the molecular strand has the expression
\[
l=l_{loc}(\bom,\bgam,\bOm,\bGam,\brho)+l_{np}(\xi,\bkappa,\bGam),
\]
where $l_{loc}$ is a local function of the form
\begin{align}\label{l_loc_expression}
l_{loc}(\bom,\bgam,\bOm,\bGam,\brho)=K(\bom,\bgam)-E_{loc}(\bOm,\bGam,\brho)
\end{align}
and $l_{np}$ is of the form
\[
l_{np}(\xi,\bkappa,\bGam)=\iint U\left(\xi(s,s'),\bkappa(s,s'),\bGam(s),\bGam(s')\right)dsds',
\]
where
\[
U:SE(3)\times\mathbb{R}^3\times\mathbb{R}^3\rightarrow\mathbb{R}\quad\text{and}\quad \left(\xi(s,s'),\bkappa(s,s')\right):=(\Lambda,\br)^{-1}(s)(\Lambda,\br)(s').
\]

\begin{remark}[Two crucial observations]$\quad$\\
\begin{enumerate}\rm
\item The nonlocal Lagrangian $l_{np}$ is induced by a $SO(3)$-invariant potential $E_{np}=E_{np}(\Lambda,\br)$. Thus the total Lagrangian $l$ can be seen as being induced by the $SO(3)$-invariant Lagrangian $L_0=L_0(\Lambda,\dot\Lambda,\br,\dot \br)$ given by
\[
L_0(\Lambda,\dot\Lambda,\br,\dot \br)=K(\Lambda,\dot\Lambda,\br,\dot \br)-E_{loc}\left(c\left((\Lambda,\br)^{-1}\right)\right)-E_{np}(\Lambda,\br),
\]
where $K$ is the $\mathcal{F}( \mathcal{D}, SE(3))$-left invariant extension of the kinetic energy $K$ in \eqref{l_loc_expression}. Note that we have replaced the dependence of $E_{loc}$ on $(\bOm,\bGam,\brho)$ by a dependence on $(\Lambda,\br)$ through the cocycle $c$. The affine Euler-Poincar\'e dynamics will yield the relation $(\bOm,\bGam,\brho)=c\left((\Lambda,\br)^{-1}\right)$ which allows us to recover the dependence of the potential on $(\bOm,\bGam,\brho)$.

\item The group $SO(3)$ is precisely the isotropy group 
\[
G^c_0=\mathcal{F}( \mathcal{D}, SE(3))^c_0 = \{(\Lambda,\br)\in G\mid c(\Lambda,\br)=0\}
\]
of the affine action at zero.
\end{enumerate}
\end{remark}

These two remarks allow us to obtain the dynamics of the molecular strand by the affine reduction processes described in Theorems \ref{coadjoint_lagrangian_reduction} and \ref{coadjoint_reduction}. As before, we choose to work with the general framework involving $\mathcal{D}$ and $\mathcal{O}\,\circledS\,E$. The present approach is applicable to any $\mathcal{O}$-invariant Lagrangian
\[
L_0=L_0(\Lambda,\dot\Lambda,r,\dot r):T[\mathcal{F}(\mathcal{D},\mathcal{O}\,\circledS\,E)]\rightarrow\mathbb{R}.
\]
Note there are no conditions on the dependence of $L_0$ on the variables $( \Lambda, r)$. In particular, $L_0$ can be nonlocal, and may depend on the derivatives of $\Lambda$ and $r$. An important class of such Lagrangians is given by
\[
L_0(\Lambda,\dot\Lambda,r,\dot r)=K(\Lambda,\dot\Lambda,r,\dot r)-P(\Lambda,r),
\]
where $K$ is the kinetic energy associated to an $\mathcal{O}$-invariant metric on $\mathcal{F}(\mathcal{D},\mathcal{O}\,\circledS\,E)$ and the potential $P $ is an $\mathcal{O}$-invariant function on $\mathcal{F}(\mathcal{D},\mathcal{O}\,\circledS\,E)$. In particular, $P$ can be nonlocal, or depend on derivatives of $\Lambda$ and $r$; see \eqref{lkirchhoff} for an example. In the case of the molecular strand, $K$ is assumed to be left-invariant and $P$ is given by
\[
P(\Lambda,r)=E_{loc}\left(c\left((\Lambda,r)^{-1}\right)\right)+E_{np}(\Lambda,r),
\]
where
\begin{align*}
&E_{np}(\Lambda,r):=\iint_{\mathcal{D}}U\left(\xi(s,s'),\kappa(s,s'),\Lambda^{-1}\mathbf{d}r(s),\Lambda^{-1}\mathbf{d}r(s')\right)dsds'\\
&\left(\xi(s,s'),\kappa(s,s')\right):=(\Lambda,r)^{-1}(s)(\Lambda,r)(s')\in\mathcal{O}\,\circledS\,E
\end{align*}
and one readily sees that $E_{np}$ is $\mathcal{O}$-invariant. Recall that the cocycle is
\[
c\left((\Lambda,r)^{-1}\right)=\left(\Lambda^{-1}\mathbf{d}\Lambda,\Lambda^{-1}\mathbf{d}r,\Lambda^{-1}r\right).
\]
Thus, a straightforward and maybe useful generalization of $E_{np}$ is
\[
E_{np}(\Lambda,r):=\iint_{\mathcal{D}}U\Big(\xi(s,s'),\kappa(s,s'),c\left((\Lambda,r)^{-1}\right)(s),c\left((\Lambda,r)^{-1}\right)(s')\Big)dsds'.
\]
Using Theorem \ref{coadjoint_lagrangian_reduction} with $L_0$ we obtain the same affine Euler-Poincar\'e equations \eqref{generalized_molec_strand}, where all derivatives are total derivatives. One can equivalently use the modified Euler-Poincar\'e approach and obtain the equations
\begin{equation}\label{generalized_molec_strand_modified}
\left\lbrace\begin{array}{l}
\displaystyle\vspace{0.2cm}\left(\prt_t-\operatorname{ad}^*_{\omega}\right)\frac{\delta l}{\delta\omega}+\lp \operatorname{div} - \ad^*_\Om\rp\frac{\delta l}{\delta\Omega}=\frac{\delta l}{\delta\gamma}\diamond\gamma+\frac{\delta l}{\delta\Gamma}\diamond\Gamma+\frac{\delta l}{\delta\rho}\diamond\rho
\,,\\
\displaystyle\vspace{0.2cm} \qquad+\int \left[\xi(s,s')\frac{\partial U}{\partial\xi}(s',s)-\frac{\partial U}{\partial\xi}(s,s')\xi(s',s)-\kappa(s,s')\diamond\frac{\partial U}{\partial\kappa}(s,s')\right]ds'\\
\displaystyle\left(\prt_t+\omega\right)\frac{\delta l}{\delta\gamma}+\left(\operatorname{div}+ \Omega\right)\frac{\delta l}{\delta\Gamma}=\frac{\delta l}{\delta\rho}+\int\left[\xi(s,s')\frac{\partial U}{\partial\kappa}(s',s)-\frac{\partial U}{\partial\kappa}(s,s')\right]ds'
\,.
\end{array}\right.
\end{equation}
Note that here the derivatives are not total derivatives, see the discussion in \S\ref{sec:recovering_EP}. One can treat the Hamiltonian side in a similar way. As we have seen, the motion is Hamiltonian on affine coadjoint orbits.

\subsubsection{Conservation laws and spatial formulation}

In this paragraph, we generalize the approach of Section~\ref{sec:Conservation} and reformulate the equations \eqref{generalized_molec_strand} for the generalized charged strand as a conservation law.
We first need a $n$-dimensional generalization of formula \eqref{diffAd2}. Given a Lie group $G$, a map $g:\mathcal{D}\rightarrow G$  defined on a $n$-dimensional manifold $\mathcal{D}$, $s \in \mathcal{D}$,  and a $\mathfrak{g}^*$-valued vector field $w$ on $\mathcal{D}$, we have
\begin{equation}\label{Ad*_div_formula}
\operatorname{Ad}^*_{g}\left[\operatorname{div}\left(\operatorname{Ad}^*_{g^{-1}}w\right)\right]=\operatorname{div}w-\operatorname{ad}^*_{\sigma_i}w^i=:\operatorname{div}^\sigma w,\quad\sigma:=g^{-1}\mathbf{d}g\in\Omega^1(\mathcal{D},\mathfrak{g}).
\end{equation}
Using this formula, \eqref{diffAd2}, the expression of $\operatorname{ad}^*$ associated to the semidirect product $\mathcal{O}\,\circledS\,E$, and the equalities
\[
(\omega,\gamma)=(\Lambda,r)^{-1}(\dot\Lambda,\dot r),\quad (\Omega,\Gamma)=(\Lambda,r)^{-1}\mathbf{d}(\Lambda,r),
\]
we find
\begin{align*}
&\operatorname{Ad}^*_{(\Lambda,r)}\frac{\partial}{\partial t}\left[\operatorname{Ad}^*_{(\Lambda,r)^{-1}}\left(\frac{\delta l}{\delta\omega},\frac{\delta l}{\delta\gamma}\right)\right]\\
&\qquad\qquad=\frac{\partial}{\partial t}\left(\frac{\delta l}{\delta\omega},\frac{\delta l}{\delta\gamma}\right)+\left(-\operatorname{ad}^*_\omega\frac{\delta l}{\delta\omega}+\gamma\diamond\frac{\delta l}{\delta\gamma},\omega\frac{\delta l}{\delta\gamma}\right),
\end{align*}
and
\begin{align*}
&\operatorname{Ad}^*_{(\Lambda,r)}\operatorname{div}\left[\operatorname{Ad}^*_{(\Lambda,r)^{-1}}\left(\frac{\delta l}{\delta\Omega},\frac{\delta l}{\delta\Gamma}\right)\right]\\
&\qquad\qquad=\operatorname{div}\left(\frac{\delta l}{\delta\Omega},\frac{\delta l}{\delta\Gamma}\right)+\left(-\operatorname{ad}^*_\Omega\frac{\delta l}{\delta\Omega}+\Gamma\diamond\frac{\delta l}{\delta\Gamma},\Omega\frac{\delta l}{\delta\Gamma}\right).
\end{align*}

Thus, equations \eqref{generalized_molec_strand_modified} can be rewritten in the form of a conservation law, namely
\begin{align}
\frac{\partial}{\partial t} \left[\operatorname{Ad}^*_{(\Lambda,r)^{-1}}\left(\frac{\delta l}{\delta\omega},\frac{\delta l}{\delta\gamma}\right)\right]+
& \operatorname{div}\left[\operatorname{Ad}^*_{(\Lambda,r)^{-1}}\left(\frac{\delta l}{\delta\Omega},\frac{\delta l}{\delta\Gamma}\right)
\right] 
\nonumber 
\\ 
& = 
\operatorname{Ad}^*_{(\Lambda,r)^{-1}}\left(\frac{\delta l}{\delta\rho}\diamond\rho \, , \, \frac{\delta l}{\delta\rho}\right) 
\, . 
\label{consgen0} 
\end{align} 
Using \eqref{coadjoint_action_group}, the right hand side simplifies to 
\begin{align*} 
\operatorname{Ad}^*_{(\Lambda,r)^{-1}}\left(\frac{\delta l}{\delta\rho}\diamond\rho \, , \, \frac{\delta l}{\delta\rho}\right) 
 &= 
\lp 
{\rm Ad}^*_{\Lambda^{-1}} \left(\frac{\delta l}{\delta\rho}\diamond\rho \right) 
+ r \diamond \lp \Lambda \frac{\delta l}{\delta\rho} \rp 
\, , \, 
\Lambda \frac{\delta l}{\delta\rho}
\rp 
\\ 
&= 
\lp
\left(\Lambda \frac{\delta l}{\delta\rho}\diamond \Lambda \rho \right) 
+ r \diamond \lp \Lambda \frac{\delta l}{\delta\rho} \rp 
\, , \, 
\Lambda \frac{\delta l}{\delta\rho}
\rp 
= \left(0, \Lambda \frac{\delta l}{\delta\rho}\right) 
\, , 
\end{align*}
since $\rho = \Lambda^{-1}r $.
Note that this is the exact equivalent of the simplification \eqref{Tffin} 
derived at the beginning of the paper.

Such a conservation law is valid for each solution of the affine Euler-Poincar\'e equation 
\eqref{AEPSD} associated to a $G^c_0$-invariant Lagrangian $L_0:TG\rightarrow\mathbb{R}$. In particular, it is valid for the Kirchhoff's theory as we saw at end of \S \ref{sec:Elastic-Rod}.

A short computation shows that, in general, the previous conservation law reads
\begin{equation}\label{general_conservation_law_0}
\frac{\partial}{\partial t}\left[\operatorname{Ad}^*_{g^{-1}}\frac{\delta l}{\delta\xi}\right]+\mathbf{d}c^T\left(g\frac{\delta l}{\delta a}\right)=0.
\end{equation}
When $a_0$ is not necessarily zero, the previous formula becomes
\begin{equation}\label{general_conservation_law_a_0}
\frac{\partial}{\partial t}\left[\operatorname{Ad}^*_{g^{-1}}\frac{\delta l}{\delta\xi}\right]+\mathbf{d}c^T\left(g\frac{\delta l}{\delta a}\right)=\operatorname{Ad}^*_{g^{-1}}\left(\frac{\delta l}{\delta a}\diamond g^{-1}a_0\right).
\end{equation}

\subsubsection{The fixed filament and its conservation law}\label{cocycle_fixed_filament}

The equations \eqref{rigid_bodies} for a fixed filament can also be obtained by affine Euler-Poincar\'e reduction. It suffices to apply Theorem \ref{coadjoint_lagrangian_reduction} with the group $G=\mathcal{F}(\mathcal{D},\mathcal{O})\ni\Lambda$, acting on the vector space  $\Omega^1(\mathcal{D},\mathfrak{o})\times\mathcal{F}(\mathcal{D},E)\ni(\Omega,\rho)$ by the affine action
\[
(\Omega,\rho)\mapsto \theta_\Lambda(\Omega,\rho):=(\operatorname{Ad}_\Lambda\Omega+\Lambda\mathbf{d}\Lambda^{-1},\Lambda\rho).
\]
Note that the cocycle is $c(\Lambda)=\left(\Lambda\mathbf{d}\Lambda^{-1},0\right)$. Using the expressions
\[
(u,f)\diamond (\Omega,\rho)=\operatorname{ad}^*_{\Omega_i}u^i+f\diamond\rho,
\]
\[
\mathbf{d}c(\omega)=(-\mathbf{d}\omega,0)\quad\text{and}\quad \mathbf{d}c^T(u,f)=\operatorname{div}(u),
\]
the affine Euler-Poincar\'e equations \eqref{AEP} become
\begin{equation}\label{AEP_fixed_fil}
\left(\partial_t-\operatorname{ad}^*_\omega\right)\frac{\delta l}{\delta\omega}+\left(\operatorname{div}-\operatorname{ad}^*_\Omega\right)\frac{\delta l}{\delta\Omega}=\frac{\delta l}{\delta\rho}\diamond\rho
\end{equation}
and the advection equations are
\begin{equation}
\label{AEP_fixed_fil_advected}
\left\{
\begin{array}{l}
\vspace{0.2cm}\partial_t\Omega+\operatorname{ad}_\omega\Omega=\mathbf{d}\omega\,,\\
\partial_t\rho+\omega\rho=0\,.
\end{array}
\right.
\end{equation}
\medskip

Recall from \S\ref{fixedfilament} that the Lagrangian for a fixed filament is of the form
\[
l=l_{loc}(\bom,\bOm)+l_{np}(\xi,\brho),
\]
\[
l_{loc}(\bom,\bOm)=K(\bom)-\frac{1}{2}\int f(\bOm(s))ds, \quad l_{np}(\xi,\brho)=-\iint U(\brho(s),\xi(s,s'))dsds'
\]
where
\[
f:\mathbb{R}^3\rightarrow\mathbb{R},\quad U:\mathbb{R}^3 \times  SO(3)\rightarrow\mathbb{R},\quad \xi(s,s'):=\Lambda^{-1}(s)\Lambda(s').
\]
Using the relations $\omega=\Lambda^{-1}\dot\Lambda$, $\Omega=\Lambda^{-1}\Lambda'$, and $\brho=\Lambda^{-1}\brho_0$, where $\brho_0(s):=\br(s)=(s,0,0)^T$, we obtain that $l$ is induced by a $SO(2)$-invariant Lagrangian $L_{(0,\br)}=L_{(0,\br)}(\Lambda,\dot\Lambda)$. Note that $SO(2)$ is precisely the isotropy group of $(0,\br)$ relative to the affine action.
\medskip

These observations allow us to obtain the equations for the fixed filament by the affine reduction processes described in Theorems \ref{coadjoint_lagrangian_reduction} and \ref{coadjoint_reduction}. Using the general framework involving $\mathcal{D}$ and $\mathcal{O}\,\circledS\,E$, we obtain the equations
\[
\left(\partial_t-\operatorname{ad}^*_\omega\right)\frac{\delta l}{\delta\omega}+\left(\operatorname{div}-\operatorname{ad}^*_\Omega\right)\frac{\delta l}{\delta\Omega}=\frac{\delta l}{\delta\rho}\diamond \rho+\int \left[\xi(s,s')\frac{\partial U}{\partial\xi}(s',s)-\frac{\partial U}{\partial\xi}(s,s')\xi(s',s)\right]ds'
\]
which coincides with \eqref{modified_EP_fixed_filament} in the case of the fixed filament. 
Using total derivatives, these equations can be rewritten as \eqref{AEP_fixed_fil}.
\medskip

The general formula \eqref{general_conservation_law_a_0} yields the conservation law
\[
\frac{\partial}{\partial t}\left[\operatorname{Ad}^*_{\Lambda^{-1}}\frac{\delta l}{\delta\omega}\right]+\operatorname{div}\left[\operatorname{Ad}^*_{\Lambda^{-1}}\frac{\delta l}{\delta\Omega}\right]=\operatorname{Ad}^*_{\Lambda^{-1}}\left(\frac{\delta l}{\delta\rho}\diamond \rho\right).
\]
From the general theory it follows that the solution of the advection equations \eqref{AEP_fixed_fil_advected}   in terms of $\Lambda$ are given by $\Omega=\Lambda^{-1}\mathbf{d}\Lambda$ and $\rho=\Lambda^{-1}\rho_0$.
\medskip

For the fixed filament, we choose $\mathcal{D}=[0,L]$, $E=\mathbb{R}^3$, $\mathcal{O}=SO(3)$, $\brho_0(s)=\br(s)=(s,0,0)^T$ and we get
\begin{equation}
\label{AEP_fixed_fil_conservation}
\frac{\partial}{\partial t}\left[\operatorname{Ad}^*_{\Lambda^{-1}}\frac{\delta l}{\delta\bom}\right]+\frac{\partial}{\partial s}\left[\operatorname{Ad}^*_{\Lambda^{-1}}\frac{\delta l}{\delta\bOm}\right]=\operatorname{Ad}^*_{\Lambda^{-1}}\left(\frac{\delta l}{\delta\brho}\times \brho\right).
\end{equation}
Note that in this case, the torque does not vanish. The explanation is that
the initial value $\brho_0$ of $\brho$ is not zero, so we need to use \eqref{general_conservation_law_a_0} instead of \eqref{general_conservation_law_0}.
..
Observe that we can write
\[
\operatorname{Ad}^*_{\Lambda^{-1}}\left(\frac{\delta l}{\delta\brho}\times \brho\right)=\Lambda\frac{\delta l}{\delta\brho}\times \Lambda\brho=\Lambda\frac{\delta l}{\delta\brho}\times\left(\begin{array}{l} s\\0\\0\end{array}\right).
\]
More generally, the right hand side is 
\[
\left(\Lambda\frac{\delta l}{\delta\brho}\right)\times\br,
\]
where $\br$ describes the fixed filament.

Note that the conservation law \eqref{AEP_fixed_fil_conservation} does not appear in \S\ref{fixedfilament}. It is a particular case of the general formula   \eqref{general_conservation_law_a_0}. We believe that the derivation of this law through the affine Euler-Poincar\'e theory is interesting and shows the breadth of application of our theories.

\section{New variables: Coordinate change and horizontal-vertical split}
\label{sec:Coordinatechange}
In this section, we show that a drastic simplification of the equations arises under a particular change of variables. We shall assume that the Lagrangian $l $ is only local. As far as we know, there is no general theory that can deal with the nonlocal term in the context of field theory, of which the present section is a forerunner; the field theoretic approach is developed for local Lagrangians in the next section.

We first consider the case of strands. This change of variables will then be extended to the general setting of the previous section where $[0,L] $ is replaced by a manifold $\mathcal{D}$ and $\operatorname{SE}(3)$ by an arbitrary semidirect product associated to a representation.

\subsection{Motivation in terms of covariant derivatives}

We can see from \eqref{bundle.coords} that $\brho$, $\bGam$, and $\bgam$ satisfy the following relations 
\begin{equation*}
\lp \prt_s + \bOm \times\rp \brho = \bGam, \qquad \lp \prt_t + \bom \times\rp \brho = \bgam.
\end{equation*}
Thus the reduced variables \eqref{bundle.coords} lead naturally to two differential operators which can be interpreted as covariant derivatives,
\begin{equation}\label{cov.deriv}
\DD{}{s}= \lp \prt_s + \bOm \times\rp, \qquad \DD{}{t}= \lp \prt_t + \bom \times\rp.
\end{equation}

With this interpretation we regard $\bGam$ and $\bgam$ as covariant tangent vectors above $\brho$,
\begin{equation}
\label{cov.tang}
\DD{\brho}{s} = \bGam, \qquad \DD{\brho}{t} = \bgam
\,.
\end{equation}

The operators from \eqref{cov.deriv} also appear in the equations of motion \eqref{Final_Euler_Poincare_equations} since we can write the second Euler-Poincar\'e equation in the form
\begin{equation}
\label{cov.EP}
\DD{}{t}\dede{l}{\bgam} + \DD{}{s}\dede{l}{\bGam}
 -\dede{l}{\brho}=0
 \,.
\end{equation}

When take \eqref{cov.tang} and \eqref{cov.EP} together we see that \eqref{cov.EP} is in the form of the Euler-Lagrange equations where the partial derivatives have been replaced by covariant derivatives.  With this interpretation in mind we can ask whether, by a change of variables, we can transform \eqref{cov.EP} to the canonical Euler-Lagrange form.  In this section we find that such a change of variables does exist, and we give it explicitly.  This line of enquiry leads us to consider in the subsequent Sections how the two sets of coordinates are related from a geometric point of view.

\subsection{The case of charged strands}

Consider the coordinate change 
\begin{align}
& 
\mathcal{F}([0, L], \mathfrak{so}(3)) \times \mathcal{F} ([0,L], \mathbb{R}^3) \times \Omega^1( [0, L], \mathfrak{so}(3)) \times \Omega^1([0,L], \mathbb{R}^3) \times \mathcal{F}([0, L], \mathbb{R}^3) \nonumber  \\ 
&   
\qquad \qquad \qquad  \ni    \left(\bom, \bgam, \bOm,  \bGam, \brho \right) \mapsto  
\left( \brho, \brho_s, \brho_t , \bom , \bOm \right) \in  \label{coordchange} \\
&
\mathcal{F}([0, L], \mathbb{R}^3) \times \Omega^1([0,L], \mathbb{R}^3) \times \mathcal{F}([0, L], \mathbb{R}^3) \times \mathcal{F}([0, L], \mathfrak{so}(3)) \times \Omega^1( [0, L], \mathfrak{so}(3)) \nonumber
\end{align}
where we have defined two new variables
\begin{equation}
 \brho_s = \bGam - \bOm \times \brho, \quad \brho_t = \bgam  - \bom \times \brho
\, .
\label{newvar}
\end{equation}

We shall show that the equations of motion \eqref{hpvarsig} and
\eqref{hpvarpsi} have  simple expressions if one uses  \emph{horizontal} and \emph{vertical} coordinates. As far as we know, this transformation has not been noticed before, in either nonlocal or local setting.

\begin{notation} {\rm Assume that the Lagrangian $l$ is local in the variables $(\bom,\bgam,\bOm,\bGam,\brho)$, that is, $l = l_{loc} $. 
We shall denote by $\bar l$ the integrand of the Lagrangian $l$ in terms of the new variables given by  \eqref{coordchange}, that is, we have}
\end{notation}
\[
l(\bom,\bgam,\bOm,\bGam,\brho)=\int_0^L\bar l(\brho(s),\brho_s(s),\brho_t(s),\bom(s),\bOm(s)) \mbox{d}s.
\]

\subsection{Change of coordinates}

The action principle for the  Lagrangian $l=l_{loc}$ yields
\begin{align}
0&=\delta \int l(\bom,\bgam,\bOm,\bGam,\brho)\, \mbox{d} t 
\nonumber \\
&=
\int
\left[\left\langle 
\frac{\delta l}{\delta \brho}
\, , \,
\delta \brho
\right\rangle
+
\left<
\frac{\delta l}{\delta \bgam}
\, , \,
\delta \bgam
\right>
+
\left<
\frac{\delta l }{\delta \bGam}
\, , \,
\delta \bGam
\right>
+
\left<
\frac{\delta l}{\delta \bom}
\, , \,
\delta \bom
\right>
+
\left<
\frac{\delta l}{\delta \bOm}
\, , \,
\delta \bOm
\right> \right]\mbox{d} t
\nonumber
\\
& = \delta \int \int_0^L \bar{l} (\brho, \brho_s, \brho_t, \bom, \bOm) \mbox{d}s\, \mbox{d}t  \label{changevarminaction} \\
&=
\int \left[
\left<
\frac{\delta \bar l}{\delta \brho}
\, , \,
\delta \brho
\right>
+
\left<
\frac{\delta \bar l}{\delta \brho_s}
\, , \,
\delta \brho_s
\right>
+
\left<
\frac{\delta \bar l }{\delta \brho_t}
\, , \,
\delta \brho_t
\right>
+
\left<
\frac{\delta \bar l}{\delta \bom}
\, , \,
\delta \bom
\right>
+
\left<
\frac{\delta \bar l}{\delta \bOm}
\, , \,
\delta \bOm
\right> \right] \mbox{d} t.
\nonumber
\end{align}
Define free variations $\bpsi(s)=\Lambda(s)^{-1} \delta \br(s)$ and $\Sigma(s)=\Lambda(s)^{-1} \delta \Lambda(s)$. As usual, $\Psi$ denotes the antisymmetric matrix
that is obtained from $\bpsi$ by the hat map. Then, the following theorem holds.
\begin{thm} \label{changevariables} 
The variations in $\de\brho_s$ and $\de\brho_t$ yield  dynamical equations in the following form
\begin{eqnarray}
\label{verlp}
\lp\prt_s + \bOm\times \rp\dede{\bar l}{\bOm} + \lp\prt_t + \bom\times \rp\dede{\bar l}{\bom} &=& 0\,,\\
\label{horlp}
\dede{\bar l}{\brho} - \prt_t\dede{\bar l}{\brho_t} - \prt_s\dede{\bar l}{\brho_s}
&=& 0\,.
\end{eqnarray}
\end{thm}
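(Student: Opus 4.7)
The central observation making this theorem work is that the new variables $\brho_s$ and $\brho_t$ in \eqref{newvar} coincide exactly with the partial derivatives $\partial_s\brho$ and $\partial_t\brho$. Indeed, the advection relations \eqref{rhospacederiv} and \eqref{rhotimederiv} state that $\brho'=-\bOm\times\brho+\bGam$ and $\dot\brho=-\bom\times\brho+\bgam$, so the definitions $\brho_s=\bGam-\bOm\times\brho$ and $\brho_t=\bgam-\bom\times\brho$ yield $\brho_s=\partial_s\brho$ and $\brho_t=\partial_t\brho$. Consequently $\delta\brho_s=\partial_s(\delta\brho)$ and $\delta\brho_t=\partial_t(\delta\brho)$, which converts the problem into an essentially canonical Euler--Lagrange calculation in $\brho$ combined with the already-computed variations of $\bom$ and $\bOm$ from \eqref{omegavar} and \eqref{Omegavar}.

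My strategy is to substitute the free variations $\bpsi=\Lambda^{-1}\delta\br$ and $\bsigma=(\Lambda^{-1}\delta\Lambda)^{\vee}$ into the reduced action \eqref{changevarminaction} and separate the $\bpsi$-contribution from the $\bsigma$-contribution. For the $\bpsi$-variations one has $\delta\brho=\bpsi$, $\delta\brho_s=\partial_s\bpsi$, $\delta\brho_t=\partial_t\bpsi$, while $\delta\bom=\delta\bOm=0$. Integrating by parts in $s$ and $t$ and using that $\bpsi$ is arbitrary (with suitable boundary conditions) produces
\begin{equation*}
\dede{\bar l}{\brho}-\partial_t\dede{\bar l}{\brho_t}-\partial_s\dede{\bar l}{\brho_s}=0,
\end{equation*}
which is the horizontal equation \eqref{horlp}.

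For the $\bsigma$-variations one has $\delta\brho=\brho\times\bsigma$, hence $\delta\brho_s=\brho_s\times\bsigma+\brho\times\partial_s\bsigma$ and $\delta\brho_t=\brho_t\times\bsigma+\brho\times\partial_t\bsigma$, together with $\delta\bom=\bom\times\bsigma+\partial_t\bsigma$ and $\delta\bOm=\bOm\times\bsigma+\partial_s\bsigma$. Using the cyclic identity $a\cdot(b\times c)=(a\times b)\cdot c$ and integrating by parts, the terms proportional to $\brho_s\times\bsigma$ from $\delta\brho_s$ cancel against the term $\frac{\delta\bar l}{\delta\brho_s}\times\partial_s\brho$ arising from differentiating $\frac{\delta\bar l}{\delta\brho_s}\times\brho$ (since $\partial_s\brho=\brho_s$), and analogously in $t$. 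After collecting, the $\bsigma$-coefficient takes the form
\begin{equation*}
\left(\dede{\bar l}{\brho}-\partial_s\dede{\bar l}{\brho_s}-\partial_t\dede{\bar l}{\brho_t}\right)\times\brho-\left(\partial_t+\bom\times\right)\dede{\bar l}{\bom}-\left(\partial_s+\bOm\times\right)\dede{\bar l}{\bOm}=0.
\end{equation*}
The first parenthesis is precisely the left-hand side of the horizontal equation, already shown to vanish, yielding the vertical equation \eqref{verlp}.

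The computation is essentially routine integration by parts; the only delicate point is the bookkeeping showing that the cross-product terms $\frac{\delta\bar l}{\delta\brho_s}\times\brho_s$ and $\frac{\delta\bar l}{\delta\brho_t}\times\brho_t$ cancel cleanly so that the ``horizontal'' and ``vertical'' blocks decouple. I expect this cancellation --- which is what ultimately allows the equations to split in the form stated --- to be the main step to verify carefully, and it is enabled precisely by the identifications $\brho_s=\partial_s\brho$ and $\brho_t=\partial_t\brho$ noted at the outset.
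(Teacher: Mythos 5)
Your proof is correct, and it sits squarely between the two proofs the paper itself gives for this theorem. The paper's first proof keeps $(\bpsi,\bsigma)$ as the free variations but computes $\delta\brho_s,\delta\brho_t$ from the defining relations $\brho_s=\bGam-\bOm\times\brho$, $\brho_t=\bgam-\bom\times\brho$, which forces it through Jacobi identities and the kinematic condition before the same cancellations you describe appear; its second (alternative) proof uses exactly your key observation $\brho_s=\prt_s\brho$, $\brho_t=\prt_t\brho$, but then goes one step further and takes $(\delta\brho,\bsigma)$ rather than $(\bpsi,\bsigma)$ as the independent free variations, so that no cancellation against \eqref{horlp} is needed at all — each of \eqref{verlp} and \eqref{horlp} drops out directly from one variation. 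Your version retains $\bpsi$ as the free translational variation, so the vertical equation still carries the term $\bigl(\dede{\bar l}{\brho}-\prt_s\dede{\bar l}{\brho_s}-\prt_t\dede{\bar l}{\brho_t}\bigr)\times\brho$, which you correctly dispose of using the already-established horizontal equation; the bookkeeping you flag (the cancellation of $\dede{\bar l}{\brho_s}\times\brho_s$ and $\dede{\bar l}{\brho_t}\times\brho_t$ under integration by parts, enabled by $\prt_s\brho=\brho_s$ and $\prt_t\brho=\brho_t$) checks out. The only thing you lose relative to the paper's alternative proof is the observation that the decoupling can be made manifest at the level of the variations themselves, with no a posteriori cancellation.
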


\begin{remark}{\rm 
The derivatives in the equations \eqref{verlp} and \eqref{horlp}  have now formally decoupled, although the equations themselves must be solved simultaneously because the Lagrangian $l$ depends on all the variables.  Also note that equation \eqref{horlp} is equivalent, for local Lagrangians, to \eqref{hpvarpsi} with the covariant derivatives replaced by partial derivatives (but relative to the new variables).  This gives a new interpretation to the right-hand side of \eqref{hpvarsig} as being terms that arise from the induced covariant derivative.}
\end{remark}

\begin{proof}
First, variations $\delta \brho_t$ and $\delta \brho_s$ are computed from \eqref{newvar} as follows:
\begin{eqnarray}
\delta \brho_t = \de \bgam - \delta \bom \times \brho - \bom \times \delta \brho
\label{deltabrhot}
\,,\\
\delta \brho_s = \de \bGam - \delta \bOm \times \brho - \bOm \times \delta \brho
\label{deltabrhos}
\, .
\end{eqnarray}
Then, using the identities
\begin{eqnarray*}
\delta \bom= \dot \bsigma + \bom \times \bsigma
\,,\\
\delta \bOm= \bsigma' + \bOm \times \bsigma
\,,\\
\delta \brho=- \bsigma \times \brho + \bpsi
\,,\\
\delta \bgam= \dot \bpsi + \bom \times \bpsi  - \bsigma  \times \bgam
\,,\\
\delta \bGam= \bpsi' +\bOm \times \bpsi - \bsigma \times \bGam
\,,
\end{eqnarray*}
we find, for example, from the term involving the derivatives with respect to $\brho_t$,
\begin{align}
\left<
\frac{\delta \bar l }{\delta \brho_t}
\, , \,
\delta \brho_t
\right> = &
\Bigg<
\frac{\delta \bar l }{\delta \brho_t}
\, , \,
\dot \bpsi + \bom \times \bpsi  - \bsigma  \times \bgam
\nonumber
\\
&\quad
- \Big(
 \dot \bsigma + \bom \times \bsigma
\Big) \times \brho
-
\bom \times
\Big(
- \bsigma \times \brho+\bpsi
\Big)
\Bigg>
\nonumber
\\
=&
\left< - \frac{\partial}{\partial t}
\frac{\delta \bar l }{\delta \brho_t}
\, , \,
\bpsi
\right>
\nonumber \\
&\quad
+
\left< \frac{\partial}{\partial t}
\Bigg( \rho \times
\frac{\delta \bar l}{\delta \brho_t}
\Bigg)
- \bgam \times \frac{\delta \bar l}{\delta \brho_t} -
\Big(
\brho \times \bom
\Big)
\times
 \frac{\delta l }{\delta \brho_t}
 \, , \bsigma
\right>
\,,
\label{varrhodot}
\end{align}
where we have used the Jacobi identity simplifying two triple cross products. We now employ the
.. We now employ the
kinematic condition for the derivative of $\rho$,
\[
\partial_t \brho=\bgam - \bom \times \brho
\,,
\]
to simplify the $\bsigma$ term in \eqref{varrhodot} and obtain the following simple condition
\begin{equation}
\left<
\frac{\delta \bar l}{\delta \brho_t}
\, , \,
\delta \brho_t
\right>
=
\left< - \frac{\partial}{\partial t}
\frac{\delta \bar l}{\delta \brho_t}
\, , \,
\bpsi
\right> +
\left< - \brho \times \frac{\partial}{\partial t}
\frac{\delta \bar l}{\delta \brho_t}
\, , \,
\bsigma
\right>
\, .
\label{varrhodot2}
\end{equation}
Analogously,
\begin{equation}
\left<
\frac{\delta \bar l}{\delta \brho_s}
\, , \,
\delta \brho_s
\right>
=
\left< - \frac{\partial}{\partial s}
\frac{\delta \bar l}{\delta \brho_s}
\, , \,
\bpsi
\right> +
\left< - \brho \times \frac{\partial}{\partial s}
\frac{\delta \bar l}{\delta \brho_s}
\, , \,
\bsigma
\right>
\, .
\label{varrhos2}
\end{equation}
On completing the variational principle \eqref{changevarminaction} for all variables, one sees that the only terms containing $\bpsi$ are the derivatives with respect to $\brho$, $\brho_s$ and $\brho_t$. Due to \eqref{varrhodot2} and \eqref{varrhos2}, these remaining terms yield \eqref{horlp}.

On collecting the terms proportional to $\bsigma$, we notice another cancellation. As is evident already from \eqref{varrhodot2} and \eqref{varrhos2}, all the terms involving cross products with respect to $\brho$ will cancel, as they will each be multiplied by the left hand side of \eqref{horlp} which vanishes.
Thus, derivatives with respect to $\brho$, $\brho_s$ and $\brho_t$ will not contribute to the terms proportional to $\bsigma$, so that collecting those terms will yield exactly \eqref{verlp}.
\end{proof}

\medskip

There is another approach to performing the change of variables that highlights the decoupling.  We key point is that we recognize two pieces of information we know about the variations $\de\brho$, $\de\brho_s$ and $\brho_t$.  First we consider the expression for $\de\brho$ in terms of the free variations $\bpsi$ and $\bsigma$.  The relation is given by
\[
\de\brho = \bpsi - \bsigma \times \brho
\,.
\]

This relation can be interpreted as saying that we can select any two of the variations $\bsigma$, $\bpsi$, and $\de \brho$ as a free variation and the third variation is then determined.  We find in practice that there are quantities such as $\de\bOm$ that only depend on $\bsigma$.  Therefore any selection of free variations must include $\bsigma$.  This leaves us with a choice of $\bpsi$ or $\de \brho$ as the choice for the second free variation.  It is interesting to consider the choice of $\de \brho$.  Indeed, since we have the relations
\[
\brho_s = \prt_s \brho
\,, \qquad 
\brho_t = \prt_t \brho
\,.
\]

we can express the variations $\de\brho_s$ and $\brho_t$ in terms of our free variation $\de \brho$.
\[
\de\brho_s = \de\prt_s \brho
= \prt_s \de \brho
\]

Similarly, $\de \brho_t = \prt_t \de \brho$.  Since $\de\bOm$ and $\de \bom$ only depend on $ \bsigma$ we have a complete description of the variations in terms of $\bsigma$ and $\de \brho$ which are given by
\begin{eqnarray} \label{alternativevariations1}
\de \bom = \dot\bsigma + \bom \times \bsigma
\,, &\qquad& 
\de \bom = \bsigma' + \bOm \times \bsigma
\,,\\
\label{alternativevariations2} 
\de\brho_s = \prt_s \de \brho 
\,, &\qquad& \de \brho_t = \prt_t \de \brho
\,,
\end{eqnarray}

which is obviously augmented by the trivial relation $\de \brho = \de\brho$.  An alternative proof of Theorem \ref{changevariables} can be given as follows

\vspace{0.5cm}

\begin{proof}  Using the variations \eqref{alternativevariations1}, \eqref{alternativevariations2} we obtain, for example, the following calculation in the variational principle,
\[
\scp{\dede{\bar l}{\brho_s}}{\de\brho_s} = \scp{\dede{\bar l}{\brho_s}}{\prt_s \de \brho} 
= -\scp{\prt_s\dede{\bar l}{\brho_s}}{\de \brho}.
\]

The terms arising from $\de \bOm$ and $\de\bom$ are identical to before and only depend on $\bsigma$.  Therefore, we obtain the following equation from stationarity under the $\bsigma$ variation,
\[
\lp \prt_t + \bom \times \rp \dede{\bar l}{\bom} + \lp \prt_s + \bOm\times\rp \dede{\bar l}{\bOm} = 0
\,.
\]
The second equation comes from terms proportional to $\de\brho$ which is
\[
\prt_t \dede{\bar l}{\brho_t} + \prt_s \dede{\bar l}{\brho_s} - \dede{\bar l}{\brho} = 0
\,.
\]
These are the required equations in Theorem \ref{changevariables}.
\end{proof}

\begin{rema}{\rm  Notice that this alternative proof does not require any cancelation of terms after the equations are derived.  Thus, the variations do all the work for us.  This opens up an interesting question.  In some sense the choice of $\de \brho$ as a free variation is optimal since no extra terms appear in the resulting equations of motion.  We might also refer to the heavy top at this point and ask whether a similar change of variables might simplify the heavy top equations.  The answer, alas, is negative but is nevertheless instructive.  The crucial property that we used was to regard $\de\brho$ as a free variation.  Now, suppose we have an advected quantity ${\bf a }= \Lambda^{-1}{\bf a }_0$.  This case appears in the heavy top as well as often occurring in fluid dynamics.  Could we consider $\de {\bf a }$ as a free variation?  Unfortunately the variation $\de{\bf a }$ is given by
\[
\de{\bf a } = -\,\bsigma \times {\bf a }
\,.
\]

Therefore $\de {\bf a }$ is determined by $\bsigma$ and we cannot interpret $\de{\bf a }$ as a free variation.  We shall investigate the geometric structure required for this approach in \ref{sec:Lagrange-Poincare}.}
\end{rema}

\begin{remark}{\rm 
This change of variables is not available in the classical Kirchhoff approach because the variable $\brho$ is absent in the classical approach.}
\end{remark}

\subsection{The general case}

We now generalize the previous results to the general situation described in Subsection \ref{n_dimensional_generalization}. Recall that in this case we have $(\Lambda,r)\in\mathcal{F}(\mathcal{D},S)$, $(\Omega,\Gamma)\in\Omega^1(\mathcal{D},\mathfrak{s})$, and $\rho\in\mathcal{F}(\mathcal{D},E)$, where $S=\mathcal{O}\,\circledS\,E$ is the semidirect product of a Lie group $\mathcal{O}$ with a vector space $E$.

Consider the variable $\rho$. Recall from \eqref{kincoc} that we have the kinematic equation
\[
\dot\rho=\gamma-\omega\rho
\,.
\]
Assuming that the initial value of $\rho$ is zero, we have
\[
\mathbf{d}\rho=\mathbf{d}(\Lambda^{-1}r)=-\Lambda^{-1}\mathbf{d}\Lambda\Lambda^{-1}r+\Lambda^{-1}\mathbf{d}r=\Gamma-\Omega\rho
\,.
\]
This motivates us to define the new variables $\rho_s\in\Omega^1(\mathcal{D},E)$ and $\rho_t\in\mathcal{F}(\mathcal{D},E)$ which will play the role of space and time derivatives of $\rho$. They are naturally defined by
\begin{equation}\label{change_variables}
\rho_s=\Gamma-\Omega\rho
\,,\quad\text{and}\quad 
\rho_t=\gamma-\omega\rho
\,.
\end{equation}
This change of variables defines a diffeomorphism from the variables $(\omega,\gamma,\Omega,\Gamma,\rho)$ to the variables $(\omega,\Omega,\rho_s,\rho_t,\rho)$, and generalizes \eqref{newvar}. In terms of the new variables, the local Lagrangian  is denoted by $\bar l$ and we have
\[
\int_{ \mathcal{D}} \bar l(\rho,\rho_s,\rho_t,\om,\Om) \mbox{d} s
=
l(\omega,\gamma,\Omega,\Gamma,\rho)
\,.
\]
There are two equivalent points of view to obtain the equations of motion in terms of $\bar l$.

The first one is to use a variational principle, as done before in the particular case of the charged strand. Using the constrained variations of $\omega,\gamma,\Omega,\Gamma,\rho$ given by the affine Euler-Poincar\'e principle, we obtain the constrained variations
\[
\delta\omega=\dot\Sigma+[\omega,\Sigma]
\,,\quad
\delta\Omega=\mathbf{d}\Sigma+[\Omega,\Sigma]
\,,
\]
\[
\delta\rho_t=\dot\Phi-\dot\Sigma\rho-\Sigma\rho_t
\,,\quad 
\delta\rho_s=\mathbf{d}\Phi-\mathbf{d}\Sigma\rho-\Sigma\rho_s
\,,
\]
and
\[
\delta\rho=\Phi-\Sigma\rho
\,.
\]

The second point of view is to compute the functional derivatives of $l$ in terms of those of $\bar l$. We find
\[
\frac{\delta l}{\delta \omega}=\frac{\delta \bar l}{\delta \omega}-\rho\diamond \frac{\delta \bar l}{\delta \rho_t},\quad \frac{\delta l}{\delta \Omega}=\frac{\delta \bar l}{\delta \Omega}-\rho\diamond \frac{\delta \bar l}{\delta \rho_s},
\]
\[
\frac{\delta l}{\delta \gamma}=\frac{\delta \bar l}{\delta \rho_t},\quad \frac{\delta l}{\delta \Gamma}=\frac{\delta \bar l}{\delta \rho_s}
\,,
\]
and
\[
\frac{\delta l}{\delta \rho}=\frac{\delta \bar l}{\delta \rho}+\Omega_i\frac{\delta \bar l}{\delta \rho_{s\,i}}+\omega\frac{\delta \bar l}{\delta \rho_t}
\,.
\]
These two ways lead to the same equations
\begin{equation}\label{equ_new_variables}
\left\lbrace\begin{array}{l}
\displaystyle\vspace{0.2cm}\left(\frac{d}{dt}-\operatorname{ad}^*_\omega\right)\frac{\delta \bar l}{\delta\omega}+\operatorname{div}^\Omega \frac{\delta \bar l}{\delta\Omega}=0
\,,\\
\displaystyle\vspace{0.2cm}
\frac{d}{dt}\frac{\delta \bar l}{\delta\rho_t}+\operatorname{div}\frac{\delta \bar l}{\delta\rho_s}-\frac{\delta \bar l}{\delta\rho}=0
\,,
\end{array}\right.
\end{equation}
where $\operatorname{div}^\Omega : \mathfrak{X}( \mathcal{D}, \mathfrak{o}^\ast) \rightarrow \mathcal{F}( \mathcal{D}, \mathfrak{o}^\ast) $ is defined by 
\begin{equation*}
\operatorname{div}^\Omega
w:=\operatorname{div}w-\operatorname{ad}^*_{\Omega_i}
w^i\in\mathcal{F}(\mathcal{D},\mathfrak{o}^*)
\,.
\end{equation*}
These equations coincide with \eqref{verlp} and \eqref{horlp} in the particular case $\mathcal{D}=[0,L]$ and $S=SE(3)$. The other equations for the advected variables are computed to be
\[
\left\lbrace\begin{array}{l}
\displaystyle\vspace{0.2cm}\dot\rho_s+\omega\rho_s=\mathbf{d}\rho_t+\omega\mathbf{d}\rho
\,,\\
\displaystyle\vspace{0.2cm}\dot\Omega+\operatorname{ad}_\omega\Omega=\mathbf{d}\omega
\,,\\
\dot\rho=\rho_t
\,.
\end{array}\right.
\]
We also know that $\mathbf{d}\rho=\rho_s$. Therefore, using the third equation, we obtain that the first equation is verified. Thus the last system can be replaced by
\[
\left\lbrace\begin{array}{l}
\displaystyle\vspace{0.2cm}\mathbf{d}\rho=\rho_s
\,,\\
\displaystyle\vspace{0.2cm}\dot\Omega+\operatorname{ad}_\omega\Omega=\mathbf{d}\omega
\,,\\
\dot\rho=\rho_t
\,.
\end{array}\right.
\]

\section{The bundle covariant Lagrange-Poincar\'e approach}
\label{sec:Lagrange-Poincare}
In this section we explain how the decoupled equations discussed above are covariant Lagrange-Poincar\'e equations.  The coordinate change will be interpreted as a transformation from the affine and modified Euler-Poincar\'e perspectives to the covariant Lagrange-Poincar\'e perspective.  The corresponding Lagrange-Poincar\'e equations are derived on principal fiber bundles by introducing a principal connection and splitting the configuration space into horizontal and vertical parts.  Two equations occur, one horizontal and one vertical.  In the case of strands we also have to take into account the continuous dependence of the variables on $s$.  This leads us to consider a {\em covariant} version of the equations.  We shall give various geometric structures that combine to give the required space. These geometric structures are introduced very effectively in the literature and the reviews of the various geometric objects are particularly based on \cite{CaRa03, CeMaRa2001,El2009}.

These field theoretical considerations work only for local Lagrangians. Since the classical infinite dimensional approach applies also to Lagrangians having a nonlocal part, it is clear that an extension of the theory presented below exists also in the field theoretic framework. We defer to future work a development of the Lagrange-Poincar\'e theory of Lagrangians depending on non-local variables.  The local part of the equations of motion in the Lagrange-Poincar\'e framework will, of course, have their non-local counterparts, equivalent to those derived in the Euler-Poincar\'e framework.

\subsection{Covariant state space}

In this paragraph we shall introduce the {\bfi covariant state space}.  The aim is to incorporate all the dynamical information into a single geometric object.   

We begin by noting that equations \eqref{Final_Euler_Poincare_equations} have an exchange symmetry in their $s$ and $t$ dependences.  Therefore, guided by the equations derived so far, we may treat $s$ and $t$ on an equal basis by introducing a {\em spacetime}, $X := I \times\mR$.  The dynamical quantities are then regarded as special  vector bundle maps $\lambda: TX \rightarrow TSE(3)$.  Such objects may be studied by considering the trivial fiber bundle
\[
\pi_{XP} : P:= X \times SE(3) \rightarrow X,\quad \pi_{XP}(x,\Lambda,\br):=x.
\]
The analogue of the state space $T Q$ in field theory is the  {\em first jet bundle}, $J^1P$, of $P$.  

\begin{defi}  Given a locally trivial fiber bundle $\pi_{XP}:P\rightarrow X$, the {\bfi first jet bundle} $\pi: J ^1P \rightarrow P$ of $P$ is the affine bundle over $P$ whose fiber at $p  \in P $ is 
\[
(J^1P)_p = \left\{ \lambda \in L\lp T_xX,T_pP \rp \mid T\pi_{XP} \circ \lambda = \id_{T_x X}\right\},
\]
where $L\lp T_xX,T_pP \rp$ denotes linear maps $T_xX\rightarrow T_pP$ and $x = \pi_{XP}(p)$.
\end{defi}

The space $J^1P$ is called the {\bfi covariant state space}.

\begin{remark} \label{rem:linearmaps} {\rm It might, at first sight, seem unnatural to form the dynamics on a space of linear maps.  After all, in canonical Lagrangian dynamics we consider tangent vectors,  $(q,\dot q)$.  However, in the canonical setting we could consider maps of the form $Tq: T\mR \rightarrow TQ$ where the state space is $TQ$ and $\mR$ is time.  Since $Tq$ is a linear map on each fiber of $T\mR$ we consider a basis of the image of each fiber given by $T_{(t,1)}q =: (q,\dot q)$, then $T_{(t,a)}q = (q,a\dot q)$ for all $a \in \mR$, which is just rescaling of time viewed in a geometric way.  When there is more than one independent variable we wish to capture the entire dynamics, independently of which direction is chosen in spacetime, therefore the notion of a linear map is the idea that generalizes most elegantly.  In what follows one should think of jets as giving the `velocities' in an arbitrary direction on spacetime.  Thus it turns out that the first jet bundle, $J^1P$, is a very natural state space since it is the analogue of the tangent bundle in the case where many independent variables are considered.}
\end{remark}

In field theory one only uses certain sections of $J ^1 P $, namely the {\em holonomic} or {\em first jet extensions of sections of} $P $.

\begin{defi} Let $\sigma: X  \rightarrow P$ be a section of $P$, that is, $\pi_{XP}\circ \sigma = \id_X$.  The {\bfi first jet extension} of $\sigma$ is the map $j^1\sigma: X \rightarrow J ^1P $ defined by $j^1\sigma(x)=T_x\sigma$ for all $x \in X $. 
\end{defi}

We see that $j ^1 \sigma (x) \in (J ^1P)_{ \sigma(x)} $, we differentiate the relation $\pi_{XP}\circ \sigma = \id_X$ to find $T\pi_{XP}\circ T\sigma = \id_{TX }$.  This verifies that $T\sigma \in J^1P$.

Given $X = I \times \mathbb{R}$ and $P = X \times SE(3)$, any section $\sigma$ reads 
\[
\sigma(x) = (x,\Lambda(x), \br(x)) \in \{x\} \times (SO(3) \,\circledS\, \mathbb{R}^3),
\]
where $x : =(s,t) $. In this case we also have $(J ^1 P)_{ \sigma(x)} \cong L( T_x X, T_{(\Lambda (x), \mathbf{r}(x))} SE(3))$. Using this identification, we can write
\[
j^1\sigma(x) = T_x \lp x, \Lambda, \br\rp
= \lp x, \Lambda(x), \br(x), \id_{T_xX},  \Lambda'(x) ds+ \dot \Lambda(x) dt, \br'(x) ds + \dot \br(x) dt \rp.
\]
From \eqref{rhodef} we conclude that the dependent variables that occur in the unreduced Euler-Lagrange dynamics are simply components of a first jet extension of a section of $\pi_{XP}$.

\subsection{Principal bundle structures}\label{Principal_bundle_structures}

Consider the natural principal $SO(3)$-bundle structure on $SE(3)$ given by the projection 
\[
\pi_{SE(3)}:SE(3)\rightarrow\mathbb{R}^3,\quad \pi_{SE(3)}\lp \Lambda, \br\rp = \Lambda^{-1}\br = \brho
\,.
\]
The action of $SO(3)$ is given by
\begin{equation}\label{SO(3)_action}
g(\Lambda, \br) = (g\Lambda, g\br).
\end{equation}

This principal bundle structure induces a principal $SO(3)$-bundle structure on the trivial fiber bundle $P = X \times SE(3)$.   We enforce the relationship $\pi_{XP}\lp g \cdot p\rp = \pi_{XP}\lp p\rp$ for all $p \in P$.  This relationship means that the group does not act on spacetime and this is reasonable because we do not want $SO(3)$ to act on $s$ or $t$ in the applications.  Explicitly, we have the following definition of the action of $SO(3)$ on $P$:
\[
g \lp x, \Lambda, \br\rp = \lp x, g\Lambda, g\br\rp.
\]
Using this action we easily see that the reduced space is given by
\[
\Sigma := P/SO(3)
=\lp X \times SE(3)\rp/SO(3)
= X \times \lp SE(3)/SO(3)\rp
= X \times \mR^3.
\]

The principal $SO(3)$-structure on $P$ is given by the projection induced on $P$ by $\pi_{SE(3)}$.  That is,
\[
\pi_{\Sigma P}:P\rightarrow\Sigma,\quad \pi_{\Sigma P}\lp x, \Lambda, \br\rp = \lp x, \pi_{SE(3)}\lp \Lambda, \br\rp\rp
= \lp x, \brho\rp.
\]

Define the trivial fiber bundle $\pi_{X\Sigma}: \Sigma \to X$ by
\[
\pi_{X\Sigma}\lp x, \brho\rp = x.
\]

To summarize, we are given a principal $SO(3)$-bundle structure $\pi_{SE(3)}$ on $SE(3)$ and a fiber bundle structure $\pi_{XP}$ on $P$.  From these we construct a new principal $SO(3)$-bundle structure, $\pi_{\Sigma P}$, on $P$ and a new fiber bundle structure, $\pi_{X \Sigma}$, on $\Sigma = P/SO(3)$.  We note that 
..  We note that 
\begin{equation} \label{crucialproperty}
\pi_{X\Sigma} \circ \pi_{\Sigma P} = \pi_{X P}.
\end{equation}
We summarize the considerations above in the diagram
\[
\begin{CD}
SO(3) &&  SO(3)\\
@VVV  @VVV \\
SE(3) @>>> P @>\pi_{XP}>> X\\
@V\pi_{SE(3)} VV  @V\pi_{\Sigma P} VV @VV\id V \\
\mR^3 @>>> \Sigma @>>\pi_{X\Sigma} > X. \\
\end{CD}
\]

The section $\sigma(x) = ( x, \Lambda(x), \br (x)) $ of $\pi_{XP}: P \rightarrow X$ induces the section $x \mapsto (x, \brho (x)) $ of $\pi_{X \Sigma} : \Sigma \rightarrow X $.

The tangent lift of the $SO(3)$ action yields a free action on the jet bundle, $J^1P$.
\begin{align*}
&g\lp x, \Lambda, \br, \id_{T_xX}, \Lambda' ds + \dot \Lambda dt, \br' ds + \dot \br dt\rp \\
&\quad=\lp x, g\Lambda, g\br, \id_{T_xX}, g\Lambda'ds + g \dot \Lambda dt, g \br' ds + g\dot \br dt\rp.
\end{align*}

The action is free because the action on $SO(3)$ is free.  Therefore we find that $J^1P$ is also a principal $SO(3)$-bundle.  In particular, $J^1P/SO(3)$ is a manifold.  We note that sometimes, for brevity, we omit the reference to $x$ and $\id_{T_xX}$ in the explicit representation.

\begin{remark}  {\rm At this point we could reduce by the $SO(3)$ action on $SE(3)$ to derive Euler-Poincar\'e equations.  The reduced variables are given as
\begin{eqnarray*}
\Lambda^{-1}T(\Lambda,\br) &=& (e, \Lambda^{-1}\br, \Lambda^{-1}\Lambda^\prime ds+ \Lambda^{-1}\dot\Lambda dt, \Lambda^{-1}\br^\prime ds + \Lambda^{-1}\dot \br dt)\\
&=& (\brho, \bOm ds+ \bom dt, \bGam ds + \bgam dt)
\,.
\end{eqnarray*}

This route is taken in the Euler-Poincar\'e picture and results in the equations derived above.  Again it is interesting to note that all of the dynamical quantities arise as components of a jet.}
\end{remark}

\subsection{Principal Connection}

We introduce a principal connection that is needed to split $J^1P$ into horizontal and vertical parts and discuss its induced geometric structure. Recall that a principal connection on a principal $G$-bundle $P$ is a $\mathfrak{g}$-valued one form on $P$ that satisfies
\[
A(\xi_P(p)) = \xi,\quad A(gv_p)=\operatorname{Ad}_gA(p)
\,,
\]
where $gv_p$ denotes the tangent lifted action of $G$ on $TP$ and $\xi_P$ is the infinitesimal generator
\[
\xi_P(p)=\left.\frac{d}{dt}\right|_{t=0}\operatorname{exp}(t\xi)p
\,.
\]
For our particular $SO(3)$-bundle $\pi_{\Sigma P}:P\rightarrow\Sigma$, we make the choice
\begin{equation}
\label{trivial_connection}
A\lp x, \Lambda, \br, v_x, v_\Lambda, \bu\rp =v_\Lambda\Lambda^{-1}\in\mathfrak{so}(3)
\,,
\end{equation}
for all $v_\Lambda\in T_\Lambda SO(3),\, \bu \in \mR^3, \, v_x \in T_xX$.  Note that this is the Maurer-Cartan connection for the structure group $SO(3)$. The choice of connection is actually arbitrary, but the particular choice above is well suited to the problem since it is not overly complicated.  Also we recall that any vector $v_\Lambda\in T_\Lambda SO(3)$ can be written $v_\Lambda=\Lambda \eta$ where $\eta\in\mathfrak{so}(3)$. The connection decomposes $TP$ into  the {\bfi horizontal} and {\bfi vertical subbundles} as follows: 
\begin{align}
\label{A_vertical}
\ver\, P &= \ker\left(T\pi_{\Sigma P}\right) = \left\{ \lp x, \Lambda, \br; 0, \Lambda \eta, \lp\Ad_\Lambda \eta\rp \cdot \br\rp \mid  \eta \in \mso(3) \right\},
 \\ 
 \label{A_horizontal}
 \hor_A \,P &= \ker A = \left\{(x, \Lambda, \br; v_x, 0, \bu) \mid \bu \in T_{\br}\mR^3,\, v_x \in T_xX \right\}.
\end{align}

\subsection{Splitting $TP/SO(3)$}

In order to take advantage of the horizontal-vertical split of $TP$ we give the induced global splitting of the vector bundle $TP/SO(3)\rightarrow\Sigma$.  This is provided by the following vector bundle isomorphism, $\al_A:TP/SO(3) \longrightarrow T \Sigma \oplus_{ \Sigma} \ad P$, given by:
\begin{displaymath}
\al_A\lp\lsb v_p\rsb_{SO(3)}\rp = T\pi_{\Sigma P}(v_p) \oplus \lsb p, A(v_p)\rsb_{SO(3)}, \quad v_p \in T_pP
\,,
\end{displaymath}
where $\ad P := \lp P \times \mso(3)\rp/SO(3)$ is the {\bfi associated adjoint bundle} to $P$. The quotient is taken relative to the left diagonal action given by
\[
(p,\eta)\mapsto (hp,\operatorname{Ad}_h\eta),
\]
and the elements in the adjoint bundle are written $[p,\eta]_{SO(3)}$.
To check that $\al_A$ is well defined we verify that
\[
T\pi_{\Sigma P}(h v_p)=T\pi_{\Sigma P}(v_p),\\
\]
and
\[
\lsb hp, A(hv_p)\rsb_{SO(3)}=\lsb hp, \Ad_hA(v_p)\rsb_{SO(3)} = \lsb p, A(v_p)\rsb_{SO(3)}.
\]
To show that $\al_A$ is an isomorphism we give its inverse,
\begin{displaymath}
\al_A^{-1}\lp v_{(x,\brho)} \oplus \lsb p,\eta\rsb_{SO(3)} \rp = \lsb \operatorname{Hor}^A_p(v_{(x,\brho)}) + \eta_P(p)\rsb_{SO(3)},
\end{displaymath}
where $p=(x,\Lambda,\br)\in P$ is such that $\pi_{\Sigma P}(p)=(x,\brho)$ and $\operatorname{Hor}^A_p$ denotes the {\bfi horizontal lift} of $v_{(x,\brho)}=(x,\brho,v_x,\bu)\in T_{(x,\brho)}\Sigma$ to $T_pP$ with respect to $A$. It is given by
\[
\operatorname{Hor}^A_{(x,\Lambda,\br)}(x,\brho,v_x,\bu)= \lp x, \Lambda, \br, v_x, 0, \Lambda \bu\rp.
\]

\begin{remark}[The choice of connection]
{\rm As we have seen in \eqref{trivial_connection}, a natural choice of connection is the Maurer-Cartan form for the structure group, $d\Lambda\Lambda^{-1}$.}
\end{remark}

\subsection{Properties of $\ad P$}
\label{properties_adjoint_bundle}

We shall need various properties of the adjoint bundle to derive the Lagrange-Poincar\'e equations; we review them here.
\smallskip

We can give $\ad P$ a Lie algebra structure on each fiber.  The vector space structure is given by
\begin{displaymath}
\lsb p,\eta\rsb_{SO(3)} + a\lsb p,\nu\rsb_{SO(3)} = \lsb p,\eta+a\nu\rsb_{SO(3)}
\end{displaymath}
and the Lie bracket is given by
\begin{displaymath}
\lsb \lsb p, \eta\rsb_{SO(3)}, \lsb p, \nu\rsb_{SO(3)}\rsb = \lsb p, \lsb \eta, \nu \rsb \rsb_{SO(3)}.
\end{displaymath}

The principal connection $A $ induces an affine connection on the adjoint bundle $\ad P$. It is known that the covariant derivative of this affine connection is given by 
\begin{equation}
\label{general_covariant_derivative}
\frac{D^A}{D\tau} \lsb p(\tau), \eta(\tau) \rsb_{SO(3)} = 
 \lsb p(\tau), \dot{\eta}(\tau) - \lsb A\lp \dot{p}(\tau) \rp, \eta(\tau)\rsb  \rsb_{SO(3)} 
\end{equation}
(see, for example, \cite{CeMaRa2001}, Lemma 2.3.4). 
This formula allows us to define a covariant derivative of any section $\zeta: X \rightarrow \ad P $. Note that the adjoint bundle $\ad P $ has base $\Sigma$. We view it now as a bundle over $X $. It is important to note that the composite bundle $\ad P \rightarrow X $ is \textit{not} a vector bundle, in general. The covariant derivative of the section $\zeta$ is defined by using the formula for the covariant derivative of $\ad P \rightarrow \Sigma$ induced by the principal connection $A$ on $\pi_{\Sigma P} : P  \rightarrow \Sigma$. We define
\begin{equation}
\label{strange_covariant_derivative}
\nabla ^A _U \zeta(x) : = \left. \frac{D^A}{D\tau} \right|_{\tau=0} ( \zeta \circ c)( \tau), 
\end{equation}
where $c( \tau) $ is a smooth curve in $X $ such that $c (0) = x$ and $\dot{c}(0) = U \in T_x X$. Concretely, denoting $\zeta( x) = [p( x), \eta(x)]_{SO(3)}$, formula \eqref{general_covariant_derivative} gives 
\begin{equation}
\label{strange_formula}
\nabla ^A _U \zeta(x) = \left[p(x), \mathbf{d}\eta(x)(U) - [A(T_xp(U)), \eta(x)]
\right]_{SO(3)}.
\end{equation}
\smallskip

Let us note that the vector bundle $\ad P \rightarrow \Sigma$ is in our case trivial. Indeed, the map
\begin{equation}
\label{trivialization}
\left[ (x, \Lambda, \br), \eta \right]_{SO(3)} \mapsto \left( (x, \Lambda^{-1} \br ), \operatorname{Ad}_{ \Lambda^{-1}} \eta\right)
\end{equation}
is a vector bundle isomorphism from $\ad P$ to $\Sigma \times \mathfrak{so}(3)$. In this trivialization, using the connection \eqref{trivial_connection}, the formula for the covariant derivative \eqref{general_covariant_derivative} becomes
\[
\frac{D^A}{D\tau}( x ( \tau), \brho( \tau), \xi( \tau) ) = 
( x ( \tau), \brho( \tau), \dot{\xi}( \tau) ).
\]
Similarly, if $U \in T_xX$,  formula \eqref{strange_formula} becomes
\begin{equation}
\label{covariant_trivial}
\nabla ^A _U \zeta(x) = ( x, \brho (x), \mathbf{d}\xi(x) (U)), \quad \text{where} \quad \zeta(x) = (x, \brho (x), \xi(x)).
\end{equation}
Had the bundles been nontrivial, the formulas for the covariant derivatives would be more involved.

\subsection{Splitting $J^1P/SO(3)$}

Having introduced the connection that splits  $TP/SO(3)$ we now wish to use it to split the reduced covariant state space $J^1P/SO(3)$.  This is easily achieved by regarding the jets as linear maps and composing with $\al_A$.  Therefore we split $J^1P/SO(3)$ by splitting the image of the jets in $TP/SO(3)$. 

Consider a section $\sigma$ of the fiber bundle $\pi_{XP}:P\rightarrow X$ and its first jet extension $j^1\sigma(x)=T_x\sigma\in (J^1P)_{\sigma(x)}$.  We compose $\lsb T\sigma\rsb_{SO(3)}: TX \to TP/SO(3)$ with the vector bundle isomorphism $\al_A:TP/SO(3)\rightarrow T\Sigma\oplus_{ \Sigma}\operatorname{ad}P$ over $\Sigma$ and obtain the following equality in the fiber over  $\pi_{\Sigma P}(\sigma(x))$:
\begin{eqnarray*}
\al_A\circ \lsb T_x\sigma \rsb_{SO(3)} &=& \lp T_{\sigma(x)}\pi_{\Sigma P}\circ T_x\sigma\rp \oplus \lsb \sigma(x), A\circ T_x\sigma\rsb_{SO(3)}
\\
&=& T _x\lp\pi_{\Sigma P}\circ \sigma \rp \oplus \lsb \sigma(x), A\circ T_x\sigma\rsb_{SO(3)}
\,.
\end{eqnarray*}
Using $\pi_{X\Sigma}\circ \pi_{\Sigma P} = \pi_{XP}$, we have
\[
\pi_{X\Sigma}\circ (\pi_{\Sigma P}\circ\sigma)=\pi_{XP}\circ\sigma=\id_X.
..
\]
This shows that $\pi_{\Sigma P}\circ\,\sigma$ is a section of the fiber bundle $\pi_{X\Sigma}:\Sigma\rightarrow X$. If we denote
\[
\sigma_1=\pi_{\Sigma P}\circ\,\sigma,\quad \con A\lp v_p\rp: = \lsb p, A\lp v_p\rp \rsb_{SO(3)},\quad\text{and}\quad \sigma_2(x):=\con A\circ T_x\sigma,
\]
for all $v_p \in T_pP$, then the {\bfi reduced jet} $[ j ^1 \sigma (x)]_{SO(3)} \in (J ^1 P)/SO(3)$ is expressed as
\[ 
\alpha_A \circ [ j ^1 \sigma (x)]_{SO(3)} = 
\al_A\circ \lsb T_x\sigma\rsb_{SO(3)} = T_x \sigma_1 \oplus \con A\circ T_x\sigma=T_x\brho \oplus \sigma_2(x),
\]
since $\sigma_1=\brho$. Note that this element lies in the fiber
\[
(J^1\Sigma)_{\sigma_1(x)}\,\times \,L\left(T_xX,\left(\operatorname{ad}P\right)_{\sigma_1(x)}\right)
\]
over $\sigma_1(x)=\pi_{X\Sigma}(\sigma(x))\in \Sigma$. In particular, there is a fiber bundle isomorphism
\[
J^1P/SO(3)\cong J^1\Sigma\times_\Sigma L(TX,\operatorname{ad}P)
\]
over $\Sigma$. Using the equality $\sigma(x)=(x,\Lambda(x),\br(x))$, the explicit description of the quantities appearing in the reduced jet are:
\begin{eqnarray*}
T_x\brho &=&\left(x, \Lambda ^{-1} \br (x), T_x ( \Lambda ^{-1} \br)
\right), \\
&=& \lp x,\brho(x) ,\brho_s(x)ds + \brho_t(x)dt \rp \in (J^1 \Sigma)_{ \sigma(x)},
\\
\\
\sigma_2 &=& \lsb (x,\Lambda(x),\br(x)), T_x\Lambda\Lambda^{-1}(x) \rsb_{SO(3)},
\\
&\cong & \left((x,\Lambda^{-1}\br(x)), \Lambda^{-1}T_x\Lambda \right),\\
&=& \left( x, \brho(x), \bOm(x) ds + \bom(x)dt \right)
\,,
\end{eqnarray*}
by the relations \eqref{rhodef}, where $\cong$ denotes here the vector bundle isomorphism \eqref{trivialization}. Thus, the reduced jet $[j^1 \sigma(x)]_{SO(3)}$ associated to $\sigma(x)=(x,\Lambda(x),\br(x))$ is represented in the trivialization  \eqref{trivialization} by
\begin{align}\label{reduced_jet_extension}
T_x\brho \oplus \sigma_2(x) 
& = \left(x, \Lambda ^{-1} \br (x), T_x ( \Lambda ^{-1} \br), \Lambda^{-1}T_x\Lambda \right)
\nonumber\\
&= \lp x,\brho(x) ,\brho_s(x)ds + \brho_t(x)dt, \bOm(x) ds + \bom(x)dt \rp .
\end{align}
Therefore we have recovered the new coordinates given in \eqref{coordchange} and the reduced Lagrangian reads
\[
\bar l(T\brho,\sigma_2)=\bar l (\brho,\brho_s,\brho_t,\bom,\bOm).
\]

We summarize the spaces involved in the previous discussion in the following diagram:
\[
\begin{CD}
\ad SE(3) && \ad P\\
@VVV  @VVV \\
TSE(3)/SO(3) @>>> TP/SO(3) @>T\pi_{XP}/SO(3)>> TX \\
 @V T\pi_{SE(3)}/SO(3) VV  @VT\pi_{\Sigma P}/SO(3) VV @VV\id V \\
T\mR^3 @>>> T\Sigma @>> T\pi_{X\Sigma} > TX  .
\end{CD}
\]

We recall that the projections $\pi_{XP}$ and $\pi_{\Sigma P}$ are $SO(3)$-invariant, therefore they naturally induce projections $\pi_{\Sigma P}/SO(3):P/SO(3) \to \Sigma$ and $\pi_{XP}/SO(3):P/SO(3) \to X$, respectively. The relationship between the variables in the affine Euler-Poincar\'e and covariant Lagrange-Poincar\'e equations can be illustrated in the following diagram:

{\footnotesize
\[
\!\!\!\!
\begin{CD}
 \lp \brho, \bOm ds + \bom dt\rp_{SO(3)} \in L(TX,\ad SE(3)) && L(TX,\ad P)\\
 @VVV  @VVV \\
\lp \brho, \bGam ds + \bgam dt, \bOm ds + \bom dt\rp \in L(TX, TSE(3)/SO(3)) @>>> J^1P/SO(3) @>T\pi_{XP}/SO(3)>> L(TX, TX)  \\
 @V T\pi_{SE(3)}/SO(3) VV  @V T\pi_{\Sigma P}/SO(3) VV @VV\id V \\
 \lp \brho, \brho_s ds + \brho_t dt\rp \in L(TX,T\mR^3) @>>> J^1\Sigma @>> T\pi_{X\Sigma} > L(TX, TX) .
\end{CD}
\]
}

The affine Euler-Poincar\'e variables $[j ^1\sigma]_{SO(3)} \cong  (\brho, \bOm ds+ \bom dt, \bGam ds + \bgam dt) $ appear in the middle horizontal sequence whereas the covariant Lagrange-Poincar\'e variables  $( \brho, \brho_s ds + \brho_t dt, \bOm d s + \bom d t)$ appear in the top and bottom rows.

\subsection{Reduced Variations}

Let $\sigma: X \rightarrow P $ be a section of the fiber bundle $\pi_{XP}:P \rightarrow X $. If $\sigma_ \varepsilon: X \rightarrow P $ is a curve of sections with $\sigma_0 = \sigma$, that is, $\sigma_ \varepsilon(x) = ( x , \Lambda_ \varepsilon( x), \br_ \varepsilon( x)) $,  $\Lambda_0 = \Lambda$, and $\br_0 = \br$, define the variation
\begin{displaymath}
\delta \sigma(x)= \left.\dd{}{\epsilon}\right|_{\epsilon = 0}(x,\Lambda_\epsilon(x), \br_\epsilon(x)) \in T_{ \sigma(x)} P.
\end{displaymath}
Splitting $\delta\sigma(x)$ into its vertical and horizontal parts relative to the connection $A $ in the principal $SO(3)$-bundle $\pi_{ \Sigma P}: P \rightarrow \Sigma$ (see \eqref{A_vertical}, \eqref{A_horizontal}), we obtain
\begin{align*}
\delta \sigma(x)  &= \lp x, \Lambda, \br,0,\delta\Lambda, \delta \br \rp \\
&= \lp x,\Lambda, \br,0,\delta\Lambda, \de \Lambda \Lambda^{-1} \br \rp +\lp x, \Lambda, \br,0,0, \delta \br - \de \Lambda \Lambda^{-1} \br \rp \in T_{ \sigma(x)} P.
\end{align*}

\smallskip

To compute the vertical variation of $[j^1 \sigma]_{SO(3)}$, we consider curves $\sigma_ \varepsilon$ that perturb  $\sigma_0 = \sigma:X \rightarrow P $ along the group orbits, that is,  
\[
\sigma_ \varepsilon (x) : =  \exp ( \varepsilon\xi(x)) \cdot \sigma(x)
=  \left(x,  \exp ( \varepsilon\xi(x)) \Lambda(x),  \exp ( \varepsilon\xi(x)) \br(x)\right)
\]
where $\xi: X\rightarrow \mathfrak{so}(3)$. By \eqref{reduced_jet_extension}, in the trivialization \eqref{trivialization}, 
\[
\left[ j^1 \sigma_ \varepsilon(x) \right]_{SO(3)} \cong 
 \left(x, \Lambda ^{-1} \br (x); T_x ( \Lambda ^{-1} \br), \left(\exp ( \varepsilon\xi(x)) \Lambda(x) \right)^{-1}T_x \left(\exp ( \varepsilon\xi )\Lambda\right)\right).
\]
Taking the $\varepsilon$-derivative of the right hand side we get 
\begin{align}\label{ver_var}
\delta^v \left[ j^1 \sigma(x) \right]_{SO(3)} :&=
\left.\frac{d}{d \varepsilon}\right|_{\varepsilon=0}\left[ j^1 \sigma_ \varepsilon(x) \right]_{SO(3)} \nonumber\\
& \cong
\left(x, \Lambda^{-1}\br (x); 0, \left( \bsigma' +\bOm\times\bsigma\right) ds + \lp \dot \bsigma + \bom\times\bsigma\rp dt \right),
\end{align}
where $\Sigma(x) : = \operatorname{Ad}_{ \Lambda(x)^{-1}} \xi(x)$.

To compute the horizontal variation of $[j^1 \sigma]_{SO(3)}$, we consider curves $\sigma_ \varepsilon$ that perturb  $\sigma_0 = \sigma:X \rightarrow P $ such that $\delta \sigma $ is horizontal. In view of \eqref{A_horizontal}, 
a curve giving a horizontal $\delta\sigma$ is $\sigma_ \varepsilon(x) : = ( x , \Lambda(x),  \br_ \varepsilon (x))$. Therefore, for such a curve $\sigma_ \varepsilon$ we get
\[
\left[ j^1 \sigma_ \varepsilon(x) \right]_{SO(3)} \cong 
\left(x, \Lambda^{-1} \br _ \varepsilon(x); T_x ( \Lambda^{-1}\br _ \varepsilon) , \Lambda(x)^{-1}T _x \Lambda
\right)
\]
and hence 
\begin{align}\label{hor_var}
\delta^h \left[ j^1 \sigma(x) \right]_{SO(3)} :&=
\left.\frac{d}{d \varepsilon}\right|_{\varepsilon=0}\left[ j^1 \sigma_ \varepsilon(x) \right]_{SO(3)}
 \cong \left(x, \Lambda^{-1}\br (x); T_x ( \Lambda^{-1} \delta\br ), 0 \right) \\
 & = \left( x, \brho (x); T_x( \delta\brho ), 0  \right)
  =\left(x, \brho(x);  \lp\delta\brho\rp_t dt + \lp\delta\brho\rp_s ds, 0 \right)
\end{align}

\begin{rema} {\rm The free variations $\de \brho$ and $\bsigma$ are now recognized as being horizontal and vertical variations.  This is the reason for the decoupled form of the resulting equations.  If we had defects in the strand and therefore our connection had non-zero curvature, then the equations would not decouple completely.}
\end{rema}

\subsection{Variational Principle}
 
Having derived the reduced horizontal and vertical variations, we may now derive the horizontal and vertical Lagrange-Poincar\'e equations.  For the vertical variations, using \eqref{ver_var} and
\[
\dede{\bar l}{\sigma_2}=\lsb \brho, \dede{\bar l}{\bom}\prt_t + \dede{\bar l}{\bOm}\prt_s\rsb_{SO(3)},
\]
we obtain
\begin{align*}
\delta^v S &= \left.\frac{d}{d\varepsilon}\right|_{\varepsilon=0}\int_X\bar l\left(\left[j^1\sigma_\varepsilon\right]_{SO(3)} \right)dx\\
&= \int_X \scp{\lsb \brho, \dede{\bar l}{\bom}\prt_t + \dede{\bar l}{\bOm}\prt_s\rsb_{SO(3)}}{\lsb \brho, \lp \dot \bsigma + \bom\times\bsigma\rp dt  + \lp \bsigma' + \bOm\times\bsigma\rp ds\rsb_{SO(3)}}dx
\\
&= -\int_X \scp{\lsb \brho, \lp \prt_t + \bom\times\rp\dede{\bar l}{\bom} + \lp \prt_s + \bOm\times\rp\dede{\bar l}{\bOm}\rsb_{SO(3)}}{\lsb \brho, \bsigma\rsb_{SO(3)}}dx= 0
\,.
\end{align*}

Therefore, the {\bfi vertical covariant Lagrange-Poincar\'e equation} is
\begin{displaymath}
\lp \prt_t + \bom\times\rp\dede{\bar l}{\bom} + \lp \prt_s + \bOm\times\rp\dede{\bar l}{\bOm} = 0
\,.
\end{displaymath}

Similarly, we derive the variational principal for horizontal variations and obtain, using \eqref{hor_var},
\begin{align*}
\delta^h S &=\left.\frac{d}{d\varepsilon}\right|_{\varepsilon=0}\int_X\bar l\left(\left[j^1\sigma_\varepsilon\right]_{SO(3)} \right)dx=\int_X \scp{\dede{\bar l}{T\brho}}{\de T\brho}dx\\
&= \int_X \scp{\dede{\bar l}{\brho}}{\de\brho} + \scp{\dede{\bar l}{\brho_t}\prt_t + \dede{\bar l}{\brho_s}\prt_s}{\de \brho_t dt + \de \brho_s ds}dx
\\
&= \int_X \scp{\dede{\bar l}{\brho} - \prt_t \dede{\bar l}{\brho_t} - \prt_s\dede{\bar l}{\brho_s}}{\delta\brho}dx= 0
\,.
\end{align*}

Therefore, the {\bfi horizontal covariant Lagrange-Poincar\'e equation} is
\begin{displaymath}
\dede{\bar l}{\brho} - \prt_t \dede{\bar l}{\brho_t} - \prt_s\dede{\bar l}{\brho_s} = 0
\,.
\end{displaymath}

Upon putting these together, we find that the covariant Lagrange-Poincar\'e equations are
\begin{eqnarray}
\lp \prt_t + \bom\times\rp\dede{\bar l}{\bom} + \lp \prt_s + \bOm\times\rp\dede{\bar l}{\bOm} &=& 0\,,  \label{verLaPo}
\\
\dede{\bar l}{\brho} - \prt_t \dede{\bar l}{\brho_t} - \prt_s\dede{\bar l}{\brho_s} &=& 0
\,. \label{horLaPo}
\end{eqnarray}

\subsection{A circulation theorem}\label{circulation_thm}

The Kelvin-Noether Theorem tells us about the solutions to the Euler-Poincar\'e equations in continuum mechanics.  There is an analogue in the covariant picture that is described in this section. Denoting by 
$\operatorname{div}_x $ the divergence relative to the variable $x =(s, t)\in [0,L] \times  \mathbb{R}$, we have
\begin{align*}
&\operatorname{div}_x\left(\operatorname{Ad}^*_{\Lambda^{-1}}\frac{\delta \bar l}{\delta\bOm}\partial_s+\operatorname{Ad}^*_{\Lambda^{-1}}\frac{\delta \bar l}{\delta\bom}\partial_t\right)
=\partial_s\left(\operatorname{Ad}^*_{\Lambda^{-1}}\frac{\delta \bar l}{\delta\bOm}\right)+\partial_t\left(\operatorname{Ad}^*_{\Lambda^{-1}}\frac{\delta \bar l}{\delta\bom}\right)\\
&\qquad \qquad =\operatorname{Ad}^*_{\Lambda^{-1}}\left(\partial_s\frac{\delta \bar l}{\delta\bOm}+\bOm\times\frac{\delta \bar l}{\delta\bOm}+\partial_t\frac{\delta \bar l}{\delta\bom}+\bom\times\frac{\delta \bar l}{\delta\bom}\right)=0
\end{align*}
by \eqref{verLaPo}. Using the divergence theorem, we find
\begin{align*}
0&=\int_S\operatorname{div}_x\left(\operatorname{Ad}^*_{\Lambda^{-1}}\frac{\delta \bar l}{\delta\bOm}\partial_s+\operatorname{Ad}^*_{\Lambda^{-1}}\frac{\delta \bar l}{\delta\bom}\partial_t\right) ds dt \\
&=\int_{\partial S}\left(\operatorname{Ad}^*_{\Lambda^{-1}}\frac{\delta \bar l}{\delta\bOm}\partial_s+\operatorname{Ad}^*_{\Lambda^{-1}}\frac{\delta \bar l}{\delta\bom}\partial_t\right)\!\cdot\!\mathbf{n}\, d\ell\\
&=\int_{\partial S}\operatorname{Ad}^*_{\Lambda^{-1}}\left(\frac{\delta \bar l}{\delta\bOm}dt-\frac{\delta \bar l}{\delta\bom}ds\right),
\end{align*}
where $\mathbf{n}$ is the outward pointing unit normal to the boundary $\partial S$ and we used the identity
\begin{equation}
\label{div_identity}
\left(\frac{\delta \bar l}{\delta\bOm}\partial_s+\frac{\delta \bar l}{\delta\bom}\partial_t\right)\!\cdot\!\mathbf{n}\, d\ell = 
\frac{\delta \bar l}{\delta\bOm}dt-\frac{\delta \bar l}{\delta\bom}ds.
\end{equation}
Thus, we obtain the \textit{circulation theorem}
\begin{equation}
\label{circulation_1}
\int_{\partial S}\operatorname{Ad}^*_{\Lambda^{-1}}\left(\frac{\delta \bar l}{\delta\bOm}dt-\frac{\delta \bar l}{\delta\bom}ds\right)=0.
\end{equation}

\subsection{Generalizations of the molecular strand}\label{sec.strand.gen}

Essentially the problem of the molecular strand is a complex filament.  We have a filament in $\mR^3$ and also specify that the filament has microstructure that we describe by attaching a group element to each point of the filament.  This setup can be generalized in various interesting ways.  First we can consider the multidimensional problem, that is, complex sheets and related structures.  A molecular sheet is an object where molecules are bound in a two dimensional surface.  In this case we can still consider RCCs of charges represented by $SO(3)$, but this time the $SO(3)$ quantities are attached to points on an embedding of $\mR^2$ in $\mR^3$.  We could consider another generalization.  Instead of having a filament with RCCs described by $SO(3)$, some other group could describe the microstructure.  For example, the group $SO(2)$ was used in \cite{Mezic2006}.  We could even extend the group to take account of quantum phenomena.  We might want to make both of the above generalizations and consider, for example, the spin sheet.

\smallskip

Motivated by these considerations, we quickly indicate here how to generalize the covariant Lagrange-Poincar\'e approach to the setting of Subsection \ref{n_dimensional_generalization}, that is, the case of $n$-dimensional strand with an arbitrary Lie group structure $\mathcal{O}$.

Consider the $(n+1)$-dimensional spacetime $X:=\mathcal{D}\times\mathbb{R}$ and the trivial fiber bundle
\[
\pi_{XP}:P:=X\times S\rightarrow X
\,,
\]
where $S=\mathcal{O}\,\circledS\, E$. A section $\sigma$ of $P$ reads
\[
\sigma(x)=(x,\Lambda(x),r(x)),\quad x=(s,t)\in X
\,,
\]
and its first jet extension is
\[
j^1\sigma(x)=(T_x\Lambda,T_xr)
=(\mathbf{d}\Lambda(x)+\dot\Lambda(x)dt,\mathbf{d}r(x)+\dot r(x)dt)
\,,
\]
where $\mathbf{d}$ is the partial derivative with respect to space (that is, the derivative on $\mathcal{D}$), and the dot is the partial derivative with respect to time.

There is a natural $\mathcal{O}$-principal bundle structure on $S$ given by
\[
\pi_{ES}:S\rightarrow E,\quad \pi_{ES}(\Lambda,r)=\Lambda^{-1}r=\rho.
\]
This principal bundle structure on the fiber $S$, induces a principal $\mathcal{O}$-bundle structure on $P$ given by
\[
\pi_{\Sigma P}:P\rightarrow X\times E,\quad \pi_{\Sigma P}(x,\Lambda,r)=(x,\Lambda^{-1}r).
\]
There is a natural connection $A$ on $\pi_{\Sigma P}:P\rightarrow \Sigma:=X\times E$ given by
\[
A(v_x,v_\Lambda,(r,u))=v_\Lambda\Lambda^{-1},
\]
which allows us to identify the reduced jet bundle $J^1P/\mathcal{O}$ with the fiber bundle $J^1\Sigma\times_\Sigma L(TX,\operatorname{ad}P)$. Using the same notations as before, we have
\[
\alpha_A\circ [j^1\sigma(x)]_{\mathcal{O}}=T_x\rho\oplus \con A\circ T_x\sigma\cong (x,\rho(x),\mathbf{d}\rho(x)+\dot\rho(x),\Omega(x)+\omega(x)dt),
\]
by \eqref{rhodef_n_dimensional}. The vertical and horizontal variations being given by
\begin{align*}
\delta^v\left[j^1\sigma(x)\right]_{\mathcal{O}} &=(x,\Lambda^{-1}r(x);0,\mathbf{d}\Sigma+[\Omega,\Sigma]+(\dot\Sigma +[\omega,\Sigma])dt),\\
\delta^h\left[j^1\sigma(x)\right]_{\mathcal{O}} &=(x,\rho(x);\mathbf{d}(\delta\rho)+(\delta\rho)_tdt,0),
\end{align*}
we find that the vertical and horizontal Lagrange-Poincar\'e equations are
\begin{eqnarray*}
\left( \prt_t -\operatorname{ad}^*_\omega\right)\dede{\bar l}{\om} + \left(\operatorname{div} - \operatorname{ad}^*_\Om\right)\dede{\bar l}{\Om} &=& 0
\,,\nonumber\\
\dede{\bar l}{\brho} - \prt_t \dede{\bar l}{\brho_t} - \operatorname{div}\dede{\bar l}{\brho_s} &=& 0
\,\nonumber.
\end{eqnarray*}
Of course, as expected, these equations coincide with equations \eqref{equ_new_variables} obtained from the affine Euler-Poincar\'e equations \eqref{generalized_molec_strand} by the change of variables \eqref{change_variables}. In Section~\ref{sec:ApplicationLP} we will obtain the same equations by applying directly the theory in \cite{CaRa03}, using the group structure of $S$. Note that the approach we have used here does not use the group structure of $S$, and is expected to be applicable to more general situations such as the molecular strand on the sphere. This is explained in the next subsection. We have
\begin{align*}
&\operatorname{div}_x\left(\operatorname{Ad}^*_{\Lambda^{-1}}\frac{\delta \bar l}{\delta\Om}+\operatorname{Ad}^*_{\Lambda^{-1}}\frac{\delta \bar l}{\delta\om}\partial_t\right) =\operatorname{div}\left(\operatorname{Ad}^*_{\Lambda^{-1}}\frac{\delta \bar l}{\delta\Om}\right)+\partial_t\left(\operatorname{Ad}^*_{\Lambda^{-1}}\frac{\delta \bar l}{\delta\om}\right)\\
& \qquad \qquad 
=\operatorname{Ad}^*_{\Lambda^{-1}}\left(\operatorname{div}\frac{\delta \bar l}{\delta\Om}-\operatorname{ad}^*_\Om\frac{\delta \bar l}{\delta\bOm}+\partial_t\frac{\delta \bar l}{\delta\om}-\operatorname{ad}^*_\om\frac{\delta \bar l}{\delta\om}\right)=0.
\end{align*}
Using the divergence theorem, we obtain the zero flux theorem
\[
\int_{\partial V}\left(\operatorname{Ad}^*_{\Lambda^{-1}}\frac{\delta \bar l}{\delta\Om}+\operatorname{Ad}^*_{\Lambda^{-1}}\frac{\delta \bar l}{\delta\om}\partial_t\right) \!\cdot\!\mathbf{n}\, d \sigma=0,
\]
where $\mathbf{n}$ is the outward pointing unit normal to the boundary $\partial V$ of a given domain $V  \subset \mathcal{D} \times \mathbb{R}$ and $d \sigma$ is the induced boundary volume element of $\partial V$.

\begin{remark}
\rm
We remark at this point that we have purposefully not referred to $SE(3)$ in this section. Our entire method of reduction and derivation of \eqref{verlp}, \eqref{horlp} makes no use of the group structure of $SE(3)$.  This is the first application we know that uses this method of reduction.  The closest theory that one finds previously is in \cite{CaRa03} where covariant Lagrange-Poincar\'e equations are derived in the case in which a principal fiber bundle is reduced by a subgroup of the structure group.  Naturally that theory is related to this approach in which we imposed less restrictive assumptions on $P$ and renounced the $SE(3)$ group structure.  Of course, since in this case $P$ is also a principal $SE(3)$-bundle, the charged strand is in the intersection of these two reduction theories.  A treatment of the charged strand by a direct application of \cite{CaRa03} is presented in the next section. The approach used here may be regarded as a covariant formulation of the Lagrange-Poincar\'e equations derived in \cite{CeMaRa2001}.  It also answers the call for the approach used in \cite{CaRa03} to be applied to general field theories.  One finds in general that under certain equivariance conditions on the trivializing maps of the bundle $\pi_{XP}$, a variational approach to field theories may be formulated that reduces to the case considered in \cite{CaRa03} when the fibers of $\pi_{XP}$ are Lie groups.  This new reduction procedure also applies to a wider category of fiber bundles  which includes the case where $\pi_{XP}$ is trivial, but $Q$ is any $G$-bundle, as often occurs in applications.  This generalization can be seen in the present application by varying our assumptions.  We note that the method above is capable of dealing with the problem of the molecular strand on a substrate or on any submanifold of $\mR^3$, whereas the method developed in \cite{CaRa03} does not treat that case.  Of course the application of the above theory to such a system is subject to symmetry breaking. This is an interesting direction for future research, but it is beyond the scope of our present considerations.  
\end{remark}

\section{The subgroup covariant Lagrange-Poincar\'e approach}
\label{sec:ApplicationLP}
In this section we allow ourselves the group structure of $SE(3)$ and apply the results formulated in \cite{CaRa03} to the charged strand. More precisely, we see the principal bundle
\[
\pi_{SE(3)}:SE(3)\rightarrow \mathbb{R}^3, \quad\pi_{SE(3)}(\Lambda,\br)=\Lambda^{-1}\br=\brho,
\]
as being associated to the left subgroup action of $SO(3)\cong SO(3)\times\{0\}$ on $SE(3)$. Using the composition law in $SE(3)$
\[
(\Lambda_1,\br_1)(\Lambda_2,\br_2)=(\Lambda_1\Lambda_2,\br_1+\Lambda_1\br_2),
\]
we obtain that the subgroup action is given by
\[
SO(3)\times SE(3)\rightarrow SE(3),\quad \Lambda_1(\Lambda_2,\br_2)=(\Lambda_1\Lambda_2,\Lambda_1\br_2).
\]
We thus have recovered the action \eqref{SO(3)_action}, and the projection $\pi_{SE(3)}$ identifies an equivalence class $[\Lambda,\br]$ in $SE(3)/SO(3)$ with the vector $\Lambda^{-1}\br\in\mathbb{R}^3$.

\medskip

Therefore, we can obtain the equation of the molecular strand by reducing the principal $SE(3)$-bundle $P$ by the subgroup $SO(3)$. Such a theory is developed in \cite{CaRa03}, and is applied below directly to the $n$-dimensional generalization of the molecular strand. The difference with the approach described in the previous Section lies in the fact that Section~\ref{sec:Lagrange-Poincare} does not use the group structure of $SE(3)$ and the fact that the principal bundle $\pi_{SE(3)}:SE(3)\rightarrow \mathbb{R}^3$ is associated to a subgroup action.

\medskip

Consider the manifold $X=\mathcal{D}\times\mathbb{R}$ and the trivial principal $S$-bundle, $P=  X\times S\rightarrow X$, where $S$ is a semidirect product group $\mathcal{O}\,\circledS\, E$.  Since $P$ is trivial, the first jet bundle is given by $J^1P_{(x,g)}=L(T_xX,T_gS)$. A section $\sigma$ of $P$ reads
\[
\sigma(x)=(x,\Lambda(x),r(x))
\,,\quad
x=(s,t)
\,,
\]
and its first jet extension is
\begin{eqnarray*}
j^1\sigma(x)&=& \lp x, \id_{TxX}, \Lambda(x),r(x), T_x\Lambda,T_xr\rp\\
& =&\lp x, \id_{TxX}, \Lambda(x),r(x), \mathbf{d}\Lambda+\dot\Lambda dt,\mathbf{d}r+\dot rdt\rp
,
\end{eqnarray*}

where $\mathbf{d}$ denotes the derivative with respect to space and the dot denotes the time derivative.

We can see $P$ as a $\mathcal{O}$-principal bundle over $\Sigma:=P/\mathcal{O}=X\times E$, relative to the projection $\iota: (x,\Lambda,r)\mapsto (x,\Lambda^{-1}r)$. Suppose that we have a $\mathcal{O}$-invariant Lagrangian density $\mathcal{L}$ defined on $J^1P$. This Lagrangian induces a reduced Lagrangian density $\bar{l}:J^1P/\mathcal{O}\rightarrow\mathbb{R}$. On the principal bundle $P\rightarrow P/\mathcal{O}$ we consider the flat principal connection
\[
A(v_x,v_\Lambda,v_r)=v_\Lambda\Lambda^{-1}.
\]
where $v_x \in T_xX$, $v_\Lambda \in T_\Lambda \CO$ and $v_r \in T_rE$. Using this connection, we have the fiber bundle isomorphism $J^1P/\mathcal{O}\cong J^1(P/\mathcal{O})\times_{\Sigma}L(TX,\operatorname{ad}P)$ over $\Sigma$. Note that in our particular case, the vector bundle $\operatorname{ad}P\rightarrow\Sigma$ is trivial, see \eqref{trivialization}, and can be identified with $\Sigma\times\mathfrak{o}$. Moreover, the connection $A$ is identified with the trivial connection, see \eqref{covariant_trivial}.

We recall now from \cite{CaRa03} the covariant Lagrange-Poincar\'e reduction, adapted here to the case of a semidirect product $S=\mathcal{O}\,\circledS\,E$, and to the fact that $P$, as a $S$-principal bundle, is trivial.

Given a section $\sigma=(\Lambda,r)$ of $P\rightarrow X$, we consider section $\sigma_1$ of $P/\mathcal{O}\rightarrow X$ defined by $\sigma_1(x):=\iota\circ\sigma(x)=\Lambda(x)^{-1}r(x)=\rho(x)$, and the section $\sigma_2$ of $L(TX,\mathfrak{o})\rightarrow X$ defined by $\sigma_2(x)=\Lambda^{-1}T_x\Lambda=\Omega+\omega dt$. The following are equivalent.
\begin{itemize}
\item $\sigma$ is a critical point for the variational principle
\[
\delta \int_X\mathcal{L}(j^1\sigma)=0
\,;
\]
\item $\sigma$ satisfies the Euler-Lagrange equations for $\mathcal{L}$;
\item the variational principle
\[
\delta \int_X\bar{l}(j^1\sigma_1,\sigma_2)=0
\]
holds for arbitrary variations $\delta\sigma_1$ and variations of the form
\[
\delta\sigma_2=\mathbf{d}^A\eta+[\sigma_2,\eta]
\,,
\]
where $\eta$ is an arbitrary section of $L(TX,\mathfrak{o})\rightarrow X$;
\item the sections $\sigma_1,\sigma_2$ satisfy the covariant Lagrange-Poincar\'e equations
\begin{equation}\label{cov_L-P}
\left\{\begin{array}{l}
\displaystyle\vspace{0.2cm}\frac{\delta\bar{l}}{\delta\sigma_1}-\operatorname{div}_x\left(\frac{\delta\bar{l}}{\delta (T\sigma_1)}\right)=0
\,,\\
\displaystyle\vspace{0.2cm}\operatorname{div}_x^A\frac{\delta\bar{ l}}{\delta\sigma_2}=\operatorname{ad}^*_{\sigma_2}\frac{\delta\bar{ l}}{\delta\sigma_2}
\,,
\end{array}\right.
\end{equation}
where $\operatorname{div}^A_x$ denotes the covariant divergence associated to $A$ and acting on $\mathfrak{X}(X,\mathfrak{o}^*)$. Note that here $\operatorname{div}_x^A=\operatorname{div}_x$.
\end{itemize}
Using the decomposition $X=\mathcal{D}\times\mathbb{R}$, we can write
\[
\mathcal{L}(j^1\sigma)=\mathcal{L}(\dot\Lambda,\mathbf{d}\Lambda,\dot r,\mathbf{d}r)
\]
and
\[
\bar{l}(j^1\sigma_1,\sigma_2)=\bar{l}(\rho,\Omega,\omega)
\,.
\]
Hence, we obtain the equality
\[
\frac{\delta\bar{l}}{\delta\sigma_2}=\frac{\delta\bar{l}}{\delta\Omega}
+\frac{\delta\bar{l}}{\delta\omega}\prt_t
\,.
\]
Since $\operatorname{div}^A_x\pi_1^*=\frac{d}{dt}$ and $\operatorname{div}^A_x\pi_2^*=\operatorname{div}$, the second equation of \eqref{cov_L-P} reads
\[
\frac{d}{dt}\frac{\delta\bar{l}}{\delta\omega}
+\operatorname{div}\frac{\delta\bar{l}}{\delta\Omega}
=
\operatorname{ad}^*_{\omega}\frac{\delta\bar{ l}}{\delta\omega}
+
\operatorname{ad}^*_{\Omega_i}\frac{\delta\bar{ l}}{\delta\Omega_i}
\,.
\]
The first equation reads
\[
\frac{\delta\bar{ l}}{\delta\rho}-\frac{d}{dt}\frac{\delta\bar{ l}}{\delta \rho_t}-\operatorname{div}\frac{\delta\bar{ l}}{\delta \rho_s}=0
\,.
\]
We have thus obtained equations \eqref{equ_new_variables} by covariant Lagrange-Poincar\'e reduction. Of course, when $\mathcal{O}\,\circledS\,E=SO(3)\,\circledS\,\mathbb{R}^3=SE(3)$ and $\mathcal{D}=[0,L]$ we recover equations \eqref{verLaPo} and \eqref{horLaPo} for the molecular strand in the new variables.

\section{Formulation of nonlocal exact geometric rods in terms of quaternions} \label{quatform}

Quaternions allow for a simple, elegant and useful method of describing the local orientation of a curve. It is thus natural to seek a representation of our derivation in previous sections that expresses the strand equations in terms of quaternions. The quaternion representation is natural, for example, in formulating the equations of motion for elastic rods in terms of the corresponding Euler parameters. As far as we are aware, a treatment of continuum rod theory in terms of quaternions in the nonlocal sense presented here does not appear in the literature. We shall see how the nonlocal contribution \eqref{Zdef} appears as an imaginary part of a certain quaternion, thereby making the connection to other work. This is accomplished by mapping quaternions  (elements of $SU(2)$) that describe rotations into purely imaginary quaternions, or vectors, that are elements of $\mathfrak{su}(2)  \simeq \mathfrak{so}(3) \simeq \mathbb{R}^3$.  

\begin{remark}\rm
This section simplifies the formulas and avoids extra factors in the integrals by assuming the strand to be inextensible; so that $| \bGam(s)|$ is identically equal to unity and the parameter $s$ is the arc length.
See \eqref{EPsigma} and \eqref{EPpsi} for the case $| \bGam(s)| \neq 1$.
\end{remark}

 Let us associate a quaternion $\fq=(q_0,\bq)$ with every point on the curve $s$. That quaternion describes the local rotation of an orthogonal frame if the condition $\| \fq\|=q_0^2+ |\bq|^2=1$ is satisfied. Then, $q_0=\cos \alpha/2$,  with rotation angle $\alpha$  and $\bq=\sin ( \alpha/2) \widehat{\bf n}$, where $\widehat{\bf v}$ is a unit vector around which the axis is rotated. These are the Cayley-Klein parameters of the rotation
 \smallskip

\begin{remark}\rm

To  simplify the notation, we use bold symbols for purely imaginary quaternions, considering them as vectors. For example, if $\fq$ is unit quaternion,  then
\[
\boldsymbol{b}=\fq \boldsymbol{a} \fq^\ast
\]
means
\[
\lp  0, \boldsymbol{b} \rp  =\fq \lp  0, \boldsymbol{a} \rp  \fq^\ast
\, .
\]
\end{remark}
 
As before, we assume that the interaction potential depends on the distances between point charges that are attached to  each point $\br(s,t)$ by rigid rods of the length $\boldeta_i(s)$. The new position of the charges will be
$\fq   \boldeta_i   \fq^\ast  $, where $ $ denotes the quaternion multiplication and $\fq^\ast$ is the quaternionis conjugate of $\fq$.  The point charges are then positioned at the coordinates in real space $\br(s,t)+ \fq   \boldeta_i   \fq^\ast$, and the distance between point charges is then%
\footnote{All these variables depend on time $t$ as well as $s$, but the time variable $t$ is suppressed.}
%%%%%%%%%%%%%%%%%%%%%%%%%%
\begin{equation}
d_{k,m}(s,s')=\left|
\br(s)-\br(s')+  \fq(s)    \boldeta_k(s)   \fq^\ast(s) -
\fq(s')    \boldeta_m(s')   \fq^\ast(s')
\right|
\, .
\label{distquaternion}
\end{equation}
This is simply  \eqref{Energy0} written now in its quaternionic form. Following \eqref{LiePoissondist}, we perform the
Lie-Poisson reduction as follows (remember that $\| \fq \|=1$ and $\fq \fq^\ast=\fe$, where $\fe=(1,0)$ is the unit quaternion):
\begin{align}
d_{k,m}(s,s')= &
 \left|
\br(s')-\br(s)+  \fq(s')    \boldeta_k(s')   \fq^\ast(s') -
\fq(s)    \boldeta_m(s)   \fq^\ast(s)
\right|
\nonumber
\\
=&
\left| \fq^\ast(s)   \left( 0,
\br(s')-\br(s)+ \fq(s')     \boldeta_k(s')   \fq^\ast(s') -
\fq(s)    \boldeta_m (s)   \fq^\ast(s)
\right)
  \fq(s)
\right|
\nonumber
\\
=&
\left|
\fz(s,s')   \brho(s')    \fz^\ast(s,s') -\brho(s)
+\fz (s,s')    \boldeta_k(s')   \fz^\ast(s,s')
-  \boldeta_m (s) 
\right|
\nonumber 
\\
=&
\left|
\bkappa(s,s')
+ \boldeta_k(s)    -
\fz (s,s')    \boldeta_m(s')   \fz^\ast(s,s')
\right|
 \, ,
\label{LiePoissondist2}
\end{align}
where
$\fz(s,s') = \fq^\ast (s) \fq(s')$ is the coupling between the frames  and the quantity 
\[
\brho(s)=\fq^\ast(s)
\br(s)    \fq(s)
\,,
\]
 is the distance vector connecting the points $\br(s)$ and $\br(s')$ transformed  according to the inverse rotation of the frame at the point $s$.
We have also defined
\begin{equation}
\bkappa(s,s') = \fz(s,s')   \brho(s')    \fz^\ast(s,s')-\brho(s) 
\, .
\end{equation}

The nonlocal part of the reduced Lagrangian depends on the variables $\brho$ and $\fz$. The local part which describes elastic deformation and inertia, can be reduced to functions of $\fm = \fq^\ast \fq'$ and $\fv= \fq^\ast  \dot{\fq}$ where the prime denotes the derivative with respect to $s$ and the dot is the derivative with respect to $t$. The quaternions $\fm$ and $\fv$ belong to the Lie algebra of the Lie group of all unit quaternions, and is isomorphic to the space of purely imaginary quaternions, or vectors, as can be seen by differentiating
$\fq^\ast \fq = \fe$. The commutator is then mapped into twice the vector product of the imaginary parts of the quaternions.

We can again split the reduced Lagrangian $l_T$ in the local and nonlocal
parts
\begin{equation}
l = \int l_{loc} \lp  \fv(s), \fm(s),  \bgam, \bGam,  \brho \rp  \mbox{d} s
+ \iint  U \lp  \bkappa(s,s'), \fz(s,s')  \rp  \mbox{d} s\mbox{d} s'
:= l_{loc}+l_{np}
\, .
\end{equation}
Here, $\bgam$, $\bGam$, $\brho$ are defined as in \eqref{rhodef}.
The equations of motions then follow from the minimization of the reduced action
\begin{equation}
\delta S=\delta  \int  l \lp  \fv(s), \fm(s), \brho(s),
 \bkappa(s,s'), \fz(s,s') \rp   \mbox{d} s \mbox{d} s' \mbox{d} t =0
\, .
\label{minaction0}
\end{equation}
Here $\fv$ and $\fs$ are elements of Lie algebra of  quaternions with fixed absolute value. They are purely imaginary quaternions, $\fv=(0,\bom/2)$ and $\fs=(0,\bOm/2) $, which are isomorphic to vectors in $\mathbb{R}^3$. The factor of $1/2$ is necessary for $\bom$ and $\bOm$ to be exactly the vector angular velocity and strain rate, respectively, in correspondence with the previous section.  Thus, we can re-write
\eqref{minaction0} using vector quantities instead of quaternions whenever possible:
\begin{equation}
\delta S= \delta \iint  l_{np} \lp  \bom, \bgam, \bOm,  \bGam, \brho \rp
\mbox{d} s \mbox{d} t +
\delta
\iiint
U \lp  \bkappa(s,s') , \fz(s,s') \rp  \mbox{d} s \mbox{d} s' \mbox{d} t =0
\, .
\label{minaction1}
\end{equation}
We obtain
\begin{align}
\delta S & = \iint
\left\langle \frac{\delta l_{loc}}{\delta \bOm} \,,\, \delta \bOm  \right\rangle
+
\left\langle \frac{\delta l_{loc}}{\delta \bom}\,  , \, \delta \bom  \right\rangle
+
\left\langle \frac{\delta l_{loc}}{\delta \brho}  \,,\,  \delta \brho \right\rangle
+
\left\langle \frac{\delta l_{loc}}{\delta \bgam} \,,\, \delta \bgam  \right\rangle
\nonumber
\\
&+
\left\langle \frac{\delta l_{loc}}{\delta \bGam} \, , \, \delta \bGam  \right\rangle
\mbox{d} s \, \mbox{d}  t
+
\iiint
\left\langle \frac{\delta l_{np} }{\delta \fz} \, , \, \delta \fz  \right\rangle
+
\left\langle \frac{\delta l_{np}}{\delta \bkappa} \, , \, \delta \bkappa \right\rangle
\, \mbox{d}  s \,  \mbox{d}  s' \, \mbox{d}   t =0
 \,.
\label{minaction}
\end{align}

If we now define $\fs = \fq^\ast \delta \fq=\fq^{-1} \delta \fq$ as the free variation in $\fq$, we obtain, similarly to previous section:
\begin{equation}
\delta \fv = \fv   \fs - \fs   \fv + \dot{\fs}
= \lsb  \fv \, , \, \fs \rsb  + \dot{\fs}
\,,
\label{sdot}
\end{equation}
for the time derivative and
\begin{equation}
\delta \fm = \fm   \fs - \fs   \fm + \fs '
= \lsb  \fm \, , \, \fs \rsb  + \fs'
\,,
\label{sprime}
\end{equation}
for the space derivative.
Note that since $\fq^* \fq =1$,
\[
\fq^* \dot \fq + \dot \fq^* \fq= \fq^* \dot \fq +\big(\fq^* \dot \fq  \big)^* = 2 \mbox{Re} \,  \fv =0
\]
and, analogously, $ \mbox{Re} \,  \fm=0$, which means that $\fv$ and $\fm$ are purely imaginary quaternions, or vectors.
 This gives the variation of the first two terms in \eqref{minaction}. We now remember that $\mbox{Re}\, \fv=0$ and $ \mbox{Re}\, \fs=0$ since they are elements of the corresponding Lie algebra, so $\fv =(0, \bom/2)$, $\fm=(0, \bOm/2)$ and  $\fs =(0, \bsigma/2)$. Then, \eqref{sdot} and \eqref{sprime} become vector equations:
\begin{equation}
\delta \fv = (0, \delta \bom) = \lp  0 \, , \,
\bom \times \bsigma + \dot{\bsigma}
\rp
\, ,
\label{sdotvec}
\end{equation}
and
 \begin{equation}
\delta \fm
= (0, \delta \bOm)=
\lp  0 \, , \,
\bOm \times \bsigma + \bsigma\,'\,
\rp
\, .
\label{sprimevec}
\end{equation}

Now, computations of the first three variations in \eqref{minaction} can be done analogously to those in the previous section, as they involve vector quantities.
The only exception is the computation of $\delta \brho$ in the nonlocal term as it must be computed in terms of quaternions. We
have
\begin{align}
\delta  \brho &=
\delta \lp  \fq^\ast(s)  \br(s)   \fq  \rp
\nonumber
\\
&=
-\fq^\ast \delta \fq \brho(s) +
 \brho(s)   \fq^\ast \delta \fq+
\fq^\ast(s) \delta \br(s) \fq(s)
=
2  \brho \times \bsigma + \bpsi
\, ,
\label{deltarho2}
\end{align}
where we have defined the free variation
\begin{equation}
 \bpsi(s)  :=
\fq^\ast(s)  \delta \br(s)    \fq(s)
\, .
\label{psidef2}
\end{equation}
and used the fact that for purely imaginary $\fs=(0, \bsigma)$, $\fs \brho-\brho \fs = \brho \times
\bsigma$.
Then, the variation we need to compute the variation of the nonlocal part of $\bkappa(s,s') $ as follows. It is easier to use the alternative expression for $\bkappa$ as
\[
\bkappa(s,s') = \brho(s,s') - \fq^\ast (s) \br(s') \fq(s) \, .
\]
Then,
\begin{align}
\delta \bkappa =&
\delta \brho -   \delta\lp  \fq^\ast (s) \br(s') \fq (s) \rp
\nonumber
\\
= &\
\brho (s) \times \bsigma(s)  + \bpsi(s) -
\delta \lp   \fq^\ast(s,s')    \brho(s')   \fz(s,s') \rp
\nonumber
\\
= &\
 \brho (s) \times \bsigma(s)  + \bpsi(s) -
 \delta \fq^\ast(s)   \br(s')   \fq(s)  -
  \fq^\ast(s)  \delta  \br(s')   \fq(s) -
   \fq^\ast(s)   \br(s')  \delta  \fq(s)
\nonumber
\\
= &\
 \brho (s) \times \bsigma(s)  + \bpsi(s)
\nonumber
\\
&
-
\fs^\ast(s)  \big (\brho(s) -\bkappa(s,s') \big)
-
\big (\brho(s) -\bkappa (s,s') \big) \fs (s)
-
\fz(s,s')  \bpsi(s') \fz^\ast(s,s')
\nonumber
\\
=&\
\bpsi(s)  -
\fz^\ast(s,s')    \bpsi(s')     \fz(s,s')
-
 \, \bkappa(s,s') \times \bsigma
 \, ,
\label{deltarhonl}
\end{align}
which is a direct analogy of \eqref{deltakappa}.
We have used the fact that for purely imaginary $\fs$ we have, again
\[
\fs^\ast(s)  \brho(s) + \brho(s) \fs= -\fs \brho + \brho \fs =  \brho \times \bsigma
\]
and
\[
\fs^\ast(s)  \bkappa(s,s') + \bkappa(s,s') \fs= -\fs  \bkappa(s,s') +  \bkappa(s,s') \fs =
  \bkappa(s,s') \times \bsigma
\, .
\]
Next, let us  define the purely imaginary quaternion
\begin{equation*}
\fz ^\ast(s,s')   \delta \fz(s,s'):=
 2 \bT (s,s')
  \,.
\end{equation*}
The real part of $\fz ^\ast(s,s')   \delta \fz(s,s')$ vanishes since $\fz(s,s')$ is a unit quaternion.
The last step is the  variation with respect to $\bT$, computed as
\begin{align}
\fz ^\ast(s,s')   \delta \fz(s,s') &=
\fz^\ast(s,s')   \delta
\lsb
\fq^\ast (s) \fq(s')
\rsb
\nonumber
\\
& =
\fz^\ast(s,s')
\lsb
-\fq^\ast (s)    \delta \fq (s)     \fz(s,s') + \fq^\ast(s)    \fq(s')     \fq^\ast(s')
\delta \fq (s')
\rsb
\nonumber
\\
&=
-\,\frac{1}{2}  \fz^\ast (s,s')     \bsigma(s)     \fz(s,s') +
\frac{1}{2} \bsigma(s')
\, .
\end{align}
Thus, we find
\begin{align*}
2 \, \bT (s,s')
&:= {\rm Im} \lp  \fz ^\ast(s,s')    \delta \fz(s,s') \rp
  \nonumber
  \\
 &  =
-
\fz^\ast (s,s')    \bsigma(s)    \fz(s,s')
 +
  \bsigma(s')
  \nonumber
  \\
 &  =:
 -{\rm Ad} _{\fz^\ast} \bsigma(s) + \bsigma(s') \, .
\end{align*}
Note the exact correspondence between this formula and \eqref{xivar3} defining
the variation $\xi^{-1} \delta \xi$.
Therefore, the variation with respect to $\delta \bkappa$ gives
\begin{align}
\left<
\frac{\delta l_{np}}{\delta \bkappa}
\, , \,
\delta \bkappa
\right> = &
\int
\Big<
  \int
\frac{\partial U}{\partial \bkappa} (s,s') \times \bkappa (s,s')
\mbox{d} s'
\, , \,
\bsigma(s)
\Big> \mbox{d} s
\nonumber
\\
& +
\int
\Big<
 \int
\frac{\partial U}{\partial \bkappa} (s,s') -
\fz (s,s')     \frac{\partial U}{\partial \bkappa} (s',s)     \fz^\ast(s,s')
\mbox{d} s'
\, , \,
\bpsi(s)
\Big>
\mbox{d} s
\, .
\label{dlnpdrhovar2}
\end{align}
Analogously,
\begin{align}
\iint  \Big<
\fz^\ast   \frac{\partial U}{\partial \fz}
\, , \,
\fz^\ast   \delta \fz
\Big> \mbox{d}s \mbox{d}s'
& =
\iint  \Big<
\fz^\ast  \frac{\partial U}{\partial \fz}
\, , \,
-
\fz^\ast (s,s')    \bsigma(s)    \fz(s,s')
 +
  \bsigma(s')
\Big> \mbox{d}s \mbox{d}s'
\nonumber
\\
&=
\iint
\Big<
 \fz^\ast(s',s)   \frac{\partial U}{\partial \fz}(s',s)
- \frac{\partial U}{\partial \fz}(s,s')   \fz^\ast(s,s')
\, , \,
 \bsigma(s)
\Big> \mbox{d}s \mbox{d}s'
\nonumber
\\
&=
- \int  \Big< \int
 \frac{\partial U}{\partial \fz}(s,s')   \fz^\ast(s,s')
 \mbox{d}s '
\, , \,
 \bsigma(s)
\Big> \mbox{d}s
\nonumber
\\
&=
- \int  \Big<
\int {\rm Im} \left[
 \frac{\partial U}{\partial \fz}(s,s')   \fz^\ast(s,s')
 \right] \mbox{d}s '
 \, , \,
 \bsigma(s)
\Big> \mbox{d}s
\, .
\label{fzvar}
\end{align}
Collecting together the terms proportional to $\bsigma(s)$ and $\bpsi(s)$ in the minimal action principle \eqref{minaction} gives the system \eqref{hpvarsig}, \eqref{hpvarpsi}. The role of
antisymmetric matrix $Z(s,s')$ describing the nonlocal interactions in \eqref{hpvarpsi} is now played by the purely imaginary quaternion
\[
\bZ(s,s') = {\rm Im} \left[
 \frac{\partial  U}{\partial \fz}(s,s')   \fz^\ast(s,s')
 \right].
\]

A Hamiltonian description closely following that of Section~\ref{sec:Hamiltonian} can be developed in quaternionic form, as well. The cross products are then substituted by a corresponding product of the quaternions, with explicit formulas for the Lie-Poisson bracket closely resembling (\ref{convect-brkt}). Since the derivation is analogous to Section~\ref{sec:Hamiltonian}, it will be omitted from the exposition. 

\section{Outlook for further studies}
\label{sec:Conclusions}
This paper formulated the problem of strand dynamics for an arbitrary long-range intermolecular potential in the convective representation \cite{HoMaRa1986} of exact geometric rod theory \cite{SiMaKr1988}. Its methods would also apply in the consideration of Lennard-Jones potentials and the constrained motion of non-self-interacting curves.
\smallskip

After a quick derivation of the equations of motion by the Hamilton-Pontryagin approach, the paper demonstrated and compared three different approaches to deriving the same continuum equations of motion for an elastic strand experiencing nonlocal (for example, electrostatic or Lennard-Jones) interactions. These were: (1) the Euler-Poincar\'e approach; (2) the affine transformation approach and (3) the covariant Lagrange-Poincar\'e formulation. The paper concentrated primarily on the case in which the strand is one-dimensional, which is the main object of interest for biological applications. However, these three approaches possess more significance and applicability than might be suggested by the one-dimensional developments illustrated here. For example, the geometrical considerations and nonlinear context of the present investigation also apply in formulating the dynamics of the higher dimensional case. That is, when $s$ has more than one component, the approaches discussed here still apply. 
\smallskip

A change in dimensionality of $s$ in equations \eqref{hpvarsig} and \eqref{hpvarpsi} requires summing over all components of $s$-derivatives (instead of only the single $s$-derivative for the strand). Additional integrability conditions arise from the equality of cross-derivatives with respect to space and time that generalize equations \eqref{kincond} and \eqref{kincondom}.    (In geometric terms, these are zero curvature conditions.) The extension to higher dimensions was discussed in the general setting treated in Section~\ref{n_dimensional_generalization}. The higher dimensional options also figured in the covariant Lagrange-Poincar\'e formulas \eqref{cov_L-P}, where ${\bf div}_x$ denotes derivative with respect to time and {\it all} dimensions of the space (taken to be one-dimensional in the paper).
The extension to higher dimensions illuminates the geometry underlying the present one-dimensional case and may be expected to produce interesting applications in the dynamical description of biological membranes and other extended physical objects. While the equations take the same geometrical form in higher dimensions, their solutions will possess their own unique features.
\smallskip

Besides passing to higher dimensions, future studies will consider both linear and nonlinear wave propagation on electrostatically charged strands, as well as the description of nontrivial stationary states that arise from nonlocal interactions, such as for the VDF oligomers mentioned in the Introduction.
\smallskip

Yet another interesting question for future studies concerns the possibility of enhancing the internal structure of the rigid charge conformations. This will allow even richer dynamics than we considered here. While the resulting equations may be different (and more complex), the methods developed in this paper will still be applicable when the dynamics takes place in spaces that possess richer conformational structure than rigid rotations.

\section*{Acknowledgements}
DDH and VP were partially supported by NSF grant NSF-DMS-05377891. The work of
DDH was also partially supported  by the Royal Society of London Wolfson Research Merit Award. VP acknowledges the support of  the European Science Foundation for partial support through the MISGAM program. FGB and TSR acknowledge the partial support of a Swiss National Science Foundation grant.

%%%%%%%%%%%%%%%%%%%%%
\bibliographystyle{unsrt}
\bibliography{papers}

\begin{thebibliography}{10}

\bibitem{Ki1859}
G.~Kirchhoff.
\newblock {\"Uber das Gleichgewicht und die Bewegung eines unendlich d\"unnen
  elastischen Stabes}.
\newblock {\em {J. f. reine angew. Math.}}, 56:285--313, 1859.

\bibitem{Di1992}
E.~H. Dill.
\newblock Kirchhoff's theory of rods.
\newblock {\em Arch. Hist. Exact Sci.}, 44:1--23, 1992.

\bibitem{GoTa1996}
A.~Goriely and M.Tabor.
\newblock New amplitude equations for thin elastic rods.
\newblock {\em Phys. Rev. Lett.}, 77:3537--3540, 1996.

\bibitem{GoPoWe1998}
R.~Goldstein, T.~R. Powers, and C.~H. Wiggins.
\newblock Viscous nonlinear dynamics of twist and writhe.
\newblock {\em Phys. Rev. Lett.}, 80:5232--5235, 1998.

\bibitem{BaMaSc1999}
A.~Balaeff, L.~Mahadevan, and K.~Schulten.
\newblock Elastic rod model of a {DNA} loop in the lac operon.
\newblock {\em Phys. Rev. Lett}, 83:4900--4903, 1999.

\bibitem{GoGoHuWo2000}
R.~Goldstein, A.~Goriely, G.~Huber, and C.~Wolgemuth.
\newblock Bistable helixes.
\newblock {\em Phys. Rev. Lett.}, 84:1631--1634, 2000.

\bibitem{HaGo2006}
A.~Hausrath and A.~Goriely.
\newblock Repeat protien architectures predicted by a continuum representation
  of fold space.
\newblock {\em Protein Science}, 15:1--8, 2006.

\bibitem{NeGoHa2008}
S.~Neukirch, A.~Goriely, and A.~C. Hausrath.
\newblock Chirality of coiled coils: {E}lasticity matters.
\newblock {\em Phys. Rev. Lett.}, 100:038105, 2008.

\bibitem{DiLiMa1996}
D.J. Dichmann, Y.~Li, and J.H. Maddocks.
\newblock Hamiltonian formulations and symmetries in rod mechanics.
\newblock In {\em Mathematical Approaches to Biomolecular Structure and
  Dynamics ({M}inneapolis, {MN}, 1994)}, volume~82 of {\em IMA Vol. Math.
  Appl.}, pages 71--113. Springer, New York, 1996.

\bibitem{Ba-etal-2007}
J.~R. Banavar, T.~X. Hoang, H.~H. Maddocks, A.~Maritan, C.~Poletto, A.~Stasiak,
  and A.~Trovato.
\newblock Structural motifs of macromolecules.
\newblock {\em Proc. Natl. Acad. Sci}, 104:17283--17286, 2007.

\bibitem{HoMaRa1986}
D.~D. Holm, J.~E. Marsden, and T.~S. Ratiu.
\newblock {\em Hamiltonian Structure and Lyapunov Stability for Ideal Continuum
  Dynamics}.
\newblock University of Montreal Press, Montreal, 1986.

\bibitem{SiMaKr1988}
J.~C. Sim\'o, J.~E. Marsden, and P.~S. Krishnaprasad.
\newblock The {H}amiltonian structure of nonlinear elasticity: The material and
  convective representations of solids, rods, and plates.
\newblock {\em Arch. Rat. Mech. Anal}, 104:125--183, 1988.

\bibitem{Mezic2006}
I.~Mezic.
\newblock On the dynamics of molecular conformation.
\newblock {\em Proc. Natl. Acad. Sci}, 103:7542--7547, 2006.

\bibitem{No2003}
K.~Noda, K.~Ishida, A.~Kubono, T.~Horiuchi, and H.~Yamada.
\newblock Remanent polarization of evaporated films of vinylidene fluoride
  oligomers.
\newblock {\em J. Appl. Phys}, 93:2866--2870, 2003.

\bibitem{MaRaWe1984}
J.~E. Marsden, T.~Ratiu, and A.~Weinstein.
\newblock Semidirect products and reduction in mechanics.
\newblock {\em Trans. Am. Math. Soc.}, 281:147--177, 1984.

\bibitem{HoMaRa1998}
D.~D. Holm, J.~E. Marsden, and T.~Ratiu.
\newblock The {E}uler-{P}oincar\'e equations and semidirect products with
  applications to continuum theories.
\newblock {\em Adv. Math.}, 137:1--81, 1998.

\bibitem{Gay-Bara2007}
F.~Gay-Balmaz and T.~Ratiu.
\newblock The geometric structure of complex fluids.
\newblock {\em Advances in Applied Mathematics}, 2008.

\bibitem{MaRa2002}
J.~E. Marsden and T.~S. Ratiu.
\newblock {\em Introduction to Mechanics and Symmetry: A Basic Exposition of
  Classical Mechanical Systems (Texts in Applied Mathematics)}.
\newblock Springer, 2nd edition, 2002.

\bibitem{KeMa1997}
S.~Kehrbaum and J.~H. Maddocks.
\newblock Elastic rods, rigid bodies, quaternions, and the last quadrature.
\newblock {\em Phil. Trans}, 355:2117--2136, 1997.

\bibitem{NiGo1999}
M~. Nizette and A.~Goriely.
\newblock Towards a classification of {E}uler-{K}irchhoff filaments.
\newblock {\em J. Math. Phys}, 40:2830--2866, 1999.

\bibitem{Ho2008}
D.~D. Holm.
\newblock {\em Geometric Mechanics Part 2: Rotating, Translating and Rolling}.
\newblock Imperial College Press, 2008.

\bibitem{Ba-etal-2001}
N.~A. Baker, D.~Sept, S.~Joseph, M.~J. Holst, and J.~A. McCammon.
\newblock Electrostatics of nanosystems: Application to microtubules and the
  ribosome.
\newblock {\em Proc. Natl. Acad. Sci}, 98:10037--10041, 2001.

\bibitem{CaRa03}
M.~Castrill\'on L\'opez and T.~S. Ratiu.
\newblock Reduction in principal bundles: Covariant {L}agrange-{P}oincar\'e
  equations.
\newblock {\em Communications in Mathematical Physics}, 236:223--250, 2003.

\bibitem{Bloch2003}
A.~M. Bloch.
\newblock {\em Nonholonomic Mechanics and Control}, volume~24 of {\em
  Interdisciplinary Applied Mathematics}.
\newblock Springer-Verlag, New York, 2003.
\newblock With the collaboration of J. Baillieul, P. Crouch and J. Marsden,
  With scientific input from P. S. Krishnaprasad, R. M. Murray and D. Zenkov,
  Systems and Control.

\bibitem{Ho2001}
D.~D. Holm.
\newblock Euler-{P}oincar\'e dynamics of perfect complex fluids.
\newblock {\em Geometry, Mechanics and Dynamics}, Special Volume in Honor of J.
  E. Marsden:113--167, 2001.

\bibitem{MaMiOrPeRa2007}
Jerrold~E. Marsden, Gerard Misiolek, Juan-Pablo Ortega, Matthew Perlmutter, and
  Tudor~S. Ratiu.
\newblock {\em Hamiltonian reduction by stages}, volume 1913 of {\em Lecture
  Notes in Mathematics}.
\newblock Springer, Berlin, 2007.

\bibitem{CeMaRa2001}
H.~Cendra, J.~E. Marsden, and T.~S. Ratiu.
\newblock {\em Lagrangian Reduction by Stages}, volume 152.
\newblock Memoirs American Mathematical Society, 2001.

\bibitem{El2009}
D.~Ellis.
\newblock Lagrangian reduction with applications to continuum theories.
\newblock {\em In preparation}, 2009.

\end{thebibliography}
\end{document}